\definecolor{cblue}{rgb}{0.16, 0.32, 0.75}
\definecolor{cred}{rgb}{0.7, 0.11, 0.11}
\newcommand{\id}{\mathds{1}} 
\newcommand{\e}{\mathrm{e}}
\DeclareMathOperator{\mmod}{mod}
\newcommand{\iu}{\mathrm{i}\mkern1mu} 
\newcommand{\nnum}{\mathbb{N}}
\newcommand{\rnum}{\mathbb{R}}
\newcommand{\cnum}{\mathbb{C}}
\newcommand{\znum}{\mathbb{Z}}
\newcommand{\hilbert}{\mathcal{H}}
\newcommand{\domain}{\mathcal{D}}
\numberwithin{equation}{section} 
\newtheorem{theorem}{Theorem}[section] 
\newtheorem{corollary}{Corollary}[section] 
\newtheorem{lemma}[theorem]{Lemma} 
\newtheorem{proposition}[theorem]{Proposition} 
\theoremstyle{definition} 
\newtheorem{definition}[theorem]{Definition} 
\theoremstyle{remark} 
\newtheorem{remark}[theorem]{Remark}
\theoremstyle{question} 
\newtheorem{question}[theorem]{Question} 
\theoremstyle{definition} 
\newtheorem{example}[theorem]{Example} 
\DeclarePairedDelimiterX{\norm}[1]\lVert\rVert{
  \ifblank{#1}{\:\cdot\:}{#1}
}
\title{Quantum particle in the wrong box (or: the perils of finite-dimensional approximations)}
\author{Felix Fischer}
\email{felix.o.fischer@fau.de}
\author{Daniel Burgarth}
\author{Davide Lonigro}
\affiliation{Department Physik, Friedrich-Alexander-Universität Erlangen-Nürnberg, Staudtstra{\ss}e 7, 91058 Erlangen, Germany}
\date{}
\begin{document}

\maketitle

\begin{abstract}\sloppy

     When numerically simulating the unitary time evolution of an infinite-dimensional quantum system, one is usually led to treat the Hamiltonian $H$ as an ``infinite-dimensional matrix'' by expressing it in some orthonormal basis of the Hilbert space, and then truncate it to some finite dimensions. However, the solutions of the Schr\"odinger equations generated by the truncated Hamiltonians need not converge, in general, to the solution of the Schr\"odinger equation corresponding to the actual Hamiltonian.

     In this paper we demonstrate that, under mild assumptions, they converge to the solution of the Schr\"odinger equation generated by a specific Hamiltonian which crucially depends on the particular choice of basis: the Friedrichs extension of the restriction of $H$ to the space of finite linear combinations of elements of the basis. Importantly, this is generally different from $H$ itself; in all such cases, numerical simulations will unavoidably reproduce the wrong dynamics in the limit, and yet there is no numerical test that can reveal this failure, unless one has the analytical solution to compare with.

     As a practical demonstration of such results, we consider the quantum particle in the box, and we show that, for a wide class of bases---which include associated Legendre polynomials as a concrete example---the dynamics generated by the truncated Hamiltonians will always converge to the one corresponding to the particle with Dirichlet boundary conditions, regardless the initial choice of boundary conditions. Other such examples are discussed.
\end{abstract}

    \section{Introduction}\label{sec:intro}

 Imagine you are assigned the following homework for an undergraduate quantum mechanics class: to numerically solve the time-independent Schrödinger equation for a particle in the box with periodic boundary conditions. You choose a basis of wavefunctions with the correct boundary conditions, and express the Hamiltonian in this basis. You then solve for the time evolution with your favorite method and programming language. You observe that the numerical simulation converges when increasing the number of basis elements, and that the solution is preserving the normalization of the initial wavefunction. You hand back the solution, yet you fail the class: you have obtained the solution for the wrong box, namely the box with hard wall (Dirichlet) boundary conditions.

    This article is about what went wrong, and why this also has implications for research level problems.

    \subsection{Convergence of Galerkin approximations}\label{sec:results_galerkin_informal}

    Let us begin by integrating our problem in a more general framework. We shall consider a quantum system whose energy is described by a self-adjoint operator $H$ (the Hamiltonian) on an infinite-dimensional Hilbert space $\mathcal{H}$; in order to make our description suitable to quantum systems with unbounded energy spectrum, like in the case of the particle in a box, $H$ will generally be an unbounded linear operator on $\hilbert$, and thus to be defined on a proper subspace $\domain(H)$ of the Hilbert space---the domain of $H$:
    \begin{equation}
        H:\mathcal{D}(H)\subset\hilbert\rightarrow\hilbert.
    \end{equation}
    We further assume $H$ to be bounded from below, which is typically the case in applications.
    
    As is well-known, the Hamiltonian $H$ generates a unitary propagator $U(t)=\e^{-\iu tH}$ providing the unique solution of the Schr\"odinger equation. However, in most realistic situations, analytically computing $U(t)$ proves to be a formidable challenge---nor can computers handle systems with infinitely many degrees of freedom.
    
    To this extent, we can consider a family $(P_n)_{n\in\mathbb{N}}$ of finite-dimensional projectors on $\mathcal{H}$, e.g. each with $n$-dimensional range, and strongly converging to the identity---that is, $P_n\psi\to\psi$ as $n\to\infty$ for every state $\psi\in\hilbert$.
    Assuming $P_n \mathcal{H}\subset\domain(H)$ for every $n$, one can define a family of finite-dimensional truncations of $H$ via $H_n = P_n H P_n$, each of them being a bounded operator on $\hilbert$ generating a new unitary propagator $U_n(t)=\e^{-\iu tH_n}$ which can be computed numerically using standard linear algebra procedures.
    This procedure is usually referred to as \textit{Galerkin approximation} or \textit{Galerkin's method} (cf.~Section~\ref{sec:biblio}).
    Typically, one constructs $P_n$ by choosing some suitable orthonormal basis $(\phi_l)_{l\in\mathbb{N}}\subset\mathcal{D}(H)$, and defines $P_n$ as the projector on the space spanned by the first $n$ elements of the basis,
    \begin{equation}\label{eq:onto_the_first_n}
        P_n=\sum_{l=0}^{n-1}\braket{\phi_l,\cdot}\phi_l.
    \end{equation}    
    Intuitively, one would then expect that, as $n\to\infty$, the propagators $U_n(t)$ provide a ``good'' approximation of the propagator $U(t)$ generated by the original Hamiltonian $H$. This leads us to the following:

    \begin{question}
        Under which conditions do the unitary propagators $U_n(t)$ generated by $H_n=P_nHP_n$ strongly converge to the unitary propagator $U(t)=\e^{-\iu tH}$, in the following sense:
        \begin{equation}
            \lim_{n\to\infty}\|U_n(t)\psi-U(t)\psi\|=0\qquad\text{for all }\psi\in\hilbert?
        \end{equation}
    \end{question}
    One case in which the answer is always affirmative is the following: $H$ has a discrete spectrum, and the projectors $P_n$ are defined via Eq.~\eqref{eq:onto_the_first_n} by choosing the vectors $(\phi_l)_{l\in\mathbb{N}}$ as a basis of eigenvectors of $H$ itself (Remark~\ref{rem:eigenbasis-convergence}). However, in practical applications, such an explicit knowledge of the eigenvalues and corresponding eigenvectors of $H$ is not at our disposal, save from very particular cases. As such, it makes sense to worry about this question for more general families of projectors. 

    As it turns out, the mere fact that $P_n\to\id$ strongly is generally not sufficient to ensure that the question above has an affirmative answer. Quite the contrary, in this paper we will show (Proposition~\ref{prop:friedrichs}) that, in general,
    \begin{equation}
            \lim_{n\to\infty}\|U_n(t)\psi-\tilde{U}(t)\psi\|=0\qquad\text{for all }\psi\in\hilbert,
    \end{equation}
    where $\tilde{U}(t)$ is the unitary propagator generated by a self-adjoint operator $\tilde{H}$, generally distinct from $H$, obtained through the following procedure:
    \begin{enumerate}
        \item one restricts $H$ to the dense subspace $\hilbert_{\rm fin}=\cup_nP_n\hilbert$, i.e., the space of all vectors $\psi\in\hilbert$ such that $P_n\psi=\psi$ for sufficiently large $n$;
        \item one then considers the so-called \textit{Friedrichs extension} $\tilde{H}$ of this restriction (cf.~Definition~\ref{prop:def-friedrich}).
    \end{enumerate}
    Consequently, and taking into account that the relation between self-adjoint operators and their corresponding unitary propagators is one-to-one, the question has an affirmative answer if and only if $H$ coincides with the operator $\tilde{H}$ as defined above. Since in general the restriction of $H$ to $\hilbert_{\rm fin}$ may have infinitely many self-adjoint extensions---with $H$ being only one of them---in general $H\neq\tilde{H}$, and thus $U_n(t)\psi\not\to U(t)\psi$. In this case, all numerical simulations of the Schr\"odinger equation based on this choice of projectors will dramatically fail to reproduce the desired dynamics in the limit $n\to\infty$. Merely choosing the projectors in such a way that $P_n\hilbert\subset\domain(H)$ (for example, constructing them via an orthonormal basis entirely contained in $\domain(H)$) is not sufficient, in general, to avoid this phenomenon.

   \subsection{Galerkin approximations of the particle in a box}\label{sec:results_box_informal}

   We can now revisit the ``thought homework'' presented at the beginning of the paper in light of these general results. To this extent, it will be useful to briefly recall the mathematical description of a quantum particle in a one-dimensional box. 
   
   Without loss of generality, the box shall be identified with the real interval $(-1,1)$, its boundary thus consisting of the two points $x=\pm1$. The Hilbert space of the system is $\hilbert = L^2((-1,1))$, that is, the space of square-integrable, complex-valued functions on $(-1,1)$. One then proceeds to identify the Hamiltonian of the particle. As is well-known, this operator should act on wavefunctions $\psi$ as minus the second derivative
with respect to the position $x$ of the particle---the one-dimensional Laplace operator:
\begin{equation}
    (H\psi)(x)=-\psi''(x)=-\frac{\mathrm{d}^2}{\mathrm{d}x^2}\psi(x),
\end{equation}
where the mass $m$ of the particle was fixed to $1/2$  and $\hbar = 1$. However, as already remarked in the general case, one must also specify a proper \textit{domain} $\mathcal{D}(H)$ for our operator. Here, specifically,
\begin{itemize}
    \item $\mathcal{D}(H)$ must contain functions satisfying the minimal condition $\psi''\in\hilbert=L^2((-1,1))$. The space of such functions is the second Sobolev space $\mathrm{H}^2((-1,1))$, cf.~Definition~\ref{def:sobolev_spaces};
    \item Besides, because of the presence of a boundary, $\mathcal{D}(H)$ must contain functions implementing the specific boundary conditions satisfied by the wavefunctions.
\end{itemize}
The last piece of information is particularly relevant: distinct boundary conditions correspond to distinct Hamiltonians and, ultimately, to distinct physical systems whose boundaries have physically distinct properties---or even different geometries. 
As is well-known, there are infinitely many admissible choices of boundary conditions, each parametrized by a $2\times2$ unitary matrix $W\in\mathrm{U}(2)$ (cf.~Section~\ref{sec:particle-in-a-box} and references therein), each boundary condition reading
\begin{equation}\label{eq:bc}
    \iu(I+W)\begin{pmatrix}
        -\psi'(-1)\\\psi'(+1)
    \end{pmatrix}=(I-W)
    \begin{pmatrix}
        \psi(-1)\\\psi(+1)
    \end{pmatrix}\, .
\end{equation}
Each choice of $W$ will correspond to a distinct operator $H_W:\mathcal{D}(H_W)\subset\hilbert\rightarrow\hilbert$, representing a distinct physical situation---and thus, generating a distinct unitary propagator $U_W(t)=\e^{-\iu tH_W}$.

Coming back to Galerkin approximations, the following question arises:

\begin{question}
    Consider an orthonormal basis $(\phi_l)_{l\in\mathbb{N}}\subset\mathrm{H}^2((-1,1))$ of the Hilbert space $L^2((-1,1))$ with the following property: for every $l\in\mathbb{N}$,
    \begin{equation}\label{eq:blind}
        \phi_l(\pm1)=0,\qquad\phi'_l(\pm1)=0.
    \end{equation}
    Let $H_n$ be the $n$th Galerkin approximation of the Laplace operator corresponding to this choice of basis, and $U_n(t)$ the corresponding unitary propagator. Then:
    \begin{itemize}
        \item will $U_n(t)$ converge strongly to some unitary propagator?
        \item if so, does this propagator coincide with the unitary propagator generated by some specific operator $H_W$, i.e., corresponding to some specific choice of boundary conditions?
    \end{itemize}
\end{question}
Note the counterintuitive fact that Eq.~\eqref{eq:blind} does not prevent $(\phi_l)_{l \in \nnum}$ from being a complete orthonormal set in $L^2((-1,1))$: such bases can be found, and they are as good as any other. The question itself is nontrivial for the following reason: an orthonormal basis satisfying Eq.~\eqref{eq:blind} is essentially blind to the choice of boundary conditions---that is, it satisfies Eq.~\eqref{eq:bc} for any possible choices of $W$. As such, there is no heuristic reason why finite-dimensional approximations constructed this way should ``privilege'', in the limit $n\to\infty$, one specific choice of boundary conditions.

Instead, we will show the following results:
\begin{itemize}
    \item Under some assumptions, the unitary propagator $U_n(t)$ corresponding to a ``boundary-blind'' orthonormal basis $(\phi_l)_{l\in\mathbb{N}}$ converges strongly to the evolution generated by the Laplace operator with \textit{Dirichlet} boundary conditions (Proposition~\ref{prop:legendre-dirichlet-general}; see Figure~\ref{fig:numerics-nt-alt});
    \item Under the same conditions, it is further possible to modify the orthonormal basis above in such a way that $U_n(t)$ converges strongly to the evolution generated by the Laplace operator with other boundary conditions of our choice---specifically, periodic boundary conditions and some generalizations (Proposition~\ref{prop:alpha-periodic-unitary}). This can be done in a minimal way: adding to the basis a single function satisfying the desired boundary conditions can suffice.
\end{itemize}
In particular, we will offer a concrete example of ``boundary-blind'' orthonormal bases such that the phenomena listed above happen: the normalized associated Legendre polynomials $(p^m_l)_{l\geq m}$ of order $m\geq4$ (Theorems~\ref{thm:legendre-dirichlet} and~\ref{thm:alpha-periodic-legendre}).
We visualize our results for this choice of basis in Figure~\ref{fig:numerics-nt-alt}, where the error $\norm{U_W(t) \psi_0 - U_n(t) \psi_0}$ is plotted as a function of $n$ for two choices of $W$: Dirichlet and periodic boundary conditions. In agreement with Theorem~\ref{thm:legendre-dirichlet}, the approximation error converges to $0$ for Dirichlet boundary conditions, whereas it stays nonzero (and relatively big) for periodic boundary conditions.
We refer to Appendix~\ref{app:expansion-coefficients} for the lengthy calculation of all quantities involved in evaluating the error.
\begin{figure}[ht!]
    \centering
    \includegraphics[width=0.9\linewidth]{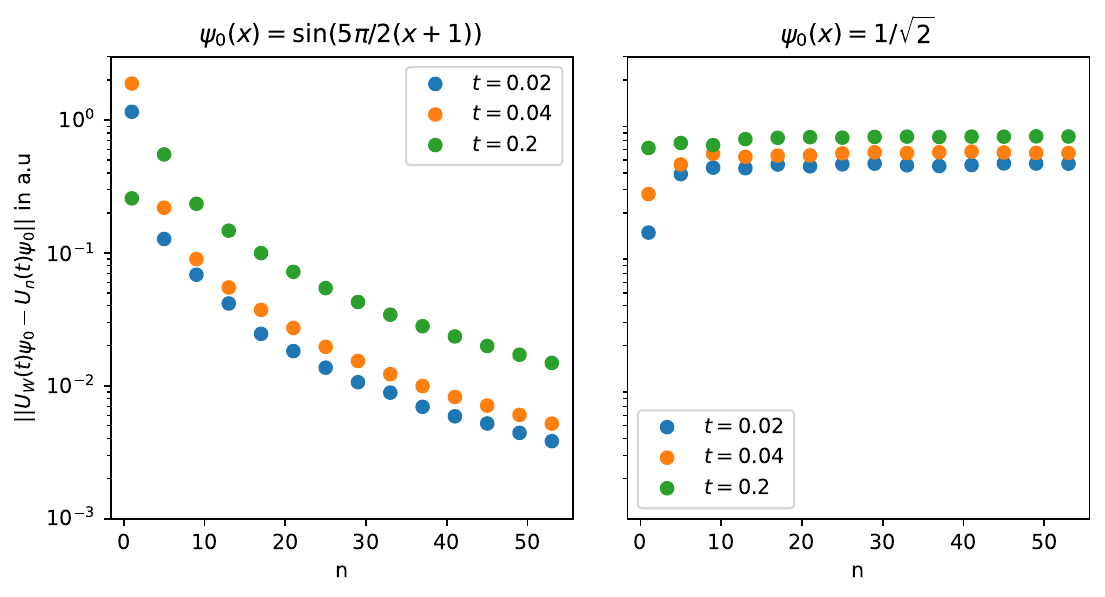}
    \caption{Approximation error $\norm*{U_W(t)\psi_0-U_n(t)\psi_0}$ for the Dirichlet time evolution $U_{\mathrm{Dir}}$ of 
    a Dirichlet eigenvector $\psi_0(x) = \sin(5/2 \pi (x+1))$ (left) and for the periodic time evolution $U_{\mathrm{per}}(t)$ of a periodic eigenvector $\psi_0(x)=1/\sqrt{2}$ (right) over the 
    number of basis elements $n$ and for different times $t$. In both cases, finite-dimensional truncations are constructed by means of associated Legendre polynomials $p_l^m$ with $m=4$. The error only converges to $0$ in the Dirichlet case, in agreement with our analytical results (cf.~Theorem~\ref{thm:legendre-dirichlet}).}
    \label{fig:numerics-nt-alt}
\end{figure}

This apparently contradictory phenomenon has, in fact, a clear mathematical interpretation in the light of the abstract results informally presented in Section~\ref{sec:results_galerkin_informal}: in both cases sketched above, the specific boundary conditions that are ``selected'' in the limit $n\to\infty$ are precisely those corresponding to the Friedrichs extension of the Laplace operator initially defined on the space of finite linear combinations of these basis vectors. If the desired boundary conditions differ from the one corresponding to the Friedrichs extension, one will always obtain a wrong result---and yet there is no numerical test that can reveal this failure, unless (like here) one has the analytical solution to compare with.

\subsection{Some bibliographic remarks}\label{sec:biblio}

    Galerkin approximations and equivalent finite-dimensional truncations are constantly used in the literature, cf.~\cite[p.~273]{tannor-introductionquantummechanics-2007},~\cite[p.~18]{shizgal-spectralmethodschemistry-2015},~\cite[p.~67]{boyd-chebyshevfourierspectral-2000}.
    In quantum chemistry, see e.g.~\cite[pp. 202--204]{gatti-applicationsquantumdynamics-2017} and references therein, they are also often called \textit{finite basis representation}.
    They are also a fundamental tool in quantum control theory, where (again under the name Galerkin approximations) they are employed in order to transfer controllability results from finite-dimensional approximations to the full, infinite-dimensional control system~\cite{chambrion_periodic_2012,boussaid_weakly-coupled_2013,balmaseda_quantum_2023,balmaseda_global_2024,balmaseda_sharper_2024}, see in particular~\cite[Theorem~4.7]{balmaseda_sharper_2024}.

    In the mathematical literature, the convergence of Galerkin approximations is often treated in the Hilbert space framework introduced here; see, in particular,~\cite[p.~101,~243]{miklavcic-appliedfunctionalanalysis-2001}.
    In the more general framework of Banach spaces, a variety of upper-bounds for the approximation errors in the respective norms can be deduced~\cite{parter-rolesstabilityconvergence-1980,helfrich-fehlerabschaetzungenfuergalerkinverfahren-1974,gottlieb-numericalanalysisspectral-1977}: 
    in particular, parabolic differential equations are discussed in~\cite{douglas-galerkinmethodsparabolic-1970,wheeler-prioril_2error-1973,baker-singlestepgalerkin-1977,bramble-convergenceestimatessemidiscrete-1977,evans-partialdifferentialequations-2010}.
    Furthermore, Cea's Lemma~\cite{cea-approximationvariationnelleproblemes-1964,brenner-mathematicaltheoryfinite-2008} provides such an error estimate for finite dimensional approximations of stationary problems: for the specific case of approximating the Dirichlet Laplacian with a Sobolev--complete basis, see~\cite{zhikov-galerkinapproximationsproblems-2016,pastukhova-galerkinapproximationsdirichlet-2019}.
    
    In quantum chemistry, on the other hand, approximation errors and the question of convergence are often treated by means of heuristic arguments or with the aid of numerical simulations, e.g. by comparing numerical results with reference systems~\cite[p.~387]{lebris-computationalchemistryperspective-2005}.
    In particular, no general convergence statements are usually involved~\cite{gatti-applicationsquantumdynamics-2017,tannor-introductionquantummechanics-2007,shizgal-spectralmethodschemistry-2015}.
    Notable exceptions are~\cite{klahn-convergencerayleighritzmethod-1977,klahn-convergencerayleighritzmethod-1977a,kato-fundamentalpropertieshamiltonian-1951} and~\cite[p.~387]{lebris-computationalchemistryperspective-2005}, where the convergence of the lowest discrete eigenvalues of molecular Hamiltonians is shown for multiple typical basis sets used in quantum chemistry.
    Cases in which Galerkin's method fails to reproduce the correct dynamics in the limit are usually not mentioned in the literature, an exception being~\cite{klahn-convergencerayleighritzmethod-1977} again.

    Galerkin approximations of bosonic systems were considered rigorously in a number of recent works~\cite{fischer-selfadjointrealizationshigherorder-2025,ashhab-finitedimensionalapproximationsgeneralized-2026,robin-convergenceanalysisgalerkin-2025,etienney-posteriorierrorestimates-2025,arzani-effectivedescriptionsbosonic-2025}.
    In particular, \cite{fischer-selfadjointrealizationshigherorder-2025,ashhab-finitedimensionalapproximationsgeneralized-2026} highlight that finite--dimensional approximations of higher--order squeezing operators always lead to unexpected oscillations in the truncation size.

    The potential relevance of such numerical issues extend far beyond its mere educational value. Even the particular case study presented in this paper, the particle in a box with hard walls, provides first important insights into the simulation of systems of physical relevance: it serves as a simple model for confined particles, such as electrons in molecules and crystals~\cite{autschbach-whyparticleboxmodel-2007,vos-particleboxmomentumdensities-2002,anderson-lasersynthesislinear-2008,ruedenberg-freeelectronnetworkmodel-1953,scherr-freeelectronnetworkmodel-1953,anjos-quantummechanicsparticles-2024,fillaux-neutronscatteringstudies-2012}, nuclei in the nucleus~\cite[p.~77]{basdevant-fundamentalsnuclearphysics-2005}, quantum dots~\cite{schmid-nanoparticlestheoryapplication-2010,landry-simplesynthesescdse-2014,rice-quantumdotspolymer-2008} or quantum well lasers~\cite{holonyak-quantumwellheterostructurelasers-1980}. On the other hand, Cooper pair boxes are described by a particle in a box with periodic boundary conditions~\cite{koch-chargeinsensitivequbitdesign-2007,girvin-circuitqedengineering-2009,bladh-singlecooperpairbox-2005}.
    As the associated Legendre polynomials are often used in quantum chemistry in the form of spherical harmonics~\cite{gatti-applicationsquantumdynamics-2017,yuan-usingfourierseries-2005,shizgal-spectralmethodschemistry-2015}, convergence issues similar to those presented here might also appear in other quantum chemistry problems.    
    Finally, regarding quantum boundary conditions, the different boundary conditions of a particle in a box are discussed in a more mathematical setting in~\cite{bonneau-selfadjointextensionsoperators-2001,asorey-dynamicalcompositionlaw-2013}, the latter providing a composition law for time-dependent, alternating boundary conditions.
    A more general case on Riemannian manifolds is considered in~\cite{asorey-globaltheoryquantum-2005a}, highlighting the importance of boundary conditions in a variety of fields like the Quantum Hall effect~\cite{halperin-quantizedhallconductance-1982,john-renormalizationgroupquantum-1995}, cosmology~\cite{vilenkin-boundaryconditionsquantum-1986} and quantum field theory~\cite{casimir-attractiontwoperfectly-1948,manton-schwingermodelits-1985}.
    The predominance of the Dirichlet boundary condition is discussed in~\cite[Prop 4.14.1, p.~759]{zagrebnov-trotterkatoproductformulae-2024} and~\cite{deoliveira-mathematicalpredominancedirichlet-2012}.
    Finally, in~\cite{berry-quantumfractalsboxes-1996} the time evolution of a particle in a box with hard walls is analytically calculated, and it is shown that initial states which do not fulfill Dirichlet boundary conditions evolve into fractal functions.
    
\subsection{Outline of the paper}

The remainder of the paper will be devoted to translating the results presented above into a rigorous mathematical framework. The paper is organized as follows:
\begin{itemize}
    \item in Section~\ref{sec:galerkin-general} we list, together with relevant definitions, our main results (informally presented in Section~\ref{sec:results_galerkin_informal}) about the convergence of the dynamics generated by Galerkin approximations of general self-adjoint operators;
    \item in Section~\ref{sec:particle_results} we list our main results (informally presented in Section~\ref{sec:results_box_informal}) concerning the convergence of the dynamics generated by Galerkin approximations of the Hamiltonian of a quantum particle in a box;
    \item Sections~\ref{sec:proofs_sec2} and~\ref{sec:proofs_sec3} contain the proofs of all results listed in Sections~\ref{sec:galerkin-general} and~\ref{sec:particle_results};
    \item finally, we gather some final remarks in Section~\ref{sec:conclusion}.
\end{itemize}

\section{Galerkin approximations: general results}
\label{sec:galerkin-general}

We will start by presenting some general results about the convergence of the unitary dynamics generated by Galerkin approximations of a given self-adjoint operator $H$. These results constitute the abstract substrate to the concrete convergence theorems for Galerkin approximations of the particle in a box that will be presented in Section~\ref{sec:particle_results}. As these results are of interest by themselves---and potentially applicable to situations of physical interest not covered in this work---we shall present them explicitly. The reader exclusively interested in the specific case of the particle in a box might directly jump to the next section.

For the remainder of this section, $\hilbert$ will be an infinite-dimensional, complex, and separable---that is, admitting countable complete orthonormal sets---Hilbert space; the scalar product on $\hilbert$, and its associated norm, will be denoted by $\braket{\cdot,\cdot}$ and $\|\cdot\|$. The domain of an unbounded linear operator $A$ on $\hilbert$ will be denoted by $\mathcal{D}(A)$, and the adjoint of $A$, whenever well-defined, will be denoted by $A^*$.

\begin{definition}[Galerkin approximation]\label{def:galerkin_approximation}
    Let $H:\domain(H)\subset\hilbert\rightarrow\hilbert$ be a self-adjoint operator on $\hilbert$, and $(P_n)_{n\in\mathbb{N}}$ a family of finite-dimensional orthogonal projectors satisfying the following properties:
    \begin{itemize}
\item $P_n\to\id$ strongly, that is, $P_n\psi\to\psi$ for all $\psi\in\hilbert$;
\item $P_n\hilbert\subset\domain(H)$ for every $n\in\mathbb{N}$.
    \end{itemize}
    The $n$th \textit{Galerkin approximation} of $H$ is the bounded operator on $\hilbert$ defined by
        \begin{equation}\label{eq:galerkin_approximation}
            H_n:\hilbert\rightarrow\hilbert,\qquad H_n:=P_nHP_n=\hat{H}_n\oplus0_{\hilbert_n^\perp},
        \end{equation}
        with $\hat{H}_n$ being the restriction of $H_n$ to the finite-dimensional Hilbert space $\hilbert_n = P_n\hilbert$.
      \end{definition} 

Above, in Eq.~\eqref{eq:galerkin_approximation}, we exploited the direct sum decomposition of $\hilbert$ given by $\hilbert \simeq \hilbert_n \oplus \hilbert_n^\perp$.
The decomposition $H_n = \hat{H}_n\oplus0_{\hilbert_n^\perp}$ is pictorially depicted in Figure~\ref{fig:matrix}. The decomposition~\eqref{eq:galerkin_approximation} of $H_n$ immediately means that the corresponding unitary propagator $U_n(t)=\e^{-\iu tH_n}$ decomposes as follows:
\begin{equation}
    \label{eq:decomposition-un}
    U_n(t) = \hat{U}_n(t) \oplus \id_{\hilbert_n^\perp} \, ,\qquad \hat{U}_n(t)=\e^{-\iu t\hat{H}_n}.
\end{equation}
\begin{figure}
    \centering
    \includegraphics[width=0.6\linewidth]{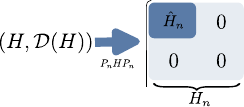}
    \caption{Pictorial representation of the Galerkin approximation. The unbounded operator $H$ on the infinite-dimensional space $\hilbert$ is truncated by means of a family of finite-dimensional projectors $(P_n)_{n\in\mathbb{N}}$: this yields the bounded operator $H_n = P_n H P_n$ on $\hilbert$, which only acts nontrivially on the finite-dimensional subspace $\hilbert_n=P_n\hilbert$, where its action is described by $\hat{H}_n$, and zero everywhere else.}
    \label{fig:matrix}
\end{figure}

We are interested in sufficient conditions under which $U_n(t)$ converges strongly to $U(t)$, that is, $U_n(t)\psi\to U(t)\psi$ for all $\psi\in\hilbert$ and $t\in\mathbb{R}$. We begin by recalling a simple sufficient criterion: Galerkin's method works when the subspace of $\hilbert$ defined by
\begin{equation}\label{eq:hilbert_fin}
    \hilbert_{\mathrm{fin}}:=\bigcup_{n \in \nnum} P_n \hilbert
\end{equation}
is a \textit{core} of $H$, that is: the restriction of $H$ to this space is still essentially self-adjoint, i.e. it admits $H$ as its unique self-adjoint extension:

\begin{proposition}
    \label{prop:convergence-core}
    Let $H$ be a self-adjoint operator on $\hilbert$, $(P_n)_{n \in \nnum}$ a family of projectors as in Definition~\ref{def:galerkin_approximation}, additionally satisfying $P_n P_m = P_n$ for all $m \geq n$, and $U_{n}(t)=\e^{-\iu tH_{n}}$ the unitary propagator associated with the $n$th Galerkin approximation $H_n$ of $H$. Assume that the restriction $H_{\mathrm{fin}}$ of $H$ onto the space $\hilbert_{\mathrm{fin}}$ as per Eq.~\eqref{eq:hilbert_fin} is essentially self-adjoint. Then, for all $t\in\mathbb{R}$ and $\psi\in\hilbert$,
    \begin{equation}\label{eq:dynamical_convergence}
        \lim_{n\to\infty}\left\|U(t)\psi-U_{n}(t)\psi\right\|  = 0
    \end{equation}
    where $U(t) = \e^{-\iu tH}$ is the unitary propagator generated by $H$.
\end{proposition}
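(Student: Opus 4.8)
The plan is to deduce the strong convergence of the unitary groups from the convergence of the truncations $H_n$ to $H$ in the \emph{strong resolvent sense}, and then invoke the Trotter--Kato theorem: for self-adjoint operators, $H_n\to H$ in the strong resolvent sense if and only if $\e^{-\iu tH_n}\to\e^{-\iu tH}$ strongly for every $t\in\rnum$ (and then automatically uniformly on compact $t$-intervals). Thus the entire task reduces to proving that $(H_n+\iu)^{-1}\psi\to(H+\iu)^{-1}\psi$ for all $\psi\in\hilbert$.

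First I would exploit the nesting hypothesis. Since the $P_n$ are self-adjoint, taking adjoints in $P_nP_m=P_n$ (valid for $m\ge n$) gives $P_mP_n=P_n$ for $m\ge n$; hence the ranges $P_n\hilbert$ are increasing, and any $\psi\in\hilbert_{\mathrm{fin}}=\bigcup_nP_n\hilbert$ satisfies $P_n\psi=\psi$ for all $n$ larger than some $N$. Consequently, for such $n$, $H_n\psi=P_nHP_n\psi=P_nH\psi$; and since $\hilbert_{\mathrm{fin}}\subset\domain(H)$ we have $H\psi\in\hilbert$, so the strong convergence $P_n\to\id$ yields $H_n\psi=P_nH\psi\to H\psi$. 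Therefore $H_n\psi\to H\psi$ for every $\psi\in\hilbert_{\mathrm{fin}}$.

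The decisive step---and the only place where essential self-adjointness is used---is to upgrade this pointwise convergence on the dense subspace $\hilbert_{\mathrm{fin}}$ to strong resolvent convergence. By assumption $H_{\mathrm{fin}}$ is essentially self-adjoint, and since $H$ is a self-adjoint extension of it, uniqueness forces $\overline{H_{\mathrm{fin}}}=H$; that is, $\hilbert_{\mathrm{fin}}$ is a core of $H$, which is equivalent to $(H+\iu)\hilbert_{\mathrm{fin}}$ being dense in $\hilbert$. Now, for $\psi\in\hilbert_{\mathrm{fin}}$ set $\eta=(H+\iu)\psi$, so that $(H+\iu)^{-1}\eta=\psi$, and write
\begin{equation}
    (H_n+\iu)^{-1}\eta-(H+\iu)^{-1}\eta=(H_n+\iu)^{-1}\bigl[(H+\iu)\psi-(H_n+\iu)\psi\bigr]=(H_n+\iu)^{-1}(H-H_n)\psi.
\end{equation}
Because $H_n$ is self-adjoint we have the uniform bound $\norm{(H_n+\iu)^{-1}}\le1$, while $(H-H_n)\psi\to0$ by the previous step; hence $(H_n+\iu)^{-1}\eta\to(H+\iu)^{-1}\eta$ on the dense set of all such $\eta$. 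A standard $\varepsilon/3$ argument, combining this density with the uniform bound $\norm{(H_n+\iu)^{-1}}\le1$, extends the convergence to every $\chi\in\hilbert$, giving $H_n\to H$ in the strong resolvent sense. The Trotter--Kato theorem then delivers Eq.~\eqref{eq:dynamical_convergence}.

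I expect the main obstacle to be conceptual rather than computational: the pointwise convergence $H_n\psi\to H\psi$ on $\hilbert_{\mathrm{fin}}$ is by itself \emph{insufficient}, and the resolvent argument above genuinely breaks down when $\hilbert_{\mathrm{fin}}$ fails to be a core, for then $(H+\iu)\hilbert_{\mathrm{fin}}$ is not dense and the resolvents converge on a proper subspace only. The essential self-adjointness hypothesis is exactly what rescues the density step; its failure is what produces convergence to a \emph{different} self-adjoint operator---the source of the ``wrong box'' phenomenon studied in the remainder of the paper.
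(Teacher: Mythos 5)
Your proof is correct and follows essentially the same route as the paper's: both reduce the claim to strong resolvent convergence of $H_n$ to $H$ via the Trotter--Kato equivalence, and both obtain it from the nesting property $P_nP_m=P_n$ together with the core assumption on $\hilbert_{\mathrm{fin}}$. The only difference is that where the paper cites a textbook criterion (core, domain inclusion, and pointwise convergence on the core imply strong resolvent convergence), you prove that criterion directly through the resolvent identity and the uniform bound $\norm{(H_n+\iu)^{-1}}\le1$, which is a welcome but inessential elaboration.
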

\begin{proof}
We recall that the property~\eqref{eq:dynamical_convergence} is equivalent to strong resolvent convergence of $H_n$ to $H$ (cf.~\cite[287]{reed-mmmp1-funkana-1980}), and that, by~\cite[Proposition~10.1.18]{oliveira-intermediatespectraltheory-2009}, a sufficient condition for strong resolvent convergence of a family of self-adjoint operators $(H_n)_{n\in\mathbb{N}}$ to a self-adjoint operator $H$ is the following~\cite[Proposition~10.1.18]{oliveira-intermediatespectraltheory-2009}: there exists $\mathcal{D}_0\subset\domain(H)$ such that
\begin{enumerate}
\item[(i)] $\mathcal{D}_0$ is a core of $H$;
\item[(ii)] $\mathcal{D}_0\subset\domain(H_n)$ for all $n\in\mathbb{N}$;
\item[(iii)] for all $\psi\in\domain_0$, $H_n\psi\to H\psi$.
\end{enumerate}
We choose $\domain_0=\hilbert_{\rm fin}$. Then (i) is true by assumption, and (ii) is obvious since $\domain(H_n)=\hilbert$ for every $n\in\mathbb{N}$. To prove (iii), let $\psi \in \hilbert_{\mathrm{fin}}$.
    Then, as $P_n P_m = P_n$ for $m \geq n$, there exists $N \in \nnum$ such that $\psi \in P_n \hilbert$ for all $n \geq N$, and
    \begin{equation}
        \lim_{n \to \infty} H_n \psi = \lim_{n \to \infty} P_n H P_n \psi = \lim_{n \to \infty} P_n H  \psi = H \psi\, .
    \end{equation}
    Thus $\lim_{n \to \infty}H_n \psi = H \psi$ for all $\psi \in \hilbert_{\mathrm{fin}}$, which proves that (iii) holds. This completes the proof.
\end{proof}

\begin{remark}
\label{rem:eigenbasis-convergence}
    Proposition~\ref{prop:convergence-core} covers the particular case in which $H$ has a purely discrete spectrum, thus admitting an orthonormal basis of eigenvectors\footnote{For definiteness, whenever considering an arbitrary orthonormal basis, the index $l$ will range on all natural numbers. However, when considering specific bases (like the associated Legendre polynomials, cf.~Definition~\ref{def:legendre}), the index $l$ could range on a different set of integers.} $(\phi_l)_{l\in\mathbb{N}}$, and the projectors $(P_n)_{n\in\mathbb{N}}$ correspond precisely to the projection on the span of the first $n$ eigenvectors of $H$,
    \begin{equation}
        P_n=\sum_{l=0}^{n-1}\braket{\phi_l,\cdot}\phi_l,\qquad H\phi_l=E_l\phi_l,
    \end{equation}
    with $E_j\in\mathbb{R}$ being the $j$th eigenvalue of $H$.
    Each of these projectors has a finite-dimensional span, and $P_n\hilbert\subset\domain(H)$ since each eigenvector is in $\domain(H)$.
    Then $\hilbert_{\mathrm{fin}}$, as defined in Proposition~\ref{prop:convergence-core}, is a core of $H$~\cite[p.~78]{teschl-mathematicalmethodsquantum-2009}. As such, as anticipated in Section~\ref{sec:results_galerkin_informal}, the Galerkin method always works when truncating $H$ in its own eigenbasis.
\end{remark}

As already pointed out in Section~\ref{sec:results_galerkin_informal}, here we are instead interested in the situation in which the sufficient criterion presented above does not apply---that is, $H$ is not essentially self-adjoint on $\hilbert_{\rm fin}$.
In such cases, more sophisticated arguments will be necessary. We begin by recalling two basic definitions:
\begin{definition}\label{def:coercive-semibounded}
Let $A:\domain(A)\subset\hilbert\rightarrow\hilbert$ be a densely defined symmetric operator on $\hilbert$. Then:
\begin{itemize}
    \item $A$ is \textit{bounded from below} (or semibounded) if there exists $\gamma\in\mathbb{R}$ such that
    \begin{equation}
        \braket{\psi,A\psi}\geq\gamma\|\psi\|^2\qquad\text{for all }\psi\in\domain(A);
    \end{equation}
    \item $A$ is \textit{coercive} if the equation above holds with $\gamma>0$.
\end{itemize}    
\end{definition}
\begin{proposition}\label{prop:coercive-implies-invertibility}
    Let $A$ be a coercive self-adjoint operator. Then $A$ admits a bounded inverse with $\|A^{-1}\|\leq\frac{1}{\gamma}$.
\end{proposition}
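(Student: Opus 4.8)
The plan is to turn coercivity into a uniform lower bound on $A$ and then use self-adjointness to upgrade that bound to full invertibility. First I would combine the coercivity inequality with the Cauchy--Schwarz inequality: for every $\psi\in\domain(A)$,
\[
    \gamma\|\psi\|^2\leq\braket{\psi,A\psi}\leq\|\psi\|\,\|A\psi\|,
\]
and dividing by $\|\psi\|$ (the case $\psi=0$ being trivial) gives $\|A\psi\|\geq\gamma\|\psi\|$. This lower bound already implies that $A$ is injective, since $A\psi=0$ forces $\psi=0$.

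Next I would show that the range $\mathrm{Ran}(A)$ is closed. A self-adjoint operator is in particular closed; so if $(A\psi_k)_k$ converges in $\hilbert$, the bound $\|A\psi_k-A\psi_j\|\geq\gamma\|\psi_k-\psi_j\|$ shows that $(\psi_k)_k$ is Cauchy, hence converges to some $\psi$, and closedness of $A$ then yields $\psi\in\domain(A)$ with $A\psi=\lim_k A\psi_k$. Thus the limit lies in $\mathrm{Ran}(A)$, which is therefore closed.

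The crucial step is to establish surjectivity. Since $A$ is self-adjoint (hence densely defined with $A^*=A$), the general identity $\mathrm{Ran}(A)^\perp=\ker(A^*)$ specializes to $\mathrm{Ran}(A)^\perp=\ker(A)=\{0\}$, the last equality being the injectivity established above. Hence $\mathrm{Ran}(A)$ is dense; being also closed, it equals $\hilbert$. Consequently $A$ is a bijection from $\domain(A)$ onto $\hilbert$, and $A^{-1}\colon\hilbert\to\domain(A)$ is well defined on the whole space. The norm bound is then immediate: writing an arbitrary $\phi\in\hilbert$ as $\phi=A\psi$ with $\psi=A^{-1}\phi$, the lower bound reads $\|\phi\|\geq\gamma\|A^{-1}\phi\|$, so that $\|A^{-1}\|\leq 1/\gamma$.

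I expect the surjectivity to be the only real obstacle: coercivity alone---valid already for merely symmetric operators---delivers only the lower bound and injectivity, and it is precisely self-adjointness that promotes ``injective with closed range'' to ``invertible''. A more economical route would invoke the spectral theorem, which forces $\sigma(A)\subset[\gamma,\infty)$ and hence $0\in\rho(A)$ with $\|A^{-1}\|=\sup_{\lambda\in\sigma(A)}|\lambda|^{-1}\leq 1/\gamma$; I would nonetheless favour the elementary argument above, as it keeps the proof self-contained.
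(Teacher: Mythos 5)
Your proof is correct, and it is in fact more complete than the paper's own argument. Both proofs open identically: Cauchy--Schwarz combined with coercivity gives $\|A\psi\|\geq\gamma\|\psi\|$, hence injectivity, and the bound $\gamma\|A^{-1}\phi\|\leq\|\phi\|$ follows immediately on the range of $A$. The paper stops there, writing ``given $\psi\in\hilbert$'' and thereby tacitly assuming that $A^{-1}$ is defined on all of $\hilbert$ --- i.e., surjectivity of $A$ is left implicit. You correctly identify this as the one step where self-adjointness (rather than mere symmetry) is indispensable, and you supply it: the lower bound plus closedness of $A$ gives a closed range, while $\mathrm{Ran}(A)^\perp=\ker(A^*)=\ker(A)=\{0\}$ gives a dense range, so $\mathrm{Ran}(A)=\hilbert$. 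The spectral-theorem shortcut you mention ($\sigma(A)\subset[\gamma,\infty)$, so $0\in\rho(A)$ and $\|A^{-1}\|\leq 1/\gamma$) is also valid and is probably the implicit justification behind the paper's terseness. In short: same opening move, but your version closes a genuine (if standard) gap that the paper's proof glosses over.
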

\begin{proof}
    As $A$ is coercive and self-adjoint, $0$ is not in its spectrum~\cite[Theorem 2.19]{teschl-mathematicalmethodsquantum-2009}, hence $A$ admits a bounded inverse $A^{-1}$.
    Let $0\neq\psi\in\domain(A)$. By the Cauchy--Schwarz inequality,
    \begin{equation}
    \label{proofeq:coercive-bound}
        \norm{A \psi} = \frac{\norm{\psi} \norm{A \psi}}{\norm{\psi}} \geq \frac{\braket{\psi,A\psi}}{\norm{\psi}} \geq \gamma \norm{\psi} \, .
    \end{equation}
    Furthermore, given $\psi\in\hilbert$,
    \begin{equation}
        \gamma \norm{A^{-1}\psi} \leq  \norm{A A^{-1} \psi} = \norm{\psi} \quad \text{for all }\psi \in \hilbert,
    \end{equation}
    where we applied \cref{proofeq:coercive-bound} onto $A^{-1}\psi\in\domain(A)$.
\end{proof}

\begin{remark}\label{rem:waiving_coerciveness}
A central object in the study of Galerkin approximations are \textit{Galerkin projectors} (Definition~\ref{def:galerkin_projector}), which require the operator to admit a bounded inverse. As such, to keep the notation simple, we will only define them for coercive self-adjoint operators, which automatically admit a bounded inverse by Proposition~\ref{prop:coercive-implies-invertibility}. However, in the more general case of operators bounded from below with $\gamma\in\mathbb{R}$, one can simply study the operator $A-\gamma+1$, which is clearly coercive:
\begin{equation}
    \braket{\psi,(A-\gamma+1)\psi}=\braket{\psi,A\psi}+(1-\gamma)\|\psi\|^2\geq\|\psi\|^2,
\end{equation}
since the dynamics induced by $A$ and $A-\gamma+1$ only differ by an immaterial phase term: $\e^{-\iu t(A-\gamma+1)}=\e^{-\iu(1-\gamma)t}\e^{-\iu tA}$. We will use this simple argument in the proof of Proposition~\ref{prop:galerkin_criterion}.
\end{remark}

\begin{definition}[Galerkin projector]\label{def:galerkin_projector}
Let $H:\domain(H)\subset\hilbert\rightarrow\hilbert$ be a coercive self-adjoint operator on $\hilbert$, and $P_n$, $H_n$, $\hat{H}_n$ as in Definition~\ref{def:galerkin_approximation}. The $n$th \textit{Galerkin projector} is the operator on $\hilbert$ defined by
        \begin{equation}\label{eq:galerkin_projector}
            Q_n : \domain(H) \to \hilbert\, ,  \quad Q_n \psi =  R_n H \psi  \,,
        \end{equation}
        where
         \begin{equation}\label{eq:rn}
            R_n : \hilbert \to \hilbert\, , \quad R_n = \hat{H}_n^{-1} \oplus 0_{\hilbert_n^\perp} \, .
        \end{equation}
\end{definition}
 We point out that $Q_n$ is generally a non-self-adjoint projector (cf.~Lemma~\ref{lem:rn-prop}), that is, $Q_n^2=Q_n\neq Q_n^*$. The definition above is indeed well-posed: by Proposition~\ref{prop:coercive-implies-invertibility} $H$ admits a bounded inverse, thus necessarily each operator $\hat{H}_n$ is also invertible in the finite-dimensional space $\hilbert_n$. Pay attention to the fact that, instead, $H_n$ is never invertible in $\hilbert$ (also see Remark~\ref{rem:galerkin_projection}).

We present our first general convergence statement about Galerkin approximations: it consists of an adaptation to the Hilbert space scenario of a result by Parter, cf.~\cite{parter-rolesstabilityconvergence-1980}:
\begin{proposition}\label{prop:galerkin_criterion}
     Let $H:\domain(H)\subset\hilbert\rightarrow\hilbert$ be a coercive self-adjoint operator on $\hilbert$, and $H_n$ and $Q_n$ the Galerkin approximation and projector corresponding to a family of projectors $(P_n)_{n\in\mathbb{N}}$ as per Definition~\ref{def:galerkin_approximation}. Assume
    \begin{equation}
        \lim_{n\to\infty}Q_n\psi=\psi\qquad\text{for all }\psi\in\domain(H).
    \end{equation}
    Then, for all $t\in\mathbb{R}$,
    \begin{equation}
        \lim_{n\to\infty}U_n(t)\psi=U(t)\psi\qquad\text{for all }\psi\in\hilbert,
    \end{equation}
    where $U_n(t)$ and $U(t)$ are the unitary propagators generated respectively by $H_n$ and $H$, i.e. $U_n(t)=\e^{-\iu tH_n}$ and $U(t)=\e^{-\iu tH}$.
\end{proposition}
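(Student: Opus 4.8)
The plan is to reduce the dynamical convergence $U_n(t)\to U(t)$ to strong resolvent convergence $H_n\to H$, and then to reduce \emph{that}, by rewriting everything in terms of the bounded operators $R_n$, to the hypothesis $Q_n\to\id$. As already recalled in the proof of Proposition~\ref{prop:convergence-core}, strong convergence of the propagators is equivalent to strong resolvent convergence, so it suffices to prove that $(H_n-z)^{-1}\to(H-z)^{-1}$ strongly for one (hence every) non-real $z$; I would fix $z=\iu$.

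First I would record the structural facts that make $R_n$ well-behaved. Since $H$ is coercive with constant $\gamma>0$, for every $\psi\in\hilbert_n$ one has $\braket{\psi,\hat H_n\psi}=\braket{P_n\psi,HP_n\psi}\geq\gamma\norm{\psi}^2$, so each $\hat H_n$ is coercive on $\hilbert_n$ with the \emph{same} constant; hence $R_n=\hat H_n^{-1}\oplus0$ is a nonnegative self-adjoint operator with $\norm{R_n}\leq1/\gamma$ uniformly in $n$, and likewise $R:=H^{-1}$ satisfies $0\le R$, $\norm{R}\le1/\gamma$. Next I would translate the hypothesis: because $H$ is boundedly invertible, every $\phi\in\hilbert$ is of the form $\phi=H\psi$ with $\psi=H^{-1}\phi\in\domain(H)$, whence $R_n\phi=R_nH\psi=Q_n\psi\to\psi=R\phi$. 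Thus the assumption $Q_n\to\id$ on $\domain(H)$ is \emph{equivalent} to $R_n\to R$ strongly on all of $\hilbert$.

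The key computation is an explicit formula for the truncated resolvent. Using $H_n=\hat H_n\oplus0_{\hilbert_n^\perp}$ and the identity $\hat H_n^{-1}(\id-z\hat H_n^{-1})^{-1}=(\hat H_n-z)^{-1}$, a short check on $\hilbert_n$ and on $\hilbert_n^\perp$ separately gives
\begin{equation}
    (H_n-z)^{-1}=R_n(\id-zR_n)^{-1}-\frac1z(\id-P_n),
\end{equation}
while the analogous (exact) identity for $H$ reads $(H-z)^{-1}=R(\id-zR)^{-1}$. Since $P_n\to\id$ strongly, the term $-\frac1z(\id-P_n)$ vanishes strongly, so the whole problem collapses to showing
\begin{equation}
    R_n(\id-zR_n)^{-1}\longrightarrow R(\id-zR)^{-1}\qquad\text{strongly.}
\end{equation}

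The main obstacle—and the conceptual heart of the statement—is that one \emph{cannot} obtain this by naive functional calculus applied to the relation $H_n=g(R_n)$ with $g(x)=1/x$ away from $0$: the function $g$ is singular exactly at $0$, and each $R_n$ carries a large chunk of spectrum there (all of $\hilbert_n^\perp$), even though $0$ is not an eigenvalue of the limit $R$. The point of the rewriting above is precisely that it replaces the singular $g$ by $f(x)=x/(1-zx)$, which for $z=\iu$ is \emph{continuous and bounded} on the common spectral interval $[0,1/\gamma]$ (indeed $\lvert1-\iu x\rvert\ge1$ there). For uniformly norm-bounded self-adjoint operators with $R_n\to R$ strongly, $f(R_n)\to f(R)$ strongly for every continuous $f$: one approximates $f$ uniformly by polynomials on $[0,1/\gamma]$ via Weierstrass, notes that $R_n^k\to R^k$ strongly for each $k$ (telescoping, using the uniform bound so that products of strongly convergent uniformly bounded sequences converge strongly), and controls the polynomial-approximation error uniformly in $n$ through the spectral calculus. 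This yields $f(R_n)\to f(R)$, hence $(H_n-\iu)^{-1}\to(H-\iu)^{-1}$ strongly, i.e. strong resolvent convergence, and therefore $U_n(t)\psi\to U(t)\psi$ for all $\psi\in\hilbert$ and $t\in\rnum$. (If $H$ were merely bounded below, Remark~\ref{rem:waiving_coerciveness} reduces matters to the coercive case treated here.)
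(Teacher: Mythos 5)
Your proof is correct, but it takes a genuinely different route from the paper's. You work in the resolvent picture: you translate the hypothesis $Q_n\psi\to\psi$ on $\domain(H)$ into strong convergence $R_n\to H^{-1}$ of the uniformly bounded, nonnegative self-adjoint operators $R_n$ (this translation, $R_n\phi=Q_nH^{-1}\phi$, is exact and is the right observation), derive the identity $(H_n-z)^{-1}=R_n(\id-zR_n)^{-1}-\tfrac1z(\id-P_n)$, and conclude by stability of the continuous functional calculus $f(x)=x/(1-\iu x)$ under strong limits of uniformly bounded self-adjoint operators, invoking the Trotter--Kato equivalence between strong resolvent convergence and strong convergence of the propagators (an equivalence the paper itself cites in the proof of Proposition~\ref{prop:convergence-core}). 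All the individual steps check out: $\hat H_n$ inherits the coercivity constant $\gamma$, so $0\le R_n\le 1/\gamma$; the resolvent formulas are verified by a direct computation on $\hilbert_n\oplus\hilbert_n^\perp$; and the Weierstrass/telescoping argument for $f(R_n)\to f(R)$ is standard. The paper instead stays in the time domain: it proves a Duhamel-type identity (Lemmas~\ref{lem:trotter-approximation-theorem} and~\ref{lem:trotter-approximation-theorem-qn}, adapted from Parter), namely $Q_nU(t)\psi-U_n(t)Q_n\psi=-\iu\int_0^tU_n(t-s)P_n(Q_n-\id)U(s)H\psi\,\mathrm{d}s$ on $\domain(H^2)$, and then closes with a three-term triangle inequality and dominated convergence. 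The trade-off is explicitly flagged in the paper's remark following its proof: the time-domain argument yields state-dependent, quantitative error bounds in terms of $\norm{(Q_n-\id)U(s)H\psi}$, which your resolvent/functional-calculus route does not produce directly (Weierstrass approximation and the Trotter--Kato theorem are both non-quantitative as used); conversely, your argument is shorter, avoids the auxiliary dense subspace $\domain(H^2)$, and leans only on textbook spectral-theoretic facts.
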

We provide an explicit proof in Section~\ref{sec:galerkin-criterion-proof}.

Proposition~\ref{prop:galerkin_criterion} provides a sufficient condition for the convergence of the dynamics generated by Galerkin approximations: given a coercive self-adjoint $H$, and its Galerkin approximations $(H_n)_{n\in\mathbb{N}}$ obtained through a family of projectors $(P_n)_{n\in\mathbb{N}}$ such that $P_n\hilbert\subset\mathcal{D}(H)$, then a sufficient condition from $U_n(t)$ converging strongly to $U(t)$ is $Q_n\psi\to\psi$ for all $\psi\in\mathcal{D}(H)$, with $Q_n$ being the $n$th Galerkin approximation associated with $P_n$. 
In practical applications, computing $Q_n$ explicitly---and thus, using Proposition~\ref{prop:galerkin_criterion} to show convergence of Galerkin approximations---might be unfeasible; it would then be nice to have sufficient criteria for the property $Q_n\psi\to\psi$ to hold. We will present one such condition that crucially involves the concept of \textit{Friedrichs extension} of a symmetric operator. To this end, we shall briefly recall some basic facts about the relation between sesquilinear forms and linear operators. 

Let $A:\mathcal{D}(A)\subset\hilbert\rightarrow\hilbert$ be a (possibly not self-adjoint) symmetric operator bounded from below, cf.~Definition~\ref{def:coercive-semibounded}. As such, the sesquilinear form $q_A:\mathcal{D}(A)\times\mathcal{D}(A)\rightarrow\mathbb{C}$ defined by $q_A(\psi,\varphi):=\braket{\psi,A\varphi}$ satisfies $q_A(\psi,\psi)\geq\gamma\|\psi\|^2$, or equivalently, satisfies $q_{A-\gamma}(\psi,\psi):=\braket{\psi,(A-\gamma)\psi}\geq0$. As such, one can consider a stronger norm on $\mathcal{D}(A)$ defined by
\begin{equation}\label{eq:plusnorm}
\|\varphi\|^2_{+}:=q_{A-\gamma}(\varphi,\varphi)+\|\varphi\|^2,\qquad\varphi\in\mathcal{D}(A).
\end{equation}
We remark that, for positive operators, this norm is equivalent to the graph norm of $A^{1/2}$ \cite[p.~227]{schmudgen-unboundedselfadjointoperators-2012}. 
The following proposition allows us to construct a distinct extension of $A$.
\begin{proposition}
\label{prop:def-friedrich}
\emph{(\cite[p.~177]{reed-mmmp2-fourier-1975}~\cite[p.~70]{teschl-mathematicalmethodsquantum-2009})} The following facts hold:
\begin{itemize}
    \item  $q_{A-\gamma}$ admits a unique extension $\tilde{q}_{A-\gamma}$ to a sesquilinear form defined on $\mathcal{D}(\tilde{q}_{A-\gamma}):=\overline{\mathcal{D}(A)}^{\|\cdot\|_+}$, the closure of $\mathcal{D}(A)$ with respect to the norm $\|\cdot\|_+$;
    \item There is a unique self-adjoint operator $\tilde{A}:\mathcal{D}(\tilde{A})\subset\hilbert\rightarrow\hilbert$, with $\mathcal{D}(\tilde{A})\subset\mathcal
    D(\tilde{q}_{A-\gamma})$, such that $\tilde{q}_{A-\gamma}(\psi,\varphi)=\braket{\psi,(\tilde{A}-\gamma)\varphi}$ for all $\psi,\varphi\in\mathcal{D}(\tilde{A})$.
\end{itemize}
    Besides, $\tilde{A}$ is a self-adjoint extension of $A$ which is also bounded from below by $\gamma$, and it is its unique self-adjoint extension such that $\mathcal{D}(\tilde{A})\subset\domain(\tilde{q}_{A-\gamma})$.
\end{proposition}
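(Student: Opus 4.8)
The plan is to reduce to the nonnegative case and then realize $\tilde A$ through the Riesz representation theorem applied to the Hilbert space obtained by completing $\domain(A)$ in the $\|\cdot\|_+$ norm. Set $B:=A-\gamma$, so that $B$ is symmetric with $\braket{\varphi,B\varphi}\ge 0$, and observe that Eq.~\eqref{eq:plusnorm} turns $\domain(A)=\domain(B)$ into a pre-Hilbert space with inner product $\braket{\psi,\varphi}_+:=q_{B}(\psi,\varphi)+\braket{\psi,\varphi}$, whose induced norm dominates $\|\cdot\|$, i.e. $\|\varphi\|\le\|\varphi\|_+$. I would then carry out three steps: (1) identify the abstract completion of $(\domain(A),\|\cdot\|_+)$ with a dense subspace $\hilbert_+:=\domain(\tilde q_{A-\gamma})=\overline{\domain(A)}^{\|\cdot\|_+}$ of $\hilbert$ and extend $q_B$ to it by continuity; (2) build $\tilde A$ from the extended form via Riesz; (3) check that $\tilde A$ extends $A$, is bounded below by $\gamma$, and is the unique such self-adjoint extension with $\domain(\tilde A)\subset\hilbert_+$.

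For step (1), the crucial point---and the main obstacle---is \emph{closability}: one must show that the canonical continuous map from the completion into $\hilbert$ (coming from $\|\varphi\|\le\|\varphi\|_+$) is injective, so that the completion may genuinely be viewed as the subspace $\hilbert_+\subset\hilbert$. Concretely, if $(\varphi_n)\subset\domain(A)$ is $\|\cdot\|_+$-Cauchy and $\varphi_n\to 0$ in $\hilbert$, I would show $\|\varphi_n\|_+\to 0$ by writing
\[
\|\varphi_n\|_+^2=\braket{\varphi_n,\varphi_n-\varphi_m}_+ + \braket{\varphi_n,(B+\id)\varphi_m};
\]
the first term is at most $\sup_k\|\varphi_k\|_+\cdot\|\varphi_n-\varphi_m\|_+$ and is made small using the Cauchy property, while the second term tends to $0$ as $n\to\infty$ for fixed $m$ because $\varphi_n\to0$ in $\hilbert$ and $(B+\id)\varphi_m\in\hilbert$. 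This is exactly the semiboundedness of $B$ doing the work. Once injectivity is secured, the extension $\tilde q_{A-\gamma}$ is obtained by continuity, since $q_B(\psi,\varphi)=\braket{\psi,\varphi}_+-\braket{\psi,\varphi}$ is $\|\cdot\|_+$-bounded; uniqueness of the extension is immediate from density of $\domain(A)$ in $(\hilbert_+,\|\cdot\|_+)$.

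For step (2), I would apply the Riesz representation theorem in $(\hilbert_+,\braket{\cdot,\cdot}_+)$. For each $\chi\in\hilbert$ the functional $\psi\mapsto\braket{\psi,\chi}$ is $\|\cdot\|_+$-bounded on $\hilbert_+$, so there is a unique $K\chi\in\hilbert_+$ with $\braket{\psi,K\chi}_+=\braket{\psi,\chi}$ for all $\psi\in\hilbert_+$. The resulting map $K:\hilbert\to\hilbert$ is bounded, self-adjoint, injective (as $\hilbert_+$ is dense), and satisfies $0\le K\le\id$; the bound $\|K\chi\|\le\|\chi\|$ follows from $\|K\chi\|^2\le\|K\chi\|_+^2=\braket{K\chi,\chi}\le\|K\chi\|\,\|\chi\|$. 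I then set $\tilde A:=K^{-1}-\id+\gamma$ on $\domain(\tilde A):=\operatorname{Ran}K\subset\hilbert_+$. Since $K$ is bounded, injective, and self-adjoint with spectrum in $(0,1]$, its inverse $K^{-1}$ is self-adjoint with $K^{-1}\ge\id$, so $\tilde A$ is self-adjoint and $\braket{\varphi,(\tilde A-\gamma)\varphi}=\braket{\varphi,(K^{-1}-\id)\varphi}\ge 0$, i.e. $\tilde A$ is bounded below by $\gamma$. Writing $\varphi=K\chi$ and unwinding the defining relation for $K$ gives $\tilde q_{A-\gamma}(\psi,\varphi)=\braket{\psi,(\tilde A-\gamma)\varphi}$ for all $\psi\in\hilbert_+$ and $\varphi\in\domain(\tilde A)$, which is the representation claimed.

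Finally, for step (3), to see $\tilde A\supset A$ I take $\varphi\in\domain(A)$ and note that $\psi\mapsto\tilde q_{A-\gamma}(\psi,\varphi)$ agrees with $\braket{\psi,B\varphi}$ on the dense subspace $\domain(A)$ and, both being $\|\cdot\|_+$-continuous in $\psi$, on all of $\hilbert_+$; hence $\varphi\in\domain(\tilde A)$ and $(\tilde A-\gamma)\varphi=B\varphi=(A-\gamma)\varphi$, i.e. $\tilde A\varphi=A\varphi$. For uniqueness, let $S$ be any self-adjoint extension of $A$ with $\domain(S)\subset\hilbert_+$. For $\varphi\in\domain(S)$ and $\psi\in\domain(A)$, using that $S$ extends $A$ and that $A$ is symmetric, $\braket{\psi,(S-\gamma)\varphi}=\braket{(A-\gamma)\psi,\varphi}=\tilde q_{A-\gamma}(\psi,\varphi)$; by density of $\domain(A)$ in $\hilbert_+$ and $\|\cdot\|_+$-continuity this persists for all $\psi\in\hilbert_+$, so $(S-\gamma+\id)^{-1}$ solves the same Riesz problem defining $K$, forcing $S=\tilde A$. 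I expect step (1)---the closability argument securing the injective embedding of the completion into $\hilbert$---to be the genuinely delicate part, everything downstream being a fairly mechanical application of Riesz representation.
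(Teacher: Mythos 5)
The paper does not actually prove this proposition---it is quoted from the cited references (Reed--Simon p.~177, Teschl p.~70)---so there is no in-paper argument to compare against. Your proof is correct and is essentially the standard construction of the Friedrichs extension found in those references: closability of the form via the splitting $\|\varphi_n\|_+^2=\braket{\varphi_n,\varphi_n-\varphi_m}_++\braket{\varphi_n,(B+\id)\varphi_m}$, the Riesz map $K$ with $0\le K\le\id$, and $\tilde{A}=K^{-1}-\id+\gamma$; the only compressed step is the final uniqueness claim, where instead of asserting that $(S-\gamma+\id)^{-1}$ ``solves the same Riesz problem'' (which presupposes invertibility) it is cleaner to note that your identity gives $K(S-\gamma+\id)\varphi=\varphi$ for all $\varphi\in\domain(S)$, hence $S\subset\tilde{A}$, and self-adjointness of both operators then forces $S=\tilde{A}$.
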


\begin{definition}\label{def:friedrichs_extension}
    Let $A:\mathcal{D}(A)\subset\hilbert\rightarrow\hilbert$ be a densely defined, symmetric operator bounded from below. Then the self-adjoint extension $\tilde{A}$ as given by Proposition~\ref{prop:def-friedrich} is the \textit{Friedrichs extension} of $A$.
\end{definition}

The following proposition showcases the central role of the Friedrichs extension in approximation theory:
\begin{proposition}
    \label{prop:friedrichs}
   Let $H:\domain(H)\subset\hilbert\rightarrow\hilbert$ be a self-adjoint operator bounded from below, and $(P_n)_{n \in \nnum}$ a family of projectors as in Definition~\ref{def:galerkin_approximation} additionally satisfying $P_n P_m = P_n$ for all $n \geq m$.
   Let $\hilbert_{\rm fin}$ be the subspace of $\hilbert$ defined in Eq.~\eqref{eq:hilbert_fin}, and $H_{\rm fin}$ be the restriction of $H$ to $\hilbert_{\rm fin}$.   
    Then $H_{\rm fin}$ is densely defined and symmetric, and for all $\psi\in\hilbert$ and $t\in\mathbb{R}$ we have
    \begin{equation}
        \lim_{n\to\infty}U_n(t)\psi=\tilde{U}(t)\psi,
    \end{equation}
where $\tilde{U}(t)$ is the unitary propagator generated by the Friedrichs extension $\tilde{H}_{\rm fin}$ of $H_{\rm fin}$, cf.~Definition~\ref{def:friedrichs_extension}.
\end{proposition}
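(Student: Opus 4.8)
The plan is to recognize that the sequence $(U_n(t))_{n\in\nnum}$ is simultaneously the Galerkin propagator of $H$ and of its Friedrichs-extended restriction $\tilde H_{\rm fin}$, and then to apply Proposition~\ref{prop:galerkin_criterion} to the latter operator. First I would dispense with the preliminary claims: $\hilbert_{\rm fin}=\bigcup_n P_n\hilbert$ is dense because $P_n\to\id$ strongly (so $P_n\psi\to\psi$ with $P_n\psi\in\hilbert_{\rm fin}$), and $H_{\rm fin}$ is symmetric and bounded from below by the same $\gamma$ as $H$, being a restriction of $H$; hence its Friedrichs extension $\tilde H_{\rm fin}$ exists by Definition~\ref{def:friedrichs_extension}. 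Using Remark~\ref{rem:waiving_coerciveness}, I may assume that $H$, and therefore $H_{\rm fin}$ and $\tilde H_{\rm fin}$, are coercive, since the shift $H\mapsto H-\gamma+1$ alters all dynamics only by a global phase and commutes both with truncation and with passing to the Friedrichs extension.

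The key structural observation is that truncating $H$ and truncating $\tilde H_{\rm fin}$ yield the very same operators. Indeed, each $\hilbert_n=P_n\hilbert$ is contained in $\hilbert_{\rm fin}=\domain(H_{\rm fin})$, and since $\tilde H_{\rm fin}$ extends $H_{\rm fin}=H|_{\hilbert_{\rm fin}}$, we have $\tilde H_{\rm fin}P_n=HP_n$ on $\hilbert_n$, whence $P_n\tilde H_{\rm fin}P_n=P_nHP_n=H_n$. Thus the propagators $U_n(t)=\e^{-\iu tH_n}$ are exactly the Galerkin propagators of $\tilde H_{\rm fin}$, and it suffices to prove that they converge strongly to $\tilde U(t)=\e^{-\iu t\tilde H_{\rm fin}}$. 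By Proposition~\ref{prop:galerkin_criterion} applied to the coercive self-adjoint operator $\tilde H_{\rm fin}$---whose domain contains every $\hilbert_n$, since $\hilbert_n\subset\hilbert_{\rm fin}\subset\domain(\tilde H_{\rm fin})$---this reduces to showing that the associated Galerkin projectors satisfy $Q_n\psi\to\psi$ for all $\psi\in\domain(\tilde H_{\rm fin})$.

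The heart of the argument, and the step I expect to be the main obstacle, is to identify $Q_n$ as the orthogonal projection onto $\hilbert_n$ with respect to the energy inner product, and then to invoke the defining property of the Friedrichs extension. Writing $q$ for the coercive form $\tilde q_{\tilde H_{\rm fin}}$ associated with $\tilde H_{\rm fin}$ via Proposition~\ref{prop:def-friedrich}, a short computation using $R_n=\hat H_n^{-1}\oplus 0_{\hilbert_n^\perp}$ and $P_n\phi=\phi$ for $\phi\in\hilbert_n$ shows that $Q_n\psi\in\hilbert_n$ is characterized by $q(\phi,Q_n\psi)=q(\phi,\psi)$ for all $\phi\in\hilbert_n$; that is, $Q_n$ is the $q$-orthogonal projection onto $\hilbert_n$. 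Now the form domain $\domain(q)$, equipped with the inner product $q$ (equivalently, with the norm $\|\cdot\|_+$), is a Hilbert space, and by the Friedrichs construction it is precisely the $\|\cdot\|_+$-closure of $\hilbert_{\rm fin}=\bigcup_n\hilbert_n$. Since the finite-dimensional subspaces $\hilbert_n$ are increasing (by the nesting hypothesis $P_nP_m=P_n$) with $\|\cdot\|_+$-dense union, the standard fact that orthogonal projections onto an increasing sequence of closed subspaces with dense union converge strongly to the identity gives $Q_n\psi\to\psi$ in $\|\cdot\|_+$, hence in $\|\cdot\|$, for every $\psi\in\domain(q)\supset\domain(\tilde H_{\rm fin})$.

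The subtlety to handle carefully is that $Q_n$ is a priori only defined on $\domain(\tilde H_{\rm fin})$, whereas the $q$-orthogonal projection lives naturally on the larger form domain $\domain(q)$; one must verify that the two genuinely coincide on $\domain(\tilde H_{\rm fin})$, and that $\hilbert_n\subset\domain(\tilde H_{\rm fin})$ so that $\hat H_n$ and its inverse are well-defined on the finite-dimensional space $\hilbert_n$. Once $Q_n\psi\to\psi$ on $\domain(\tilde H_{\rm fin})$ is established, Proposition~\ref{prop:galerkin_criterion} immediately yields $U_n(t)\psi\to\tilde U(t)\psi$ for all $\psi\in\hilbert$ and $t\in\rnum$, which is the claim.
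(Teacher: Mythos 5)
Your proposal is correct and follows essentially the same route as the paper: reduce to the coercive case by a shift (the paper isolates the compatibility of shifting with the Friedrichs extension as Lemma~\ref{lem:friedrich_plus_b}), observe that $P_nHP_n=P_n\tilde H_{\rm fin}P_n$, and then verify the hypothesis of Proposition~\ref{prop:galerkin_criterion} for $\tilde H_{\rm fin}$ by showing that $Q_n$ is the $q$-orthogonal (Galerkin) projection onto $\hilbert_n$ and that $\bigcup_n\hilbert_n$ is $\|\cdot\|_+$-dense in the form domain---which is exactly where the Friedrichs extension is singled out. The only cosmetic difference is the last step: you invoke the abstract fact that orthogonal projections onto an increasing sequence of closed subspaces with dense union converge strongly in the form Hilbert space, whereas the paper spells out the same conclusion via the quasi-optimality (Céa) estimate $\|\psi-Q_n\psi\|^2\leq\gamma^{-1}\|\psi-\phi_n\|_+^2$.
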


This proposition could be derived from~\cite[Theorem~2.12.6, p.~101]{miklavcic-appliedfunctionalanalysis-2001}, which deals with Galerkin approximations of a more general class of sectorial operators on a Hilbert space. However, such a derivation is not immediate and we provide a direct proof in Section~\ref{sec:proof-friedrichs}.

As anticipated in Section~\ref{sec:results_galerkin_informal}, one clearly sees that, in general, the dynamics induced by the Galerkin approximations of $H$ will not converge to $U(t)=\e^{-\iu tH}$, unless $H$ actually coincides with $\tilde{H}_{\rm fin}$. 
In agreement with Proposition~\ref{prop:convergence-core}, this is always the case if $H_{\mathrm{fin}}$ is essentially self-adjoint, i.e. $\hilbert_{\mathrm{fin}}$ is a core of $H$: in this case $H$ and $\tilde{H}_{\rm fin}$ are both self-adjoint extensions of the essentially self-adjoint operator $H_{\mathrm{fin}}$, and thus equal. In such cases, Galerkin's method always works. If, instead, $H_{\rm fin}$ has multiple---and thus, infinitely many---self-adjoint extensions, Galerkin's method will produce convergence to the dynamics generated by one privileged self-adjoint extension of $H_{\rm fin}$ (the Friedrichs one), which may or may not coincide with the operator $H$ whose dynamics we want to reproduce. In the following section we will examine a situation in which, indeed, $H_{\rm fin}$ has infinitely many self-adjoint extensions---so that the validity of Galerkin's method is at stake.

\begin{remark}\label{rem_nomenclature}
    When the projectors $P_n$ are obtained by an orthonormal basis $\Phi=(\phi_l)_{l\in\mathbb{N}}$ through
    \begin{equation}
        P_n=\sum_{l=0}^{n-1}\braket{\phi_l,\cdot}\phi_l,
    \end{equation}
    then the space $\hilbert_{\rm fin}$ as defined in Eq.~\eqref{eq:hilbert_fin} is simply $\operatorname{Span}\Phi$, the space of finite linear combinations of elements of the basis. This is exactly the scenario we will encounter in Section~\ref{sec:particle_results}, where, to stress the dependence on the choice of basis, the operator $H_{\rm fin}$ shall be simply denoted by $H_\Phi$.
\end{remark}

\section{Galerkin approximations of the particle in a box}\label{sec:particle_results}

 In this section we shall apply the results of the previous section to the quantum particle in a box. We will prove the existence of a class of orthonormal bases satisfying Dirichlet boundary conditions and possibly other boundary conditions, and having the following property: the dynamics generated by the corresponding Galerkin approximation always converge to the one corresponding to Dirichlet boundary conditions (Proposition~\ref{prop:legendre-dirichlet-general}); these include, as a notable example of ``boundary-blind'' bases, associated Legendre polynomials (Theorem~\ref{thm:legendre-dirichlet}).
 
 Furthermore, we show that such bases can be minimally modified in such a way that the dynamics generated by said approximations converge to the one corresponding to other boundary conditions---namely, $\alpha$-periodic boundary conditions (Proposition~\ref{prop:alpha-periodic})---despite again satisfying other boundary conditions.

\subsection{Preliminaries}
\label{sec:particle-in-a-box}
Following the discussion in Section~\ref{sec:results_box_informal}, we now set $\hilbert=L^2((-1,1))$, and we use the symbols $\braket{\cdot,\cdot}$, $\|\cdot\|$ to denote the standard $L^2$ scalar product and associated norm. In order to set up the notation---and for the convenience of readers not familiar with this language---we will recall here some known properties of Sobolev spaces, see e.g.~\cite{evans-partialdifferentialequations-2010,jost-partialdifferentialequations-2002,brezis2011functional,mazya-sobolevspacesapplications-2011}, and self-adjoint realizations of the Laplace operator in a one-dimensional interval; our results will be presented in Section~\ref{sec:dirichlet}--\ref{sec:alpha-periodic}.

\begin{definition}[Sobolev spaces]\label{def:sobolev_spaces}
    Let $k\in\mathbb{N}$. The Sobolev space of order $k$, $\mathrm{H}^k((-1,1))$, is defined by
\begin{equation}
    \mathrm{H}^k((-1,1))=\left\{\varphi\in L^2((-1,1)):\varphi^{(\alpha)}\in L^2((-1,1)),\;\alpha=1,\dots,k\right\},
\end{equation}
\end{definition}
Above, $\varphi^{(\alpha)}$ is the $\alpha$th (distributional) derivative of $\varphi$, which is well-defined since all functions in $L^2((-1,1))$ are locally integrable. For $\alpha=1,2$, we shall denote the corresponding derivatives by $\varphi'$ and $\varphi''$. Each of these spaces, equipped with the scalar product and associated norm defined by
\begin{equation}\label{eq:sobolev}
    \Braket{\psi,\varphi}_{\mathrm{H}^k}=\sum_{\alpha=0}^k\braket{\psi^{(\alpha)},\varphi^{(\alpha)}},\qquad \|\varphi\|_{\mathrm{H}^k}=\left(\sum_{\alpha=0}^k\|\varphi^{(\alpha)}\|^2\right)^{1/2},
\end{equation}
is a Hilbert space.

Recall that the elements of $L^2((-1,1))$ are, strictly speaking, equivalence classes of square-integrable functions that are almost everywhere equal, i.e., each class contains functions that differ in a set of zero Lebesgue measure. The following known result tells us that each element of $\mathrm{H}^1((-1,1))$ admits a unique continuous representative:

\begin{proposition}\label{prop:continuous_representative}\emph{\cite[Theorem~8.2]{brezis2011functional}}
    Let $\varphi\in\mathrm{H}^1((-1,1))$. Then there exists a unique continuous function $\tilde{\varphi}:[-1,1]\rightarrow\mathbb{C}$ that belongs to the equivalence class of $\varphi$, i.e., $\tilde{\varphi}=\varphi$ almost everywhere in $(-1,1)$.
\end{proposition}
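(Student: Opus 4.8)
The assertion is precisely the one-dimensional Sobolev embedding $\mathrm{H}^1((-1,1))\hookrightarrow C([-1,1])$, which the paper cites from Brezis; to reconstruct it I would produce the continuous representative explicitly as an indefinite integral of the weak derivative and then reduce everything to the fact that a function with vanishing distributional derivative is constant. The plan is as follows. Since the interval $(-1,1)$ is bounded, Cauchy--Schwarz gives $L^2((-1,1))\subset L^1((-1,1))$, so in particular the weak derivative $\varphi'\in L^2$ is integrable. I would then set
\[
    g(x):=\int_{-1}^{x}\varphi'(t)\,\mathrm{d}t,\qquad x\in[-1,1],
\]
which is well-defined and, being an indefinite Lebesgue integral of an $L^1$ function, absolutely continuous on $[-1,1]$, hence continuous.

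Next I would verify that the distributional derivative of $g$ coincides with $\varphi'$. For an arbitrary test function $\phi\in C_c^\infty((-1,1))$ this follows from a direct computation: writing out $\int g\,\phi'$ and applying Fubini's theorem to exchange the order of integration produces exactly $-\int\varphi'\phi$. Consequently $\varphi-g$ has vanishing distributional derivative on $(-1,1)$, and the whole claim is reduced to showing that such a function is almost everywhere constant.

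This reduction is the technical heart of the argument, and I expect it to be the main obstacle. The standard route fixes a bump $\eta\in C_c^\infty((-1,1))$ with $\int\eta=1$ and, given any test function $\phi$, expresses $\phi-\big(\int\phi\big)\eta$ as the derivative of a compactly supported smooth function (its primitive has zero integral, hence vanishes near both endpoints). Pairing $\varphi-g$ against this derivative and invoking the vanishing-derivative hypothesis forces $\int(\varphi-g)\phi=\big(\int\phi\big)\,c$ with $c:=\int(\varphi-g)\eta$, so that $\varphi-g=c$ almost everywhere. Setting $\tilde{\varphi}:=g+c$ then yields a continuous function equal to $\varphi$ almost everywhere in $(-1,1)$.

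Finally, uniqueness is immediate: if two continuous functions on $[-1,1]$ agree almost everywhere, the set on which they agree is dense, since its complement has Lebesgue measure zero, and by continuity they must therefore coincide at every point. Apart from the vanishing-derivative lemma, every ingredient is elementary---Cauchy--Schwarz on a bounded interval, absolute continuity of the Lebesgue integral, and a density argument---so the effort concentrates entirely on that lemma.
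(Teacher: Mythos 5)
Your proposal is correct. Note that the paper does not prove this statement at all---it is quoted verbatim from \cite[Theorem~8.2]{brezis2011functional} as a standard fact about one-dimensional Sobolev functions---so there is no in-paper argument to compare against; your reconstruction (indefinite integral of the weak derivative, Fubini to identify its distributional derivative, the vanishing-derivative lemma via a unit-mass bump function, and the density argument for uniqueness) is precisely the proof given in the cited reference and is complete as sketched.
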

As a consequence, each element of $\mathrm{H}^1((-1,1))$ (and so for higher-order Sobolev spaces) can be identified, with a slight abuse of notation, with a continuous function. In the following we will not distinguish between $\varphi$ and its continuous representative; this will allow us to write expressions like $\varphi(x)$ for $x\in[-1,1]$, including the two boundary points $\pm1$. We can thus provide the following additional definition:
\begin{definition}\label{def:sobolev_h10}
    The space $\mathrm{H}^1_0((-1,1))$ is defined by
    \begin{equation}
        \mathrm{H}^1_0((-1,1))=\left\{\varphi\in\mathrm{H}^1((-1,1)):\varphi(\pm1)=0\right\}.
    \end{equation}
\end{definition}
This is a closed subspace of $\mathrm{H}^1((-1,1))$, corresponding to the closure of the space of differentiable and compactly supported functions with respect to the norm of $\mathrm{H}^1((-1,1))$~\cite[Theorem~8.12]{brezis2011functional}, and is thus itself a Hilbert space. Functions in this space satisfy an inequality which will turn useful for our purposes:
\begin{proposition}[Poincaré's inequality]
    \label{prop:poincare}
    Let $\psi \in \mathrm{H}^{1}_0((-1,1))$. Then 
    \begin{equation}
        \norm{\psi} \leq \norm{\psi'} \, .
    \end{equation}
\end{proposition}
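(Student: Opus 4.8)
The plan is to exploit the Dirichlet boundary conditions $\psi(\pm1)=0$ together with the fundamental theorem of calculus, which applies to $\mathrm{H}^1$ functions via the continuous (indeed absolutely continuous) representative guaranteed by Proposition~\ref{prop:continuous_representative}. The key idea for obtaining the clean constant $1$---rather than the cruder $\sqrt{2}$ produced by a one-sided estimate---is to represent $\psi(x)$ as an integral of $\psi'$ starting from \emph{either} endpoint and to retain whichever bound is sharper at each point $x$.

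First I would write, for the continuous representative of $\psi$,
\begin{equation}
    \psi(x) = \int_{-1}^x \psi'(s)\,\mathrm{d}s = -\int_x^1 \psi'(s)\,\mathrm{d}s,
\end{equation}
where both identities use $\psi(\pm1)=0$. Applying the Cauchy--Schwarz inequality to each representation then gives
\begin{equation}
    |\psi(x)|^2 \leq (x+1)\int_{-1}^x |\psi'(s)|^2\,\mathrm{d}s \leq (x+1)\,\norm{\psi'}^2
\end{equation}
and, symmetrically, $|\psi(x)|^2 \leq (1-x)\,\norm{\psi'}^2$. Retaining the smaller of the two at each point yields the pointwise bound $|\psi(x)|^2 \leq \min(x+1,\,1-x)\,\norm{\psi'}^2$.

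Finally I would integrate this estimate over $(-1,1)$. A direct computation (splitting the integral at $x=0$, where each half contributes $1/2$) gives $\int_{-1}^1 \min(x+1,\,1-x)\,\mathrm{d}x = 1$, whence $\norm{\psi}^2 \leq \norm{\psi'}^2$, which is the claim.

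The main subtlety---rather than a genuine obstacle---is exactly this factor-of-two improvement: a naive estimate using only the left endpoint gives $|\psi(x)|^2 \leq (x+1)\,\norm{\psi'}^2$ and hence merely $\norm{\psi}\leq\sqrt{2}\,\norm{\psi'}$. Using both boundary values and taking the pointwise minimum is precisely what sharpens the constant to $1$. I note that even this is not optimal---the sharp constant, governed by the lowest Dirichlet eigenvalue $\pi^2/4$ of $-\mathrm{d}^2/\mathrm{d}x^2$ on $(-1,1)$, is $2/\pi<1$---but the stated bound $\norm{\psi}\leq\norm{\psi'}$ suffices for our purposes and follows by these elementary means.
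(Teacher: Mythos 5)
Your proof is correct and follows essentially the same route as the paper's: both exploit $\psi(\pm1)=0$ via the fundamental theorem of calculus and Cauchy--Schwarz, splitting the interval at $x=0$ so that each half has length $1$ and contributes no extra constant. The only cosmetic difference is that you carry the pointwise weight $\min(x+1,\,1-x)$ and integrate it, whereas the paper bounds $\sup|\psi|$ separately on each half-interval; both yield the stated constant $1$.
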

For completeness, we provide an explicit proof here. Analogous inequalities also hold in higher dimensions, see e.g.~\cite[p.~166]{jost-partialdifferentialequations-2002}.
\begin{proof}
    As $\psi\in\mathrm{H}^1((-1,1))$ and $\psi(-1)=0$, we have
    \begin{align}
        \psi(x) & = \int_{-1}^x \psi'(s) \dl s \qquad \forall x \in (-1,0],
    \end{align}
    whence we get
    \begin{align}\label{eq:sup}
        \sup_{x\in(-1,0]}|\psi(x)|\leq\int_{-1}^0|\psi'(s)|\;\mathrm{d}s,
    \end{align}
    and therefore
    \begin{align}\label{eq:ineq-left}
        \int_{-1}^0|\psi(s)|^2\;\mathrm{d}s&\leq\sup_{x\in(-1,0]}|\psi(x)|^2\int_{-1}^0\;\mathrm{d}s\nonumber\\
        &\leq\left(\int_{-1}^0|\psi'(s)|\;\mathrm{d}s\right)^2\nonumber\\
        &\leq\int_{-1}^0|\psi'(s)|^2\;\mathrm{d}s\;\int_{-1}^0\mathrm{d}s'
        =\int_{-1}^0|\psi'(s)|^2\;\mathrm{d}s,
    \end{align}
    where we used Eq.~\eqref{eq:sup} and, in the last step, the Cauchy--Schwarz inequality. Since we also have $\psi(+1)=0$, repeating the same procedure in $[0,1)$ yields
    \begin{align}\label{eq:ineq-right}
        \int_{0}^1|\psi(s)|^2\;\mathrm{d}s\leq\int_{0}^1|\psi'(s)|^2\;\mathrm{d}s.
    \end{align}
    Combining Eqs.~\eqref{eq:ineq-left}--\eqref{eq:ineq-right} yields the desired inequality.
\end{proof}
Notice that, as a direct result of this inequality, the function $\psi\mapsto\|\psi'\|$ defines a norm on $\mathrm{H}^1_0((-1,1))$ equivalent to $\|\cdot\|_{\mathrm{H}^1}$.

We now turn to the realizations of the Laplace operator on $L^2((-1,1))$, reprising the discussion in Section~\ref{sec:results_box_informal}. Let us recall some known facts:

\begin{proposition}\label{prop:basic_facts_laplacian}
    For every $2\times2$ unitary matrix $W\in\mathrm{U}(2)$, the operator $H_W:\domain(H_W)\subset L^2((-1,1))\rightarrow L^2((-1,1))$ defined by
\begin{align}
\label{eq:hamiltonian-boundaries}
    \domain(H_W)&=\left\{\varphi\in\mathrm{H}^2((-1,1)): \iu(\id+W)\begin{pmatrix}
        -\varphi'(-1)\\\varphi'(+1)
    \end{pmatrix}=(\id-W)
    \begin{pmatrix}
        \varphi(-1)\\\varphi(+1)
    \end{pmatrix}\, 
    \right\};\\
    H_W\varphi&=-\varphi''
\end{align}
 is self-adjoint, and has a purely discrete spectrum bounded from below.
\end{proposition}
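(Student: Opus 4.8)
\emph{Proof plan.} The statement is the classical fact that the boundary conditions \eqref{eq:bc} exhaust the self-adjoint realisations of $-\mathrm{d}^2/\mathrm{d}x^2$ on the interval, and I would prove it by combining the boundary-form (Lagrange) identity with the unitarity of $W$, and then establish semiboundedness and discreteness of the spectrum by form-theoretic and compactness arguments. The starting point is to introduce the minimal operator $H_{\min}$, i.e.\ $-\varphi''$ on $\{\varphi\in\mathrm{H}^2((-1,1)):\varphi(\pm1)=\varphi'(\pm1)=0\}$, and the maximal operator $H_{\max}$, i.e.\ $-\varphi''$ on all of $\mathrm{H}^2((-1,1))$; these are mutually adjoint, and $H_{\min}\subset H_W\subset H_{\max}$. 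Integrating by parts twice yields the Lagrange identity
\begin{equation}
\braket{H_{\max}\psi,\varphi}-\braket{\psi,H_{\max}\varphi}=\braket{\Psi,\Phi_N}_{\cnum^2}-\braket{\Psi_N,\Phi}_{\cnum^2},
\end{equation}
where $\Phi=(\varphi(-1),\varphi(+1))^\top$ and $\Phi_N=(-\varphi'(-1),\varphi'(+1))^\top$ collect the boundary data (and likewise $\Psi,\Psi_N$ for $\psi$). The crucial reformulation is that, setting $u_\varphi=\Phi+\iu\Phi_N$ and $v_\varphi=\Phi-\iu\Phi_N$, the boundary condition \eqref{eq:bc} is equivalent to $v_\varphi=Wu_\varphi$, and a short computation recasts the right-hand side above as a nonzero multiple of $\braket{u_\psi,u_\varphi}_{\cnum^2}-\braket{v_\psi,v_\varphi}_{\cnum^2}$.

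Symmetry of $H_W$ is then immediate: if $\psi,\varphi\in\domain(H_W)$ then $v_\psi=Wu_\psi$ and $v_\varphi=Wu_\varphi$, and unitarity gives $\braket{v_\psi,v_\varphi}_{\cnum^2}=\braket{Wu_\psi,Wu_\varphi}_{\cnum^2}=\braket{u_\psi,u_\varphi}_{\cnum^2}$, so the Lagrange form vanishes. For self-adjointness I would argue that, since $H_{\min}\subset H_W\subset H_{\max}$, the adjoint $H_W^*$ again acts as $-\varphi''$ on a subspace of $\mathrm{H}^2$, and $\psi\in\domain(H_W^*)$ if and only if the Lagrange form vanishes for every $\varphi\in\domain(H_W)$, i.e.\ $\braket{u_\psi,u_\varphi}_{\cnum^2}=\braket{v_\psi,Wu_\varphi}_{\cnum^2}$ for all such $\varphi$. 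Here I would invoke surjectivity of the boundary-data (trace) map: for any prescribed $u_0\in\cnum^2$ one can build an $\mathrm{H}^2$ function whose data satisfy $v=Wu=Wu_0$, so $u_\varphi$ ranges over all of $\cnum^2$; the identity above then forces $u_\psi=W^*v_\psi$, i.e.\ $v_\psi=Wu_\psi$, whence $\psi\in\domain(H_W)$ and $\domain(H_W^*)\subseteq\domain(H_W)$. I expect this step---pinning down $\domain(H_W^*)$ and converting the vanishing of the boundary form into the pointwise condition via surjectivity of the trace map---to be the main obstacle, as it is where the finite-dimensional linear algebra of $W$ must be matched with the trace theory of Sobolev functions.

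For semiboundedness I would use one integration by parts to write $\braket{\varphi,H_W\varphi}=\norm{\varphi'}^2-\braket{\Phi,\Phi_N}_{\cnum^2}$ for $\varphi\in\domain(H_W)$. The boundary term is real and, using the boundary condition, is controlled by $|\varphi(-1)|^2+|\varphi(+1)|^2$ alone: decomposing $\cnum^2$ along $\ker(\id+W)$ (where the condition forces $\Phi=0$, Dirichlet-type) and its orthogonal complement (where $\id+W$ is invertible and $\Phi_N=M\Phi$ with $M$ Hermitian, Robin-type), one finds $\braket{\Phi,\Phi_N}_{\cnum^2}=\braket{\Phi,M'\Phi}_{\cnum^2}$ for a Hermitian $M'$, whence $|\braket{\Phi,\Phi_N}_{\cnum^2}|\le C_W\bigl(|\varphi(-1)|^2+|\varphi(+1)|^2\bigr)$. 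Combining this with the trace/interpolation inequality $|\varphi(\pm1)|^2\le\varepsilon\norm{\varphi'}^2+C_\varepsilon\norm{\varphi}^2$, valid for $\mathrm{H}^1$ functions on a bounded interval, and choosing $\varepsilon$ small, gives $\braket{\varphi,H_W\varphi}\ge(1-C_W\varepsilon)\norm{\varphi'}^2-C_WC_\varepsilon\norm{\varphi}^2\ge\gamma\norm{\varphi}^2$ for a suitable $\gamma\in\rnum$, so $H_W$ is bounded from below.

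Finally, discreteness of the spectrum follows from compactness of the resolvent. Since $\domain(H_W)\subseteq\mathrm{H}^2((-1,1))$ and the elliptic-type estimate $\norm{\varphi}_{\mathrm{H}^2}\le C\bigl(\norm{H_W\varphi}+\norm{\varphi}\bigr)$ holds, the resolvent $(H_W-\lambda)^{-1}$ maps $\hilbert$ boundedly into $\mathrm{H}^2((-1,1))$ for $\lambda$ in the resolvent set; composing with the compact Rellich embedding $\mathrm{H}^2((-1,1))\hookrightarrow L^2((-1,1))$ shows the resolvent is compact. Hence the spectrum is purely discrete and, by the previous step, bounded from below.
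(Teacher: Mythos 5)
Your proposal is correct, but it is worth noting that the paper does not actually prove this proposition: it simply cites the literature (Bonneau et al., Asorey et al.) and attributes the result to von Neumann's theory of self-adjoint extensions, both for self-adjointness and for the discreteness and semiboundedness of the spectrum. What you supply is a genuine, essentially self-contained argument in the boundary-triple style: the reformulation of Eq.~\eqref{eq:bc} as $v_\varphi=Wu_\varphi$ with $u_\varphi=\Phi+\iu\Phi_N$, $v_\varphi=\Phi-\iu\Phi_N$ is exactly right (one checks $\iu(\id+W)\Phi_N=(\id-W)\Phi\Leftrightarrow W(\Phi+\iu\Phi_N)=\Phi-\iu\Phi_N$), the Lagrange form does equal $\tfrac{1}{2\iu}\bigl(\braket{u_\psi,u_\varphi}_{\cnum^2}-\braket{v_\psi,v_\varphi}_{\cnum^2}\bigr)$, and the surjectivity of the trace map (realized, e.g., by cubic polynomials) correctly converts the vanishing of the boundary form into $u_\psi=W^*v_\psi$, pinning down $\domain(H_W^*)=\domain(H_W)$. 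Your semiboundedness argument is also sound: the decomposition along $\ker(\id+W)$ works because the component of $\Phi$ in $\ker(\id+W)$ vanishes while the unconstrained part of $\Phi_N$ lies in $\ker(\id+W)$ and is therefore orthogonal to $\Phi$, so $\braket{\Phi,\Phi_N}_{\cnum^2}$ is indeed a Hermitian quadratic form in the boundary values, and the $\varepsilon$-trace inequality closes the estimate; compactness of the resolvent via the $\mathrm{H}^2$ graph-norm bound and Rellich then gives discreteness. The trade-off is the usual one: the paper's citation route is shorter and leans on the general von Neumann classification, while your direct computation makes transparent exactly where unitarity of $W$ enters (isometry of the boundary map $u\mapsto v$) and yields the semibounded constant explicitly in terms of $W$; the latter is more work but is the version one would want if quantitative control of $\gamma$ were needed later.
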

\begin{proof}
   Self-adjointness of $H_W$ is a standard result, see e.g.~\cite{bonneau-selfadjointextensionsoperators-2001,asorey-dynamicalcompositionlaw-2013,asorey-globaltheoryquantum-2005a}, which comes from von Neumann's theory of self-adjoint extensions. For the spectrum in the general case, 
   see e.g.~\cite{bonneau-selfadjointextensionsoperators-2001}.
\end{proof}
 Each operator $H_W$ is a distinct self-adjoint realization of the Laplace operator, corresponding to a specific choice of \textit{boundary conditions} to be satisfied by all functions in their domain; as such, each generates a distinct unitary group $U_W(t)=\e^{-\iu tH_W}$, cf.~Example~\ref{ex:dirichlet_vs_neumann}.
 For our purposes, let us introduce some specific kinds of self-adjoint realizations of the Laplace operator.
 \begin{definition}\label{def:particular_bcs}
     We define the following boundary conditions:
     \begin{equation}
\begin{array}{rll}
    \text{Dirichlet b.c.:}\quad & W=-\id &\Rightarrow \varphi(-1) = \varphi(1) = 0;\\
    \text{Neumann b.c.:}\quad & W=+\id &\Rightarrow \varphi'(-1) = \varphi'(1) = 0;\\
    \text{$\alpha$-periodic b.c.:}\quad &  W=W(\alpha) &\Rightarrow \varphi(-1) = \e^{\iu\alpha}\varphi(1), \;\varphi'(-1) = \e^{\iu\alpha}\varphi'(1),
\end{array}
\end{equation}
where $0\leq\alpha<2\pi$ and
\begin{equation} \label{eq:w}
    W(\alpha)=\begin{pmatrix}
        0&\e^{-\iu\alpha}\\ \e^{\iu\alpha}&0
    \end{pmatrix}.
\end{equation}
In particular, $\alpha$-periodic boundary conditions corresponding to $\alpha=0$ and $\alpha=\pi$ are respectively denoted as \textit{periodic} and \textit{antiperiodic} boundary conditions.
\end{definition}
\begin{example}[Eigenvalues and eigenvectors of different boundary conditions]
\label{ex:bc-eigenfunctions}
   We recall that, for the three particular boundary conditions as per Definition~\ref{def:particular_bcs}, one can explicitly compute eigenvalues and eigenvectors:
    \begin{equation}
        \begin{array}{rlll}
            \text{Dirichlet b.c.:} & E_j = \frac{j^2 \pi^2}{4} & \psi_j(x) = \sin\left(\frac{j\pi}{2}(x+1)\right) & j \geq 1;\\
            \text{Neumann b.c.:} & E_j = \frac{j^2 \pi^2}{4} & \psi_j(x) = \cos\left(\frac{j\pi}{2}(x+1)\right) & j \geq 0;\\
            \text{Periodic b.c.:} & E_j = j^2 \pi^2 & \psi_j(x) = \frac{1}{\sqrt{2}}\e^{\iu j \pi (x+1)} & j \in \znum.
        \end{array}
    \end{equation}
\end{example}
 \begin{example}[Evolution induced by distinct boundary conditions]\label{ex:dirichlet_vs_neumann}
    In order to stress how drastically the choice of boundary conditions can influence the dynamics, let us compare the evolutions corresponding to Dirichlet and periodic boundary conditions, cf.~Definition~\ref{def:particular_bcs}. For definiteness, let us adopt the self-explanatory notations $H_{\rm Dir},H_{\rm per}$ for the two Hamiltonians, and $U_{\rm Dir}(t),U_{\rm per}(t)$ for the corresponding unitary propagators. Given $\psi_0\in L^2((-1,1))$, let
    \begin{equation}
        \psi_{\mathrm{Dir}}(x;t) = (U_{\mathrm{Dir}}(t)\psi_0)(x),\qquad  \psi_{\mathrm{per}}(x;t) = (U_{\mathrm{per}}(t)\psi_0)(x).
    \end{equation}
    For times $t = \frac{p}{q} \frac{4}{\pi}$ with rational prefactors $\frac{p}{q}$, $p,q \in \nnum$, $\psi_{\mathrm{Dir}}(x;t)$ is explicitly calculated in~\cite{berry-quantumfractalsboxes-1996}.
In particular, at $t = 4/ \pi$, the time evolutions are given by 
\begin{equation}
    \psi_{\mathrm{Dir}}(x;4/ \pi) = -\psi_0(-x) \quad \quad \psi_{\mathrm{per}}(x;4/ \pi) = \psi_0(x) \, ,
\end{equation}
see Appendix~\ref{sec:app-exact-timeevolution} for an explicit proof. 
As such, 
the time evolutions of an initially symmetric wavefunction $\psi_0(-x) = \psi_0(x)$ at $t = 4/\pi$
are maximally different, that is, 
$\norm*{\psi_{\mathrm{Dir}}(x;4/ \pi)-\psi_{\mathrm{per}}(x;4/ \pi)} = 2$, cf.~Corollary~\ref{corr:app-dirichlet-periodic-maxdifferent}.
The two processes are visually depicted in Figure~\ref{fig:dir_periodic_sin}.
This result agrees with our intuition: hard walls reflect the particle, causing the wavefunction to undergo a phase change of $\pi$ and being mirrored, which eventually results in $\psi_{\mathrm{Dir}}(x;4/ \pi) = -\psi_0(-x)$.
Instead, at the same time, the particle in a periodic box returns to its original position $\psi_{\mathrm{per}}(x;4/ \pi) = \psi_0(x)$. 
\begin{figure}[ht]
    \centering
    \includegraphics[width=0.65\linewidth]{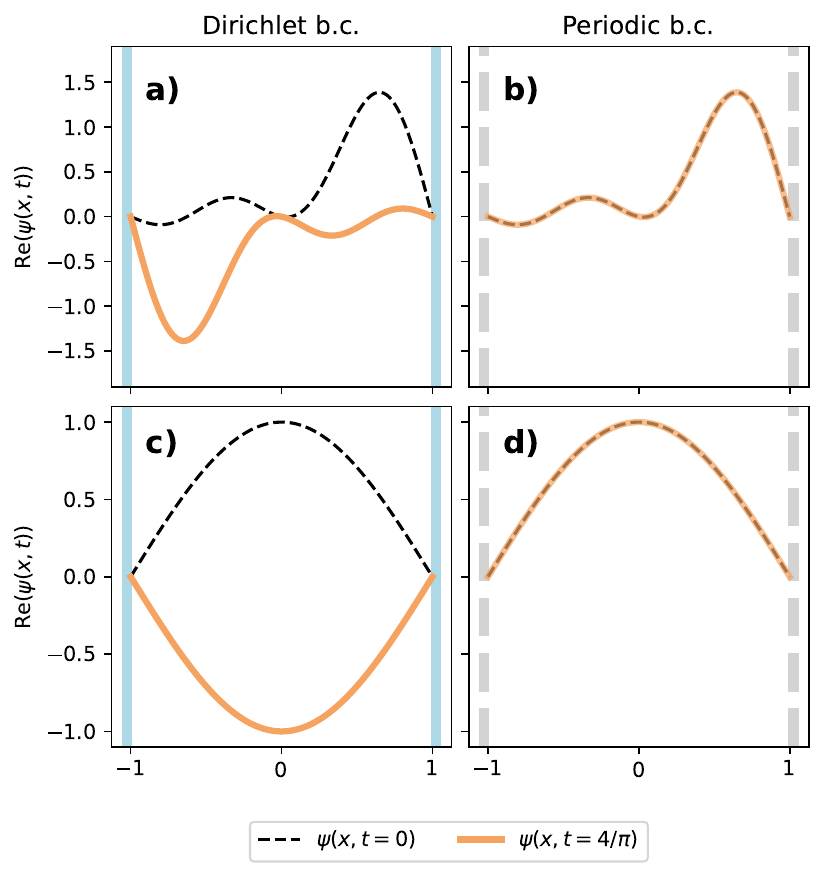}
    \caption{Time evolution of an arbitrary initial wavefunction $\psi_0(x)$ (\textbf{a)} and \textbf{b)}) and the specific symmetric initial wavefunction $\psi_0(x) = \sin(\pi/2(x+1))$  (\textbf{c)} and \textbf{d)}) in a box with hard walls (\textbf{a)} and \textbf{c)}) and with periodic boundary conditions (\textbf{b)} and \textbf{d)}).
    All functions are evaluated at $t = 4/ \pi$, and the imaginary part of all wavefunctions is zero.}
\label{fig:dir_periodic_sin}
\end{figure}
\end{example}

\begin{remark}\label{rem:dirichlet}
     We note that the domain of the Dirichlet Laplacian $H_{\rm Dir}$, cf.~Definition~\ref{def:particular_bcs}, corresponds to
     \begin{equation}
         \domain(H_{\rm Dir})=\mathrm{H}^1_0((-1,1))\cap\mathrm{H}^2((-1,1)),
     \end{equation}
     since $\mathrm{H}^2((-1,1))\subset\mathrm{H}^1((-1,1))$ and the boundary condition $\varphi(\pm1)=0$ corresponds precisely to the one defining the subspace $\mathrm{H}^1_0((-1,1))$ of $\mathrm{H}^1((-1,1))$, cf.~Definition~\ref{def:sobolev_h10}.
 \end{remark}

\subsection{Main results: convergence to the Dirichlet Laplacian}\label{sec:dirichlet}

Given $W\in\mathrm{U}(2)$, we now consider Galerkin approximations $H_{n,W} = P_n H_W P_n$ of the corresponding realization $H_W$ of the Laplacian. Given a complete orthonormal set $(\phi_l)_{l \in \nnum}$ of $\hilbert$, we define $P_n$ as the projector onto the $n$-dimensional vector space spanned by the first $n$ vectors of the basis,
\begin{equation}\label{eq:proj_by_basis}
    P_n=\sum_{l=0}^{n-1}\braket{\phi_l,\cdot}\phi_l.
\end{equation}
The first result we present is a sufficient condition under which Galerkin approximations of the Laplace operator corresponding to some basis $(\phi_l)_{l\in\mathbb{N}}$ converge to the Dirichlet Laplacian---regardless whether they also satisfie other boundary conditions. We start by the following crucial property:
\begin{proposition}
\label{prop:friedrichs-dirichlet}
    Let $\Phi=(\phi_l)_{l\in\mathbb{N}}$ be a complete orthonormal set of $L^2((-1,1))$ satisfying the following additional properties:
    \begin{itemize}
        \item[(i)] $\Phi\subset\mathrm{H}^1_0((-1,1))\cap\mathrm{H}^2((-1,1))$;
        \item[(ii)] $\overline{\operatorname{Span}\Phi'}=\{1\}^\perp$,
    \end{itemize}
    where $\Phi'=(\phi'_l)_{l\in\mathbb{N}}$, and $1$ denotes the function with constant value $1$. Then the symmetric operator $H_\Phi$ defined by
    \begin{equation}
        \domain (H_\Phi)=\operatorname{Span}\Phi,\qquad H_\Phi\psi=-\psi''
    \end{equation}
    has a Friedrichs extension equal to the Dirichlet Laplacian $H_{\mathrm{Dir}}$ (cf.~Remark~\ref{rem:dirichlet}):
    \begin{equation}
        \domain (H_{\mathrm{Dir}})=\mathrm{H}^1_0((-1,1))\cap\mathrm{H}^2((-1,1)),\qquad H_{\mathrm{ Dir}}\psi=-\psi''.
    \end{equation}
\end{proposition}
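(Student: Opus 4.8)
The plan is to use the characterization of the Friedrichs extension furnished by Proposition~\ref{prop:def-friedrich}: it is the \emph{unique} self-adjoint extension of $H_\Phi$ whose operator domain is contained in the form domain $\mathcal{D}(\tilde q)$, where $\mathcal{D}(\tilde q)$ is the closure of $\operatorname{Span}\Phi$ with respect to the form norm $\|\cdot\|_+$. First I would dispatch the routine preliminaries. The operator $H_\Phi$ is densely defined because $\Phi$ is a complete orthonormal set, and it is symmetric and bounded from below: integrating by parts and using $\psi(\pm1)=0$ for $\psi\in\operatorname{Span}\Phi\subset\mathrm{H}^1_0$ (property (i)), the boundary terms vanish and $\braket{\psi,H_\Phi\psi}=\|\psi'\|^2\geq0$, so one may take $\gamma=0$. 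With this choice the form norm of Eq.~\eqref{eq:plusnorm} becomes $\|\varphi\|_+^2=\|\varphi'\|^2+\|\varphi\|^2=\|\varphi\|_{\mathrm{H}^1}^2$. Hence the form domain is exactly the $\mathrm{H}^1$-closure $\overline{\operatorname{Span}\Phi}^{\,\mathrm{H}^1}$ of $\operatorname{Span}\Phi$.

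The core of the argument, and the step I expect to be the main obstacle, is to prove that this closure equals $\mathrm{H}^1_0((-1,1))$. The inclusion $\overline{\operatorname{Span}\Phi}^{\,\mathrm{H}^1}\subseteq\mathrm{H}^1_0$ is free, since $\operatorname{Span}\Phi\subset\mathrm{H}^1_0$ and $\mathrm{H}^1_0$ is $\mathrm{H}^1$-closed; the real content is the reverse density, and the idea is to transfer it to the level of derivatives. The map $\psi\mapsto\psi'$ is a bijection of $\mathrm{H}^1_0$ onto $\{1\}^\perp=\{g\in L^2:\braket{1,g}=0\}$, with inverse $g\mapsto\int_{-1}^{\cdot}g$: indeed $\psi\in\mathrm{H}^1_0$ forces $\braket{1,\psi'}=\int_{-1}^1\psi'=\psi(1)-\psi(-1)=0$, while conversely $g\in\{1\}^\perp$ yields $\int_{-1}^{\cdot}g\in\mathrm{H}^1_0$. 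Now fix $\psi\in\mathrm{H}^1_0$; then $\psi'\in\{1\}^\perp$, and property (ii) supplies a sequence $\chi_k\in\operatorname{Span}\Phi$ with $\chi_k'\to\psi'$ in $L^2$. Since $\chi_k-\psi\in\mathrm{H}^1_0$, Poincaré's inequality (Proposition~\ref{prop:poincare}) gives $\|\chi_k-\psi\|\leq\|\chi_k'-\psi'\|\to0$, whence $\|\chi_k-\psi\|_{\mathrm{H}^1}^2=\|\chi_k-\psi\|^2+\|\chi_k'-\psi'\|^2\to0$. Therefore $\operatorname{Span}\Phi$ is $\mathrm{H}^1$-dense in $\mathrm{H}^1_0$, and the form domain is precisely $\mathrm{H}^1_0((-1,1))$.

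It remains to identify the extension. By property (i), $\operatorname{Span}\Phi\subset\mathrm{H}^1_0\cap\mathrm{H}^2=\domain(H_{\mathrm{Dir}})$ (Remark~\ref{rem:dirichlet}), and $H_{\mathrm{Dir}}$ acts there as $-\psi''=H_\Phi\psi$; hence $H_{\mathrm{Dir}}$ is a self-adjoint extension of $H_\Phi$. Moreover its operator domain obeys $\domain(H_{\mathrm{Dir}})=\mathrm{H}^1_0\cap\mathrm{H}^2\subseteq\mathrm{H}^1_0=\mathcal{D}(\tilde q)$. The uniqueness clause of Proposition~\ref{prop:def-friedrich}—there is exactly one self-adjoint extension whose domain lies inside the form domain, and it is the Friedrichs extension—then forces $H_{\mathrm{Dir}}$ to be the Friedrichs extension of $H_\Phi$, which is the desired conclusion. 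The only delicate point throughout is the density step; everything else is bookkeeping built on the form computation and the derivative bijection.
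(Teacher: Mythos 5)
Your proposal is correct and follows essentially the same route as the paper: identify the form norm with the $\mathrm{H}^1$ norm, prove $\overline{\operatorname{Span}\Phi}^{\|\cdot\|_+}=\mathrm{H}^1_0((-1,1))$ using property (ii) together with Poincar\'e's inequality, and conclude by the uniqueness of the self-adjoint extension whose domain lies in the form domain. The only (harmless) deviation is that you obtain the inclusion $\overline{\operatorname{Span}\Phi}^{\|\cdot\|_+}\subseteq\mathrm{H}^1_0$ by citing the closedness of $\mathrm{H}^1_0$ in $\mathrm{H}^1$, whereas the paper reproves it directly via continuous representatives and an almost-everywhere convergent subsequence.
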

This result is proven in Section~\ref{sec:proof_prof_friedrichs-dirichlet}.

\begin{remark}
    Notice that $\Phi'=(\phi'_l)_{l\in\mathbb{N}}$ is a legitimate set in $L^2((-1,1))$ since each $\phi_l$ is in the first Sobolev space; besides, each $\phi'_l$ satisfies
    \begin{equation}
        \braket{1,\phi'_l}=\int_{-1}^{+1}\phi'_l(x)\;\mathrm{d}x=\phi(1)-\phi(-1)=0,
    \end{equation}
    whence the inclusion $\overline{\operatorname{Span}\Phi'}\subset\{1\}^\perp$ already follows from (i) as direct consequence of Dirichlet boundary conditions; the converse inclusion is, instead, to be assumed.
\end{remark}

\begin{remark}
The fact that the Dirichlet realization of the Laplacian corresponds to the Friedrichs extension of the Laplacian initially defined on $\operatorname{Span}\Phi$ mirrors---but is not implied by---the known fact that the Dirichlet Laplacian also coincides with the Friedrichs extension of the Laplacian when initially defined on the space $C^\infty_0((-1,1))$ of smooth compactly supported functions, see e.g.~\cite[Example 1]{reed-mmmp2-fourier-1975}.
\end{remark}

 \begin{proposition}
        \label{prop:legendre-dirichlet-general}
        Let $\Phi=(\phi_l)_{l\in\mathbb{N}}$ be an orthonormal basis of $L^2((-1,1))$ satisfying the following conditions:
        \begin{itemize}
        \item[(i)] $\Phi\subset\mathrm{H}^1_0((-1,1))\cap\mathrm{H}^2((-1,1))$;
        \item[(ii)] $\overline{\operatorname{Span}\Phi'}=\{1\}^\perp$,
    \end{itemize}
    and let $H_n$ be the $n$th corresponding Galerkin approximation of the Laplace operator, cf.~Definition~\ref{def:galerkin_approximation}. Then, for all $\psi\in L^2((-1,1))$ and $t\in\mathbb{R}$,
        \begin{equation}
            \lim_{n \rightarrow \infty}\norm{\left(U_n(t)-U_{\mathrm{Dir}}(t)\right)\psi} = 0 \,
        \end{equation}
        where $U_{\mathrm{Dir}}(t)=\e^{-\iu tH_{\rm Dir}}$.
    \end{proposition}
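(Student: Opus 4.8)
The plan is to obtain \Cref{prop:legendre-dirichlet-general} as a direct composition of the two substantial results already established: the abstract convergence statement \Cref{prop:friedrichs}, which identifies the strong limit of the Galerkin dynamics with the propagator generated by the Friedrichs extension of the restricted operator, and \Cref{prop:friedrichs-dirichlet}, which computes that Friedrichs extension to be precisely the Dirichlet Laplacian under hypotheses (i) and (ii). Once the right self-adjoint operator is fed into \Cref{prop:friedrichs} and the relevant domains are matched up, the two results glue together with no further analysis.

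First I would choose the ambient self-adjoint operator to be $H=H_{\rm Dir}$. This is legitimate: by \Cref{prop:basic_facts_laplacian} it is self-adjoint with purely discrete spectrum bounded from below (indeed coercive, with eigenvalues $E_j=j^2\pi^2/4$, cf.\ \Cref{ex:bc-eigenfunctions}), so the semiboundedness hypothesis of \Cref{prop:friedrichs} is met. Next I would check the conditions on the projectors $P_n$ of Eq.~\eqref{eq:proj_by_basis}: they are finite-dimensional; since $\Phi$ is a complete orthonormal set they converge strongly to the identity; being orthogonal projections onto the nested subspaces $\operatorname{Span}\{\phi_0,\dots,\phi_{n-1}\}$ they satisfy the compatibility relation between the $P_n$ required in \Cref{prop:friedrichs}; and, crucially, $P_n\hilbert\subset\domain(H_{\rm Dir})$ because hypothesis (i) gives $\Phi\subset\mathrm{H}^1_0((-1,1))\cap\mathrm{H}^2((-1,1))$, which by \Cref{rem:dirichlet} is exactly $\domain(H_{\rm Dir})$. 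In particular, the Galerkin approximations $H_n=P_nH_{\rm Dir}P_n$ see $\operatorname{Span}\Phi$ only through the action $\phi_l\mapsto-\phi_l''$, so they coincide with the Galerkin approximations ``of the Laplace operator'' appearing in the statement and generate the same propagators $U_n(t)$.

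I would then invoke \Cref{prop:friedrichs} to conclude that $U_n(t)\psi\to\tilde U(t)\psi$ for all $\psi\in\hilbert$ and $t\in\rnum$, where $\tilde U$ is generated by the Friedrichs extension $\tilde H_{\rm fin}$ of the restriction $H_{\rm fin}$ of $H_{\rm Dir}$ to $\hilbert_{\rm fin}$. By \Cref{rem_nomenclature}, since the projectors arise from the basis $\Phi$, one has $\hilbert_{\rm fin}=\operatorname{Span}\Phi$ and $H_{\rm fin}=H_\Phi$, the symmetric operator acting as $-\psi''$ on $\operatorname{Span}\Phi$---precisely the operator of \Cref{prop:friedrichs-dirichlet}. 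Applying that proposition, whose hypotheses (i)--(ii) are identical to those assumed here, gives $\tilde H_{\rm fin}=H_{\rm Dir}$, whence $\tilde U(t)=U_{\rm Dir}(t)$ and the claimed convergence follows.

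The main point to handle carefully is not any deep estimate---the genuine work is carried by \Cref{prop:friedrichs} and \Cref{prop:friedrichs-dirichlet}---but rather the bookkeeping that makes those results applicable: verifying that $H_{\rm Dir}$ is an admissible ambient operator, i.e.\ that $\operatorname{Span}\Phi\subset\domain(H_{\rm Dir})$, which is exactly where hypothesis (i) enters, and confirming that the abstractly defined restriction $H_{\rm fin}$ of \Cref{prop:friedrichs} really is the operator $H_\Phi$ whose Friedrichs extension \Cref{prop:friedrichs-dirichlet} computes. With these identifications in place, the two propositions compose directly and yield the statement.
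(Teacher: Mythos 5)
Your proposal is correct and follows exactly the route the paper takes: its proof of Proposition~\ref{prop:legendre-dirichlet-general} is the one-line combination of Proposition~\ref{prop:friedrichs} (convergence to the Friedrichs extension of the restriction) with Proposition~\ref{prop:friedrichs-dirichlet} (identification of that Friedrichs extension with $H_{\mathrm{Dir}}$). Your additional bookkeeping---verifying $P_n\hilbert\subset\domain(H_{\rm Dir})$ via hypothesis (i) and identifying $H_{\rm fin}$ with $H_\Phi$---is exactly the implicit content of the paper's shorter argument.
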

    \begin{proof}
        As the Dirichlet Laplacian is a self-adjoint operator bounded from below (cf. Proposition~\ref{prop:basic_facts_laplacian}), we can apply Proposition~\ref{prop:friedrichs}, and the Galerkin approximations $U_n(t)=\e^{-\iu H_n t}$ converge strongly to $\tilde{U}(t)=\e^{-\iu \tilde{H}_{\mathrm{fin}} t}$, where $\tilde{H}_{\mathrm{fin}}$ is the Friedrichs extension of $H_{\mathrm{fin}}$, the restriction of $H$ to $\operatorname{Span}\Phi$ (cf. Remark~\ref{rem_nomenclature}).
        But $\tilde{H}_{\mathrm{fin}} = H_{\mathrm{Dir}}$ by Proposition~\ref{prop:friedrichs-dirichlet}, and we obtain the claim.
    \end{proof}
   As anticipated in Section~\ref{sec:results_box_informal}, functions in $\Phi$ can, in principle, also satisfy other boundary conditions---possibly all admissible boundary conditions---which makes the content of Proposition~\ref{prop:legendre-dirichlet-general} not obvious a priori. To prove this point, we shall now specifically consider truncations of the Laplace operator constructed by means of associated Legendre polynomials~\cite{dlmf147-legendre-rodrigues}:
    \begin{definition}\label{def:legendre}
        Let $l,m\in\mathbb{N}$ such that $l\geq m$. The \textit{Legendre polynomial}s $P_l$ and the \textit{associated Legendre polynomials} $P_l^m$ are defined by
        \begin{align}
            P_l(x) & = \frac{1}{2^l l!} \diff*[l]{(x^2-1)^l}{x}\,, 
            \label{eq:def-legendre}\\
            P_l^m(x) & = (-1)^m (1-x^2)^{m/2} \diff*[m]{P_l(x)}{x} \,.
            \label{eq:def-assoc-legendre}
        \end{align}
    \end{definition}
    The Legendre polynomials $P_l$ are the result of the Gram--Schmidt orthogonalization of the polynomials~\cite[p.~208]{zeidler-appliedfunctionalanalysis-1999} with respect to the scalar product of $L^2(-1,1)$, whence
    \begin{equation}
        \label{eq:legendre-poly-gramschmidt}
        \operatorname{Span}\,(P_l)_{l=0}^{n}=\operatorname{Span}\,(x^l)_{l=0}^{n}\qquad\text{for every }n\in\mathbb{N}\, 
    \end{equation}
    For even $m$, each associated Legendre polynomial $P_l^m$ is also a polynomial function of degree $l$. It is known that, for each fixed $m \in \nnum$, $(P_l^m)_{l\geq m}$ is a complete orthogonal set of $L^2((-1,1))$ with normalization factor $\norm*{P_l^m} = \sqrt{2(l+m)!/(2l+1)(l-m)!}$~\cite[p.~226]{teschl-mathematicalmethodsquantum-2009}, whence the set of \textit{normalized} associated Legendre polynomials $(p^m_l)_{l\in\mathbb{N}}$, where
    \begin{equation}
        p^m_l(x)=\frac{1}{\|P^m_l\|}P^m_l(x),
    \end{equation}
    is a complete orthonormal set of $L^2((-1,1))$.
    
    Most importantly for our purposes, for every $m\geq 4$ the set $(p_l^m)_{l \geq m}$ is an example of ``boundary-blind'' basis as defined in Section~\ref{sec:results_box_informal}: it satisfies every admissible boundary condition, in the following sense:
    \begin{proposition}\label{prop:boundary_data_legendre}
        Let $m \in \nnum$, $m \geq 4$. Then $p_l^m(\pm 1) = (p_l^m)'(\pm 1)= 0$.
    \end{proposition}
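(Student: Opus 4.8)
The plan is to exploit the explicit prefactor $(1-x^2)^{m/2}$ appearing in the definition~\eqref{eq:def-assoc-legendre} of $P_l^m$, and to show that for $m\geq4$ this factor vanishes to high enough order at each endpoint to force both $P_l^m$ and its first derivative to vanish there.

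First I would write $P_l^m(x)=(-1)^m(1-x^2)^{m/2}R_l^m(x)$, where $R_l^m(x)=\frac{\mathrm{d}^m}{\mathrm{d}x^m}P_l(x)$ is an honest polynomial (identically zero when $l<m$, which is why the relevant indices are $l\geq m$). Factoring $(1-x^2)^{m/2}=(1-x)^{m/2}(1+x)^{m/2}$, I would isolate the behavior near $x=+1$ by setting $g(x)=(-1)^m(1+x)^{m/2}R_l^m(x)$; since $1+x$ stays positive there, $g$ is $C^\infty$ on a neighborhood of $x=1$ in $[-1,1]$, and $P_l^m(x)=(1-x)^{m/2}g(x)$ on that neighborhood.

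Next I would differentiate by the product rule, obtaining $(P_l^m)'(x)=-\frac{m}{2}(1-x)^{m/2-1}g(x)+(1-x)^{m/2}g'(x)$. Because $m\geq4$ gives $m/2\geq2$, both exponents $m/2$ and $m/2-1$ are at least $1$, so $(1-x)^{m/2}$ and $(1-x)^{m/2-1}$ tend to $0$ as $x\to1^-$ irrespective of the values of $g$ and $g'$; hence $P_l^m(1)=0$ and $(P_l^m)'(1)=0$. The parity relation $P_l^m(-x)=(-1)^{l+m}P_l^m(x)$ (or simply repeating the computation with the roles of the two linear factors swapped) then yields the same vanishing at $x=-1$, and dividing through by $\norm{P_l^m}$ transfers all four identities to the normalized polynomials $p_l^m$.

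This computation is elementary, so I do not expect a genuine obstacle; the only point deserving care is that for \emph{odd} $m$ the prefactor $(1-x^2)^{m/2}$ is not a polynomial, so one cannot read the conclusion off from polynomial degree alone. The argument above sidesteps this, since it uses only that the exponent $m/2-1$ is strictly positive---already guaranteed by $m\geq3$. I would note in passing that the sharper threshold $m\geq4$ in the statement is what the stronger membership $p_l^m\in\mathrm{H}^2((-1,1))$ (needed in the surrounding results) actually requires, since near the endpoint the second derivative behaves like $(1-x)^{m/2-2}$, which is square-integrable there precisely when $m>3$.
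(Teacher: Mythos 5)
Your proof is correct and follows essentially the same route as the paper: both evaluate $P_l^m(\pm1)$ directly from the prefactor $(1-x^2)^{m/2}$ in Eq.~\eqref{eq:def-assoc-legendre} and then apply the product rule, observing that the exponent $m/2-1\geq1$ keeps the derivative vanishing at the endpoints before passing to the normalized $p_l^m$. Your extra care about odd $m$ (where the prefactor is not a polynomial) and the side remark on why $m\geq4$ is the natural threshold for $\mathrm{H}^2$ membership are welcome but not substantively different from the paper's argument.
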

    \begin{proof}
    Let $m \geq 4$. Using Eq.~\eqref{eq:def-assoc-legendre}, $P_l^m(\pm 1) = 0$.
    Furthermore,
    \begin{multline}
        \diff*{P_l^m(x)}{x}[\bar{x} = \pm 1] =  (-1)^m \frac{m}{2} (1-x^2)^{m/2-1} (-2x) \diff*[m]{P_l(x)}{x} \\
        \left. + (-1)^m (1-x^2)^{m/2}\diff*[m+1]{P_l(x)}{x} \right|_{\bar{x} = \pm 1} = 0\, ,
    \end{multline}
    which proves $P^m_l(\pm1)=(P^m_l)'(\pm1)=0$ and therefore $p^m_l(\pm1)=(p^m_l)'(\pm1)=0$
    \end{proof}
    In Figure~\ref{fig:intro-legendre}, the normalized associated Legendre polynomials and their derivatives are shown for $m = 4$ and $l \in \{4,5,8\}$.
    It is clearly visible that both the $p_l^m(x)$ and their derivatives evaluate to zero at $x = \pm 1$, as shown in the proposition above; thus, the corresponding matrix elements $\braket{p_l^m, -(p_k^m)''}$ with $l,k \geq m$ are the same for all $H_W$. They are explicitly calculated in Appendix~\ref{app:expansion-coefficients}.
   
    \begin{figure}[ht]
        \centering
        \includegraphics[width=0.8\linewidth]{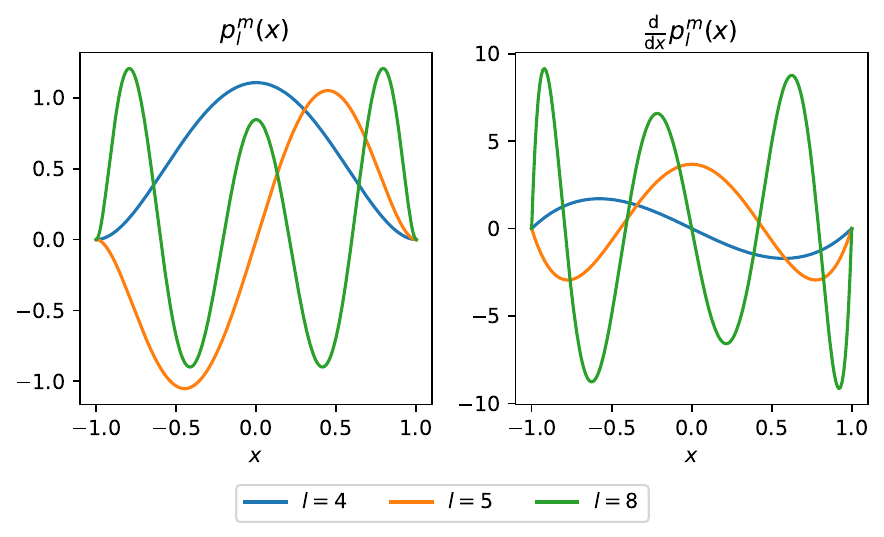}
        \caption{The normalized associated Legendre polynomials $p_l^m$ (left) and their derivatives $\diff*{p_l^m(x)}{x}$ (right) for $l \in \{4,5,8\}$ and $m = 4$.}
        \label{fig:intro-legendre}
    \end{figure}
    
    The following proposition shows that the conditions of Proposition~\ref{prop:legendre-dirichlet-general} are fulfilled by the associated Legendre polynomials:
    \begin{proposition}
        \label{prop:assoc_legendre_diff_1perp}
        Let $P_l^m$ be the associated Legendre polynomials and $m \geq 4$. 
        Then 
        \begin{equation}
            \overline{\operatorname{Span}\left(p^m_l\right)'_{l \geq m}} = \{1\}^\perp \, .
        \end{equation}       
    \end{proposition}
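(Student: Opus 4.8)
The plan is to prove the two inclusions separately. The inclusion $\overline{\operatorname{Span}(p_l^m)'_{l\ge m}}\subseteq\{1\}^\perp$ is immediate: since each $p_l^m$ vanishes at $\pm1$ (Proposition~\ref{prop:boundary_data_legendre}), we have $\braket{1,(p_l^m)'}=\int_{-1}^{1}(p_l^m)'(x)\,\mathrm{d}x=p_l^m(1)-p_l^m(-1)=0$, and $\{1\}^\perp$ is closed. The real work is the reverse inclusion, which I would obtain by first pinning down the span of the derivatives \emph{exactly} and then invoking an abstract density argument. Throughout, let $\mathcal{P}$ denote the space of all complex polynomials restricted to $(-1,1)$.

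The first key step is the algebraic identity $\operatorname{Span}(p_l^m)_{l\ge m}=(1-x^2)^{m/2}\mathcal{P}$. This follows from the Rodrigues-type formula~\eqref{eq:def-assoc-legendre}, $P_l^m=(-1)^m(1-x^2)^{m/2}\,\mathrm{d}^mP_l/\mathrm{d}x^m$, together with the fact that $(\mathrm{d}^mP_l/\mathrm{d}x^m)_{l\ge m}$ runs through polynomials of every degree $0,1,2,\dots$ and is therefore a basis of $\mathcal{P}$ (normalization is irrelevant to the span). Differentiating a generic element $\varphi=(1-x^2)^{m/2}q$ of this space gives $\varphi'=(1-x^2)^{m/2-1}\big[(1-x^2)q'-m x q\big]$, so that $\operatorname{Span}(p_l^m)'_{l\ge m}=(1-x^2)^{m/2-1}R$, where $R=\{(1-x^2)q'-mxq:q\in\mathcal{P}\}$. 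I would then show that $R$ is precisely the hyperplane $\{r\in\mathcal{P}:\int_{-1}^1(1-x^2)^{m/2-1}r\,\mathrm{d}x=0\}$: the map $q\mapsto(1-x^2)q'-mxq$ is injective (its kernel would consist of multiples of $(1-x^2)^{-m/2}\notin\mathcal{P}$) and raises the degree by exactly one, while an integration by parts with the weight $w=(1-x^2)^{m/2-1}$ shows $\int_{-1}^1 w\,[(1-x^2)q'-mxq]\,\mathrm{d}x=0$ for every $q$ (the boundary terms vanish as $w(1-x^2)=(1-x^2)^{m/2}$ vanishes at $\pm1$). A dimension count on polynomials of bounded degree then upgrades this inclusion to an equality.

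Combining these facts, and using that for $s=(1-x^2)^{m/2-1}r$ the constraint $\int_{-1}^1 w\,r=\int_{-1}^1 s=\braket{1,s}$ is exactly orthogonality to the constant, I obtain $\operatorname{Span}(p_l^m)'_{l\ge m}=W\cap\{1\}^\perp$, where $W=(1-x^2)^{(m-2)/2}\mathcal{P}$. The point is that $W$ is itself the span of the order-$(m-2)$ associated Legendre polynomials, $W=\operatorname{Span}(p_l^{m-2})_{l\ge m-2}$, which is dense in $L^2((-1,1))$ by their completeness (recalled above). The proof then concludes with a soft functional-analytic lemma: if a linear subspace $W$ is dense in a Hilbert space and $M$ is a closed hyperplane, then $W\cap M$ is dense in $M$. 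Indeed, picking any $w_0\in W\setminus M$ and splitting each $w\in W$ as $w=(w-c\,w_0)+c\,w_0$ with $w-c\,w_0\in M$ shows $W\subseteq(W\cap M)+\operatorname{Span}\{w_0\}$; taking closures gives $L^2=\overline{W\cap M}+\operatorname{Span}\{w_0\}$, so $\overline{W\cap M}$ has codimension at most one and, being contained in $M$, must equal $M$. Applied with $M=\{1\}^\perp$ this yields $\overline{\operatorname{Span}(p_l^m)'_{l\ge m}}=\{1\}^\perp$.

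The main obstacle is the middle step: recognizing the clean structure $\operatorname{Span}(p_l^m)'_{l\ge m}=W\cap\{1\}^\perp$, i.e.\ that differentiating the order-$m$ polynomials and intersecting with $\{1\}^\perp$ reproduces exactly the span of the order-$(m-2)$ polynomials. This hinges on identifying the correct weight $w=(1-x^2)^{m/2-1}$ and carrying out the injectivity and degree bookkeeping that force $R$ to be a codimension-one subspace rather than something smaller; once this is in place, the completeness of the associated Legendre polynomials and the abstract density lemma finish the argument routinely.
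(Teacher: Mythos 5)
Your proof is correct, and while it starts from the same observation as the paper---namely that $\operatorname{Span}(P_l^m)_{l\geq m}=(1-x^2)^{m/2}\mathcal{P}$ and that the derivatives therefore live inside the span of the order-$(m-2)$ family---it finishes the argument by a genuinely different route. The paper adjoins the single function $v_{-1}=(1-x^2)^{m/2-1}$ to the set of derivatives, shows via a triangular change of basis that the augmented span equals $\operatorname{Span}(P_l^{m-2})_{l\geq m-2}$ and is hence dense, and then trades $v_{-1}$ for the constant function $1$ through a quantitative $\epsilon$-argument (Cauchy--Schwarz plus the explicit value $\kappa_m=\int_{-1}^1(1-x^2)^{m/2-1}\,\mathrm{d}x>0$ to keep the coefficient $\alpha_\epsilon$ bounded away from zero). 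You instead pin down the span of the derivatives \emph{exactly} as the codimension-one slice $W\cap\{1\}^\perp$ of the dense subspace $W=(1-x^2)^{(m-2)/2}\mathcal{P}$, using the injectivity and degree-raising of $q\mapsto(1-x^2)q'-mxq$ together with a dimension count against the weighted-orthogonality hyperplane, and then close with the soft lemma that a dense subspace intersected with a closed hyperplane it is not contained in remains dense in that hyperplane. (Note that the existence of $w_0\in W\setminus\{1\}^\perp$ is automatic from density of $W$, or explicitly witnessed by $w_0=(1-x^2)^{m/2-1}$, so there is no gap there.) Your ending is arguably cleaner---it avoids both the explicit computation of $\kappa_m$ and the reverse-triangle-inequality bookkeeping---at the price of the slightly more delicate middle step where one must verify that the image of $T$ is the full hyperplane and not something smaller; your injectivity-plus-dimension-count argument handles this correctly. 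Both proofs rely in the same way on the completeness of the order-$(m-2)$ associated Legendre polynomials.
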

    The proof of this statement is provided in Section~\ref{sec:assoc_legendre_diff_1perp}. 
    Combining this result with Proposition~\ref{prop:legendre-dirichlet-general}, our first concrete convergence result follows:
    \begin{theorem}
        \label{thm:legendre-dirichlet}
        Let $\hilbert = L^2((-1,1))$, $H_n$ be the $n$th Galerkin approximation of the Laplace operator corresponding to any basis of normalized associated Legendre polynomials $(p^m_l)_{l\geq m}$ with $m\geq 4$, and $U_n(t)=\e^{-\iu tH_n}$ the corresponding evolution group. Then
        \begin{equation}
            \lim_{n \rightarrow \infty}\norm{\left(U_n(t)-U_{\mathrm{Dir}}(t)\right)\psi} = 0 \, \quad \forall t \geq 0\, , \, \, \psi \in \hilbert \, ,
        \end{equation}
        where $U_{\mathrm{Dir}}(t)=\e^{-\iu tH_{\rm Dir}}$; that is, $U_n(t)$ strongly converges to the evolution group generated by the Laplace operator with Dirichlet boundary conditions.
    \end{theorem}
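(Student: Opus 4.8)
The plan is to obtain the statement as an immediate corollary of Proposition~\ref{prop:legendre-dirichlet-general}, whose conclusion is precisely the strong convergence $U_n(t)\psi\to U_{\mathrm{Dir}}(t)\psi$ for \emph{any} orthonormal basis $\Phi$ of $L^2((-1,1))$ satisfying the two hypotheses (i) $\Phi\subset\mathrm{H}^1_0((-1,1))\cap\mathrm{H}^2((-1,1))$ and (ii) $\overline{\operatorname{Span}\Phi'}=\{1\}^\perp$. Thus the entire task reduces to checking that the normalized associated Legendre polynomials $\Phi=(p_l^m)_{l\geq m}$, for $m\geq 4$, constitute such a basis. That $\Phi$ is a complete orthonormal set of $L^2((-1,1))$ has already been recorded (in the discussion following Definition~\ref{def:legendre}), so only (i) and (ii) remain to be verified.

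For hypothesis (i) I would argue in two parts. The inclusion $\Phi\subset\mathrm{H}^1_0((-1,1))$ follows directly from Proposition~\ref{prop:boundary_data_legendre}, which gives $p_l^m(\pm1)=0$ (the simultaneous vanishing of the derivatives recorded there is exactly the boundary-blindness, but is not itself needed for membership in $\mathrm{H}^1_0$). For the inclusion $\Phi\subset\mathrm{H}^2((-1,1))$ one uses the explicit form of $P_l^m$ in Eq.~\eqref{eq:def-assoc-legendre}: when $m$ is even each $P_l^m$ is a polynomial and hence trivially in $\mathrm{H}^2$, while for odd $m$ the factor $(1-x^2)^{m/2}$ has square-integrable second derivative up to the boundary precisely when $m\geq 4$ (for $m\in\{1,3\}$ the second derivative behaves like $(1-x)^{-3/2}$ resp.\ $(1-x)^{-1/2}$ near $x=\pm1$ and fails to lie in $L^2$), so that $p_l^m\in\mathrm{H}^2((-1,1))$ in all cases covered. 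This is the precise point at which the threshold $m\geq 4$ enters the verification of (i).

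Hypothesis (ii) is nothing but the content of Proposition~\ref{prop:assoc_legendre_diff_1perp}, established separately in Section~\ref{sec:assoc_legendre_diff_1perp}. With (i) and (ii) in hand, Proposition~\ref{prop:legendre-dirichlet-general} applies verbatim and yields $\lim_{n\to\infty}\norm{(U_n(t)-U_{\mathrm{Dir}}(t))\psi}=0$ for all $\psi\in\hilbert$ and all $t$, which is exactly the assertion.

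The only genuinely substantive ingredient---and what I expect to be the main obstacle---is hypothesis (ii), i.e.\ the density statement $\overline{\operatorname{Span}(p_l^m)'_{l\geq m}}=\{1\}^\perp$. As the remark following Proposition~\ref{prop:friedrichs-dirichlet} observes, the inclusion $\subseteq$ is automatic from the Dirichlet boundary conditions, so the real work lies entirely in the reverse inclusion $\supseteq$: one must show that the derivatives $(p_l^m)'$ are \emph{total} in the orthogonal complement of the constants. I would expect this to require an argument tailored to the Legendre structure rather than any soft functional-analytic fact, and to be precisely where $m\geq 4$ is indispensable---the whole thrust of the theorem being that a boundary-blind basis nonetheless singles out the Dirichlet extension in the limit. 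The remaining steps amount to bookkeeping.
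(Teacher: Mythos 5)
Your proposal is correct and follows essentially the same route as the paper: verify hypothesis (i) via Proposition~\ref{prop:boundary_data_legendre}, hypothesis (ii) via Proposition~\ref{prop:assoc_legendre_diff_1perp}, and then invoke Proposition~\ref{prop:legendre-dirichlet-general}. Your additional care about $\mathrm{H}^2$-membership for odd $m$ is a harmless refinement the paper glosses over, and your identification of (ii) as the substantive ingredient matches where the paper actually does the work (Section~\ref{sec:assoc_legendre_diff_1perp}).
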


    \begin{proof}
    For all $l \geq m$, $p_l^m(\pm 1) = 0$ (see Proposition~\ref{prop:boundary_data_legendre}), and thus $p_l^m \in \mathrm{H}^1_0((-1,1))\cap\mathrm{H}^2((-1,1))$.
    By Proposition~\ref{prop:assoc_legendre_diff_1perp}, 
    \begin{equation}
            \overline{\operatorname{Span}\left(p^m_l\right)'_{l \geq m}} = \{1\}^\perp \, ,
    \end{equation}
    whence the claim follows from Proposition~\ref{prop:legendre-dirichlet-general}.
\end{proof}

\subsection{\texorpdfstring{Main results: convergence to the $\alpha$-periodic Laplacian}{Main results: convergence to the alpha-periodic Laplacian}}
\label{sec:alpha-periodic}

Proposition~\ref{prop:legendre-dirichlet-general} shows that there exist orthonormal bases of $L^2((-1,1))$ such that the corresponding Galerkin approximations of the Laplace operator will always generate a dynamics approximating the one generated by Dirichlet boundary conditions; in particular, Theorem~\ref{thm:legendre-dirichlet} shows that this is the case for Galerkin approximations constructed via associated Legendre polynomials---despite the fact that such functions satisfy all boundary conditions. 

Here we shall focus on cases in which the selected basis $(\phi_l)_{l \in \nnum}$ fulfills not all boundary conditions, but still more than one boundary condition.
Specifically, we will focus on $\alpha$-periodic boundary conditions, cf.~Definition~\ref{def:particular_bcs}. For every $\alpha\in[0,2\pi)$, the Laplace operator with $\alpha$-periodic boundary conditions has domain
\begin{equation}
    \domain(H_{W(\alpha)}) =  \left\{ f\in \mathrm{H}^2((-1,1)), \, f(-1) = \e^{\iu \alpha} f(1),\, f'(-1) = \e^{\iu \alpha} f'(1) \right\}\, ,
\end{equation}
with $W(\alpha)$ as per Eq.~\eqref{eq:w}.

Similarly to Proposition~\ref{prop:legendre-dirichlet-general}, we seek sufficient conditions for a basis $\Phi = (\phi_l)_{l \in \nnum}$ so that the corresponding Galerkin approximations of the Laplace operator reproduce, in the limit $n\to\infty$, the exact dynamics generated by $H_{W(\alpha)}$.

\begin{proposition}
    \label{prop:alpha-periodic-unitary}
    Let $\alpha\in[0,2\pi)$, and $\Phi = (\phi_l)_{l \in \nnum}$ a complete orthonormal set of $L^2((-1,1))$ satisfying the following additional properties:
    \begin{itemize}
        \item[(i)] $\Phi\subset \domain(H_{W(\alpha)})$;
        \item[(ii)] There exists $\phi_0 \in  \domain(H_{W(\alpha)})$ with $\phi_0(1) \neq 0$ such that $\overline{\operatorname{Span}\tilde{\Phi}'}=\{1\}^\perp$,
    \end{itemize}
    where 
    \begin{equation}\label{eq:modified_basis}
        \tilde{\Phi} =(\tilde{\phi}_l)_{l\in\mathbb{N}},\qquad\tilde{\phi}_l(x)=\phi_l(x)-\frac{\phi_l(1)}{\phi_0(1)} \phi_0(x),
    \end{equation}
    and $\tilde{\Phi}'=(\tilde{\phi}'_l)_{l\in\mathbb{N}}$. Then, for all $\psi\in\hilbert$ and $t\in\mathbb{R}$,
    \begin{equation}
        \lim_{n \to \infty} \norm*{\left(U_{n}(t) \psi - U_{W(\alpha)}(t)\right) \psi} = 0 \, .
    \end{equation}
\end{proposition}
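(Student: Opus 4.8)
The plan is to follow the template of the proof of Proposition~\ref{prop:legendre-dirichlet-general}: reduce the dynamical convergence to an identification of Friedrichs extensions, and then invoke Proposition~\ref{prop:friedrichs}. Concretely, $H_{W(\alpha)}$ is self-adjoint and bounded from below by Proposition~\ref{prop:basic_facts_laplacian}, the projectors $P_n$ built from $\Phi$ via Eq.~\eqref{eq:proj_by_basis} are nested and satisfy $\hilbert_{\rm fin}=\operatorname{Span}\Phi$ (Remark~\ref{rem_nomenclature}), so Proposition~\ref{prop:friedrichs} yields $U_n(t)\psi\to\tilde U(t)\psi$, where $\tilde U(t)$ is generated by the Friedrichs extension $\tilde H_\Phi$ of the operator $H_\Phi\psi=-\psi''$ with domain $\operatorname{Span}\Phi$. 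Because self-adjoint operators are in one-to-one correspondence with their propagators, it suffices to establish the operator identity $\tilde H_\Phi=H_{W(\alpha)}$; this is the $\alpha$-periodic analogue of Proposition~\ref{prop:friedrichs-dirichlet} and is where essentially all the work lies.

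I would carry out this identification at the level of quadratic forms. An integration by parts shows that for $\psi,\varphi\in\operatorname{Span}\Phi\subset\domain(H_{W(\alpha)})$ the boundary terms cancel---precisely because of the conditions $\psi(-1)=\e^{\iu\alpha}\psi(1)$ and $\psi'(-1)=\e^{\iu\alpha}\psi'(1)$---so that $\braket{\psi,H_\Phi\varphi}=\braket{\psi',\varphi'}\geq0$. Hence $H_\Phi$ is nonnegative and, in the notation of Eq.~\eqref{eq:plusnorm} with $\gamma=0$, the form norm $\|\cdot\|_+$ is the $\mathrm{H}^1$ norm. By Proposition~\ref{prop:def-friedrich} and Definition~\ref{def:friedrichs_extension}, $\tilde H_\Phi$ then has form $\braket{\psi',\varphi'}$ on the form domain $\overline{\operatorname{Span}\Phi}^{\,\mathrm{H}^1}$. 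The same computation gives the form $\braket{\psi',\varphi'}$ for $H_{W(\alpha)}$, on the form domain $\mathrm{H}^1_\alpha:=\{f\in\mathrm{H}^1((-1,1)):f(-1)=\e^{\iu\alpha}f(1)\}$; moreover $H_{W(\alpha)}$ is a self-adjoint extension of $H_\Phi$ with $\domain(H_{W(\alpha)})\subset\mathrm{H}^1_\alpha$. By the uniqueness clause of Proposition~\ref{prop:def-friedrich}, everything therefore reduces to the single equality of form domains
\begin{equation}
    \overline{\operatorname{Span}\Phi}^{\,\mathrm{H}^1}=\mathrm{H}^1_\alpha.
\end{equation}

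The inclusion $\subseteq$ is immediate: $\Phi\subset\domain(H_{W(\alpha)})\subset\mathrm{H}^1_\alpha$ and $\mathrm{H}^1_\alpha$ is closed in $\mathrm{H}^1((-1,1))$, since the trace maps $f\mapsto f(\pm1)$ are $\mathrm{H}^1$-continuous (Proposition~\ref{prop:continuous_representative}). The reverse inclusion is the crux, and is exactly what the reference function $\phi_0$ is for. I would first record that the modified functions of Eq.~\eqref{eq:modified_basis} satisfy $\tilde\phi_l(\pm1)=0$: indeed $\tilde\phi_l(1)=0$ by construction, while $\tilde\phi_l(-1)=\e^{\iu\alpha}\phi_l(1)-\frac{\phi_l(1)}{\phi_0(1)}\e^{\iu\alpha}\phi_0(1)=0$ using the $\alpha$-periodicity of $\phi_l$ and $\phi_0$; thus $\tilde\Phi\subset\mathrm{H}^1_0((-1,1))$, and clearly $\tilde\Phi\subset\operatorname{Span}\Phi$ with $\phi_0\in\Phi$. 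Then, given $f\in\mathrm{H}^1_\alpha$, I would peel off the boundary degree of freedom by setting $g:=f-\frac{f(1)}{\phi_0(1)}\phi_0$, which lies in $\mathrm{H}^1_\alpha$ and satisfies $g(1)=0$, hence $g(-1)=\e^{\iu\alpha}g(1)=0$, so $g\in\mathrm{H}^1_0((-1,1))$ and $g'\in\{1\}^\perp$. By hypothesis~(ii), $\operatorname{Span}\tilde\Phi'$ is dense in $\{1\}^\perp$, so I may choose $\tilde g=\sum_l c_l\tilde\phi_l\in\operatorname{Span}\tilde\Phi\subset\mathrm{H}^1_0$ with $\|g'-\tilde g'\|$ arbitrarily small; applying Poincaré's inequality (Proposition~\ref{prop:poincare}) to $g-\tilde g\in\mathrm{H}^1_0$ upgrades this to smallness of $\|g-\tilde g\|_{\mathrm{H}^1}$. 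Since $\tilde g+\frac{f(1)}{\phi_0(1)}\phi_0\in\operatorname{Span}\Phi$ and $f=g+\frac{f(1)}{\phi_0(1)}\phi_0$, this exhibits $f$ as an $\mathrm{H}^1$-limit of elements of $\operatorname{Span}\Phi$, giving $\supseteq$. The main obstacle is precisely this last inclusion: Poincaré is available only on $\mathrm{H}^1_0$ and not on $\mathrm{H}^1_\alpha$, so the argument genuinely hinges on first using $\phi_0$ to remove the single nonzero boundary value and land in $\mathrm{H}^1_0$, after which hypothesis~(ii) and Poincaré close the estimate.
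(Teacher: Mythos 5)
Your proposal is correct and follows essentially the same route as the paper: reduce to identifying the Friedrichs extension via Proposition~\ref{prop:friedrichs}, prove the form-domain equality $\overline{\operatorname{Span}\Phi}^{\,\mathrm{H}^1}=\mathrm{H}^1_{\alpha,\mathrm{per}}((-1,1))$ using the closedness of the latter for one inclusion and, for the other, the subtraction of $\frac{f(1)}{\phi_0(1)}\phi_0$ to land in $\mathrm{H}^1_0$ followed by hypothesis~(ii) and Poincar\'e, and conclude by uniqueness of the Friedrichs extension. This matches the paper's Lemmas~\ref{lem:alpha-sobolev-dense} and~\ref{prop:alpha-periodic} and the subsequent appeal to Proposition~\ref{prop:friedrichs}.
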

This is proven in Section~\ref{sec:proof-alpha-periodic-unitary}.

Like in the Dirichlet case, we will construct explicit examples of bases satisfying the conditions of Proposition~\ref{prop:alpha-periodic-unitary}. The basic idea is the following:
\begin{itemize} \item[(i)] as a starting point, take the normalized associated Legendre polynomials $(p^m_l)_{l\geq m}$, which have vanishing boundary data (Proposition~\ref{prop:boundary_data_legendre}) and therefore $(p^m_l)_{l\geq m}\subset\domain(H_{W(\alpha)})$;
    \item[(ii)] add to this set a new function $\phi_0$, with $\|\phi_0\|=1$ and $\phi_0(1)\neq0$, which specifically satisfies the desired $\alpha$-periodic boundary conditions, and define $\Phi$ as the set obtained by applying the Gram--Schmidt algorithm (using the scalar product of $L^2(-1,1)$) to $\{\phi_0\}\cup(p^m_l)_{l\geq m}$ starting from $\phi_0$, which we will compactly denote by
    \begin{equation}
        \label{eq:def-gs}
        \Phi=\operatorname{GS}\left(\phi_0,(p^m_l)_{l\geq m}\right).
    \end{equation}
    Each vector in $\Phi$ is a linear combination of $\phi_0$ and finitely many associated Legendre polynomials, whence $\Phi\subset\domain(H_{W(\alpha)})$ as well.
\end{itemize}
The following theorem shows that a basis $\Phi$ constructed this way also fulfills condition \textit{(ii)} of Proposition~\ref{prop:alpha-periodic-unitary}, and thus the corresponding Galerkin approximations converge to the evolution generated by the $\alpha$-periodic Hamiltonian: 
\begin{theorem}
    \label{thm:alpha-periodic-legendre}
    Let $f_0 \in \mathrm{H}^2((-1,1))$ be a normalized and $\alpha$-periodic function with $f_0(1) \neq 0$. Let $\Phi = (\phi_l)_{l \in \nnum} = \mathrm{GS}(f_0,(p_l^m)_{l \geq m})$ for $m \geq 4$.
    Then the approximate time evolution in the basis $\Phi$, $U_{n,\Phi}(t)$ converges strongly to the $\alpha$-periodic time evolution $U_{W(\alpha)}(t) = \e^{-\iu H_{W(\alpha)} t}$:
    \begin{equation}
        \lim_{n \to \infty} \norm{U_{W(\alpha)}(t) \psi - U_{n,\Phi}(t) \psi} = 0 \quad \forall \psi \in \hilbert\, .
    \end{equation}
\end{theorem}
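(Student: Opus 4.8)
The plan is to recognize Theorem~\ref{thm:alpha-periodic-legendre} as a concrete instance of Proposition~\ref{prop:alpha-periodic-unitary}: it suffices to verify that the basis $\Phi = \mathrm{GS}(f_0,(p_l^m)_{l\geq m})$ satisfies the two hypotheses of that proposition, with the specific choice $\phi_0 = f_0$. Since $f_0$ is already normalized, the Gram--Schmidt procedure leaves it untouched, so $f_0$ is literally the first element of $\Phi$. Hypothesis (i), namely $\Phi\subset\domain(H_{W(\alpha)})$, is straightforward: every element of $\Phi$ is a finite linear combination of $f_0$ and associated Legendre polynomials $p_l^m$ with $l\geq m\geq 4$. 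By assumption $f_0$ is $\alpha$-periodic and lies in $\mathrm{H}^2((-1,1))$, hence $f_0\in\domain(H_{W(\alpha)})$; by Proposition~\ref{prop:boundary_data_legendre} each $p_l^m$ has vanishing boundary data $p_l^m(\pm1)=(p_l^m)'(\pm1)=0$ and therefore trivially satisfies the $\alpha$-periodic boundary condition as well, so $p_l^m\in\domain(H_{W(\alpha)})$. As $\domain(H_{W(\alpha)})$ is a linear subspace, (i) follows.

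The heart of the argument is verifying hypothesis (ii), i.e. that $\overline{\operatorname{Span}\tilde\Phi'}=\{1\}^\perp$ for the modified family $\tilde\phi_l = \phi_l - \frac{\phi_l(1)}{f_0(1)}f_0$. The key observation is that the map $T\varphi := \varphi - \frac{\varphi(1)}{f_0(1)}f_0$ is linear, so that $\operatorname{Span}\tilde\Phi = T(\operatorname{Span}\Phi)$. I would then use that Gram--Schmidt preserves the spans of initial segments, so that $\operatorname{Span}\Phi = \operatorname{Span}\bigl(\{f_0\}\cup(p_l^m)_{l\geq m}\bigr)$. Applying $T$ and exploiting linearity, one has $T(f_0)=0$ while $T(p_l^m)=p_l^m$, the latter precisely because $p_l^m(1)=0$ (Proposition~\ref{prop:boundary_data_legendre}); hence
\begin{equation}
    \operatorname{Span}\tilde\Phi = T(\operatorname{Span}\Phi) = \operatorname{Span}(p_l^m)_{l\geq m}.
\end{equation}
Differentiating and using linearity of the derivative, this yields $\operatorname{Span}\tilde\Phi' = \operatorname{Span}(p_l^m)'_{l\geq m}$.

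Finally, invoking Proposition~\ref{prop:assoc_legendre_diff_1perp}, which asserts $\overline{\operatorname{Span}(p_l^m)'_{l\geq m}}=\{1\}^\perp$ for $m\geq4$, gives exactly $\overline{\operatorname{Span}\tilde\Phi'}=\{1\}^\perp$, so hypothesis (ii) holds and the conclusion follows immediately from Proposition~\ref{prop:alpha-periodic-unitary}. I expect the main obstacle to be the span identity $\operatorname{Span}\tilde\Phi = \operatorname{Span}(p_l^m)_{l\geq m}$: the point to get right is that the ``boundary-shifting'' operator $T$ annihilates the single added function $f_0$ but fixes every associated Legendre polynomial---precisely because the latter vanish at the boundary---so that the boundary-data modification collapses $\tilde\Phi$ back onto the boundary-blind Legendre system, whose derivatives are already known to be dense in $\{1\}^\perp$. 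Everything else is routine verification.
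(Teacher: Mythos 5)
Your proof is correct, and it follows the same high-level route as the paper: reduce the theorem to verifying hypotheses (i) and (ii) of Proposition~\ref{prop:alpha-periodic-unitary} with $\phi_0=f_0$, and then feed in Proposition~\ref{prop:assoc_legendre_diff_1perp}. Where you genuinely differ is in how the key span identity $\operatorname{Span}\tilde\Phi=\operatorname{Span}(p_l^m)_{l\geq m}$ is established. The paper writes out each Gram--Schmidt vector explicitly as $\phi_l=\beta^{(l)}f_0+\sum_{k=1}^{l}\gamma^{(l)}_k p^m_{k+m-1}$, proves that the leading coefficient $\gamma^{(l)}_l$ is nonzero (via linear independence of $f_0$ from the Legendre system, which follows from $f_0(1)\neq0$), and then inverts a triangular matrix to get the graded identity $\operatorname{Span}(\tilde\phi_l)_{l=1}^{n}=\operatorname{Span}(p_l^m)_{l=m}^{n+m-1}$ at every finite level $n$. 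Your observation that the boundary-correction map $T\varphi=\varphi-\frac{\varphi(1)}{f_0(1)}f_0$ is linear, annihilates $f_0$, and fixes every $p_l^m$, combined with the span-preservation of Gram--Schmidt, collapses all of this into one line and avoids the coefficient bookkeeping entirely. The trade-off is minor: the paper's finite-level statement is slightly more informative (and its explicit nonvanishing of $\gamma^{(l)}_l$ doubles as the verification that Gram--Schmidt produces no zero vectors, i.e.\ that $\Phi$ really is an orthonormal basis---a point you assume implicitly but which follows from the same $f_0(1)\neq0$ argument). Both treatments of condition (i) are equivalent, yours resting on the observation that $\domain(H_{W(\alpha)})$ is a linear subspace containing $f_0$ and all the boundary-blind $p_l^m$.
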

The theorem is proven in Section~\ref{sec:proof-thm-alpha-periodic-legendre}.
In the following we investigate two examples.
\begin{example}[Periodic boundary conditions]
    We add the \textit{periodic} function $f_0(x) = \frac{1}{\sqrt{2}}$ to the associated Legendre polynomials.
    $f_0(1) \neq 0$, and therefore using Theorem~\ref{thm:alpha-periodic-legendre}
    the Galerkin approximation in the basis $\Phi=\operatorname{GS}\left(f_0,(p^m_l)_{l\geq m}\right)$, $U_{n,\Phi}(t)$, converges strongly to the time evolution of the periodic Laplacian, $U_{W(0)}(t)\equiv U_{\mathrm{per}}(t)$.
\end{example}
\begin{example}[Anti-periodic boundary conditions]
    Adding the \textit{anti-periodic} function $f_0(x) = \cos\left(\frac{\pi}{2}(x+1)\right)$ to the basis functions, the Galerkin approximation $U_{n,\Phi}(t)$ in the basis $\Phi=\operatorname{GS}\left(f_0,(p^m_l)_{l\geq m}\right)$, converges strongly to the time evolution of the anti-periodic Laplacian, $U_{W(\pi)}(t)$.
\end{example}
Notably, in both examples above, the function $f_0$ used to construct the basis $\Phi$ also satisfies Neumann boundary conditions; therefore, in both cases the Galerkin approximations of the Laplacian satisfy two distinct boundary conditions---periodic and Neumann in the first case, anti-periodic and Neumann in the second case---and yet in the limit $n\to\infty$ Neumann boundary conditions are ``discarded'' in favor of the other choice. We comment on this point in the following.

\subsection{Summary}

Let us summarize the findings of this section (see Table~\ref{tab}). When numerically simulating the particle in a box in the eigenbasis of a specific self-adjoint extension $H_W$, the dynamics of the corresponding Galerkin approximations will always converge, in the strong sense, to the exact dynamics generated by $H_W$ (see Remark~\ref{rem:eigenbasis-convergence}). This is instead not guaranteed when choosing a complete orthonormal set which does not span a core for any specific boundary conditions---as pointed out in Section~\ref{sec:galerkin-general} (see Proposition~\ref{prop:friedrichs}), in general we will obtain convergence to the dynamics generated by the Friedrichs extension of the Laplace operator on the given basis, which will generally differ from the one corresponding to the desired boundary conditions. Specifically:
\begin{itemize}
    \item Galerkin approximations corresponding to a basis $\Phi=(\phi_l)_{l\in\mathbb{N}}$ satisfying Dirichlet boundary conditions and the additional condition $\overline{\operatorname{Span}\Phi'}=\{1\}^\perp$ will always yield convergence to the dynamics corresponding to Dirichlet boundary conditions (Proposition~\ref{prop:friedrichs-dirichlet}), even if they satisfy other boundary conditions---possibly all of them. This is the case, in particular, for any basis of associated Legendre polynomials with $m\geq4$ (Theorem~\ref{thm:legendre-dirichlet}).
     \item Similarly, Galerkin approximations corresponding to a basis $\Phi=(\phi_l)_{l\in\mathbb{N}}$ satisfying $\alpha$-periodic (for example, periodic or anti-periodic) boundary conditions, plus an additional condition, will yield convergence to $\alpha$-periodic dynamics (Proposition~\ref{prop:alpha-periodic}). Such bases can be obtained again by taking associated Legendre polynomials with $m\geq4$, adding a function $f_0$ satisfying $\alpha$-periodic boundary conditions, and applying Gram--Schmidt to the resulting system (Theorem~\ref{thm:alpha-periodic-legendre}). In both cases, the basis $\Phi$ also satisfies Neumann boundary conditions.
\end{itemize}

\begin{center}
\begin{table}
    \begin{tabular}{l l l } \hline
        Basis $\Phi$ & b.c. fulfilled by $\Phi$ & b.c. of limiting dynamics\\ \hline & \\[-1.5ex]
        $\left(\sin\left(\frac{j \pi}{2}(x+1)\right)\right)_{j \geq 1}$ & Dirichlet & Dirichlet \\[0.2cm]
        $\left(\cos\left(\frac{j \pi}{2}(x+1)\right)\right)_{j \geq 0}$ & Neumann & Neumann \\[0.2cm]
        $\left(\exp\left(j \pi \iu(x+1)\right)\right)_{j \in \znum}$ & periodic & periodic \\[0.2cm]
        $(p_l^m)_{l \geq m}$ & all & Dirichlet  \\[0.2cm]
        $\mathrm{GS}\left(\frac{1}{\sqrt{2}},(p_l^m)_{l \geq m}\right)$& Neumann and periodic & periodic \\[0.2cm]
        $\mathrm{GS}\left(\cos\left(\frac{\pi}{2}(x+1)\right),(p_l^m)_{l \geq m}\right)$ & Neumann and anti-periodic & anti-periodic \\[0.2cm] \hline
    \end{tabular}
    \caption{Summary of the results of the section. The functions in the first three lines are the eigenvectors of the Laplace operator with Dirichlet, Neumann, and periodic boundary conditions (cf.~Example~\ref{ex:bc-eigenfunctions}).}\label{tab}
    \end{table}
\end{center}
We conclude with the following observation. While the domains corresponding to distinct boundary conditions are not included into each other, the \textit{form} domains corresponding to Dirichlet, $\alpha$-periodic, and Neumann boundary conditions satisfy
\begin{equation}
    \mathrm{H}^1_0((-1,1)) = \domain(q_{\mathrm{Dir}}) \subset \domain(q_{W(\alpha)}) \subset \domain(q_{\mathrm{Neu}}) = \mathrm{H}^1((-1,1)) \, ,
\end{equation}
where $q_{\mathrm{Dir}}$, $q_{W(\alpha)}$ and $q_{\mathrm{Neu}}$ are the forms associated with the respective Hamiltonians. Thus, in all cases discussed in this work, Galerkin approximations favor the boundary conditions corresponding with the smaller form domain. Interestingly, the prevalence of Dirichlet boundary conditions also appears in other problems, see~\cite[Prop 4.14.1, p.759]{zagrebnov-trotterkatoproductformulae-2024} and~\cite{deoliveira-mathematicalpredominancedirichlet-2012,asorey-dynamicalcompositionlaw-2013}.

\section{Proofs of Section~\ref{sec:galerkin-general}}\label{sec:proofs_sec2}

In this section we shall provide proofs for the general convergence properties stated in Section~\ref{sec:galerkin-general}, namely Propositions~\ref{prop:galerkin_criterion} and~\ref{prop:friedrichs}.

\subsection{Proof of Proposition~\ref{prop:galerkin_criterion}}
\label{sec:galerkin-criterion-proof}

To prove Proposition~\ref{prop:galerkin_criterion}, we will use an adaptation of the methods in Ref.~\cite{parter-rolesstabilityconvergence-1980} to the unitary case. To this purpose, we shall need some preliminary lemmata.
We begin by stating some basic properties of the Galerkin projector $Q_n$, cf.~Definition~\ref{def:galerkin_projector}:
\begin{lemma}
        \label{lem:rn-prop}
        Let $H:\domain(H)\subset\hilbert\rightarrow\hilbert$ be a coercive self-adjoint operator, $(P_n)_{n\in\mathbb{N}}$ a family of finite-dimensional orthogonal projectors satisfying $P_n\hilbert\subset\domain(H)$, and $H_n,R_n,Q_n$ be as per Definitions~\ref{def:galerkin_approximation} and~\ref{def:galerkin_projector}. Then the following identities hold:
        \begin{equation}
            H_n R_n = P_n \, , \qquad P_nQ_n=Q_n \, , \qquad H_n Q_n = P_n H \, , \qquad U_n(t) R_n = R_n U_n(t).
        \end{equation}
        Furthermore, $Q_n$ is a (not necessarily self-adjoint) projector: $Q_n^2 = Q_n$.
    \end{lemma}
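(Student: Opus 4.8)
The plan is to exploit throughout the block-diagonal structure induced by the orthogonal decomposition $\hilbert = \hilbert_n \oplus \hilbert_n^\perp$, with respect to which both $H_n = \hat{H}_n \oplus 0_{\hilbert_n^\perp}$ and $R_n = \hat{H}_n^{-1} \oplus 0_{\hilbert_n^\perp}$ are block-diagonal. The two elementary facts that drive every identity are: (a) for any $\xi \in \hilbert$ one has $R_n\xi = \hat{H}_n^{-1}P_n\xi$, so that $\operatorname{Ran} R_n \subset \hilbert_n$; and (b) for $\chi \in \hilbert_n$ one has $\hat{H}_n\chi = P_n H\chi$, since $P_n\chi = \chi$ and $H_n = P_n H P_n$. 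Note that (b) is meaningful precisely because $\hilbert_n = P_n\hilbert \subset \domain(H)$, a hypothesis I will lean on repeatedly.

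First I would dispatch the identities that follow by direct block multiplication. The relation $H_n R_n = (\hat{H}_n \oplus 0_{\hilbert_n^\perp})(\hat{H}_n^{-1} \oplus 0_{\hilbert_n^\perp}) = \id_{\hilbert_n} \oplus 0_{\hilbert_n^\perp} = P_n$ is immediate, and then $H_n Q_n = H_n R_n H = P_n H$ follows on $\domain(H)$ by substituting the definition $Q_n = R_n H$. The identity $P_n Q_n = Q_n$ is equally quick: since $\operatorname{Ran} R_n \subset \hilbert_n$ by (a), we have $P_n R_n = R_n$, whence $P_n Q_n = P_n R_n H = R_n H = Q_n$.

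The step I expect to require the most care is the projector property $Q_n^2 = Q_n$. The crux is the observation that $R_n H$ restricts to the identity on $\hilbert_n$: for $\chi \in \hilbert_n$, facts (a) and (b) give $R_n H\chi = \hat{H}_n^{-1}P_n H\chi = \hat{H}_n^{-1}\hat{H}_n\chi = \chi$. Now fix $\psi \in \domain(H)$ and set $\chi = Q_n\psi = R_n H\psi$; by (a) we have $\chi \in \hilbert_n$, and \emph{crucially} $\chi \in \domain(H)$ because $\hilbert_n \subset \domain(H)$, so that $Q_n\chi$ is well-defined. The restriction identity then yields $Q_n^2\psi = R_n H\chi = \chi = Q_n\psi$. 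The only delicate point is the domain bookkeeping: one must verify that $Q_n\psi$ lands in $\domain(H)$ before reapplying $Q_n$, which is exactly what $P_n\hilbert \subset \domain(H)$ guarantees.

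Finally, the commutation identity $U_n(t)R_n = R_n U_n(t)$ reduces to the finite-dimensional block. Writing $U_n(t) = \hat{U}_n(t) \oplus \id_{\hilbert_n^\perp}$ with $\hat{U}_n(t) = \e^{-\iu t\hat{H}_n}$, both sides are block-diagonal with vanishing $\hilbert_n^\perp$-block, so the claim collapses to $\hat{U}_n(t)\hat{H}_n^{-1} = \hat{H}_n^{-1}\hat{U}_n(t)$ on $\hilbert_n$. Since $\hat{H}_n$ is an invertible self-adjoint operator on a finite-dimensional space, both $\hat{U}_n(t)$ and $\hat{H}_n^{-1}$ are functions of $\hat{H}_n$ through the functional calculus and therefore commute, which closes the argument.
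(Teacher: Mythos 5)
Your proof is correct and follows essentially the same route as the paper's: direct block multiplication with respect to $\hilbert = \hilbert_n \oplus \hilbert_n^\perp$ for $H_nR_n=P_n$, $P_nQ_n=Q_n$ and the commutation with $U_n(t)$, and the substitution $Q_n=R_nH$ together with $H_nR_n=P_n$ for $H_nQ_n=P_nH$. The only (cosmetic) difference is in the idempotence step: the paper inserts $(P_n+\id-P_n)$ identities into $R_nHR_nH$ and simplifies algebraically, whereas you show directly that $R_nH$ restricts to the identity on $\hilbert_n$ and that $\operatorname{Ran}Q_n\subset\hilbert_n\subset\domain(H)$ --- both arguments rest on the same facts, and the domain bookkeeping you emphasize is exactly the point the paper also flags when noting that $Q_n^2$ is well-defined because $P_n\hilbert\subset\domain(H)$.
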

    \begin{proof}
        We show these identities using the direct sum decomposition of all operators with respect to the decomposition $\hilbert=\hilbert_n\oplus\hilbert_n^\perp$, with $\hilbert_n=P_n\hilbert$, cf.~Definitions~\ref{def:galerkin_approximation} and~\ref{def:galerkin_projector}. The first one follows from
        \begin{equation}
            H_n R_n  = \left(   \hat{H}_n \oplus 0_{\hilbert_n^\perp} \right) \left(  \hat{H}_n^{-1} \oplus 0_{\hilbert_n^\perp} \right) = \left( \hat{H}_n  \hat{H}_n^{-1} \right) \oplus 0_{\hilbert_n^\perp} = \id_{\hilbert_n} \oplus 0_{\hilbert_n^\perp} = P_n \, .
        \end{equation}
    Besides,
    \begin{equation} \label{eq:pnrn_rn}
    P_nR_n=\left(\id\oplus0_{\hilbert_n^\perp}\right)\left(\hat{H}_n^{-1}\oplus0_{\hilbert_n^\perp}\right)=\left(\hat{H}_n^{-1}\oplus0_{\hilbert_n^\perp}\right)=R_n,
    \end{equation}
    thus clearly implying $P_nQ_n=P_nR_nH=R_nH=Q_n$.
    From the first identity we directly get
    \begin{equation}
        H_n Q_n \psi = H_n R_n H \psi = P_n H \psi,
    \end{equation}
    and, as $\domain(Q_n) = \domain(H)$, $H_n Q_n = P_n H$ holds.
        For the third identity, recall (cf.~Eq.~\eqref{eq:decomposition-un}) that the evolution group $U_n(t)$ also admits a decomposition $U_n(t)=\hat{U}_n(t)\oplus\id_{\hilbert_n^\perp}$. Then,
        \begin{multline}
            U_n(t) R_n = \left(   \hat{U}_n(t) \oplus \id_{\hilbert_n^\perp} \right) \left(  \hat{H}_n^{-1} \oplus 0_{\hilbert_n^\perp} \right)  = \left( \hat{U}_n(t)  \hat{H}_n^{-1} \right) \oplus 0_{\hilbert_n^\perp} \\
            = \left(   \hat{H}_n^{-1} \hat{U}_n(t) \right) \oplus 0_{\hilbert_n^\perp} =  \left(  \hat{H}_n^{-1} \oplus 0_{\hilbert_n^\perp} \right)\left(   \hat{U}_n(t) \oplus \id_{\hilbert_n^\perp} \right) = R_n U_n(t) \, .
        \end{multline}
        Finally, to show $Q_n^2 = Q_n$, we begin by noting that  $Q_n^2$ is well-defined, as $P_n \hilbert \subset \domain(H)$.
        Using Eq.~\eqref{eq:pnrn_rn} and the definition of $R_n$, we get the following identities:
        \begin{align}
            (\id - P_n) R_n & = R_n - R_n = 0 \, , \\
            \label{proofeq:rnpn_rn}
            R_n P_n & = \left(   \hat{H}_n^{-1} \oplus 0_{\hilbert_n^\perp} \right) P_n = R_n\, \\
            R_n (\id -P_n) & = \left(   \hat{H}_n^{-1} \oplus 0_{\hilbert_n^\perp} \right) (\id - P_n) = 0 \, ,
        \end{align}
        as $P_n$ acts on $\hilbert_n$ as $\id$ and as $0$ elsewhere.
        Then, inserting two identities into $Q_n^2$ and using the first identity of the lemma and Eq.~\eqref{proofeq:rnpn_rn}, we obtain
        \begin{multline}
            Q_n^2 = R_n H R_n H = R_n (P_n +\id - P_n)H (P_n + \id - P_n) R_n H = R_n P_n H P_n R_n H \\
            = R_n H_n R_n H = R_n P_n H = R_n H = Q_n \, ,
        \end{multline}
        which completes the proof.
    \end{proof}

\begin{remark}\label{rem:galerkin_projection}
We point out that, in~\cite{parter-rolesstabilityconvergence-1980}, the operator $R_n$ is denoted by $R_n=H_n^{-1}P_n$. This is an (ultimately harmless) abuse of notation, since, as previously remarked, the operator $H_n$ is not invertible---only its restriction $\hat{H}_n$ to the finite-dimensional space $\hilbert_n$ is. Such an abuse of notation essentially consists in identifying $H_n$ and $\hat{H}_n$. Here we decided to avoid this abuse and keep the distinction between $H_n$ and $\hat{H}_n$ explicit.
\end{remark}
        
   The following property is a particular case of~\cite[Lemma 4.1]{parter-rolesstabilityconvergence-1980}: here we adapt it to the unitary scenario, and we provide a proof for completeness.
    \begin{lemma}
        \label{lem:trotter-approximation-theorem}
        Let $H$ be a coercive self-adjoint operator, and $H_n$ and $R_n$ as defined above. The following equality holds for all $\psi \in \hilbert, \, t \in\mathbb{R}$:
        \begin{equation}
            R_n \left( U(t) - U_n(t) \right) H^{-1} \psi = -\iu \int_0^t U_n(t-s)\left(R_n - P_n H^{-1}\right)U(s) \psi \dl s\, .
        \end{equation}
    \end{lemma}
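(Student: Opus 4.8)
The plan is to prove the stated integral identity by recognizing it as a Duhamel-type (variation-of-constants) formula comparing the two unitary evolutions $U(t)$ and $U_n(t)$. The natural strategy is to differentiate the quantity we want to integrate and verify that its integral telescopes to the left-hand side. Concretely, I would fix $\psi\in\hilbert$ and define the $\hilbert$-valued function
\begin{equation}
    g(s) = U_n(t-s)\,R_n\,U(s)\,H^{-1}\psi,\qquad s\in[0,t],
\end{equation}
and compute $\dl g/\dl s$. The key point is that $H^{-1}\psi\in\domain(H)$, so $U(s)H^{-1}\psi$ stays in $\domain(H)=\domain(Q_n)$ and all the operators below act on legitimate vectors; moreover $R_n$ is bounded, $U_n(t-s)$ is a norm-continuous (indeed entire) function of its argument generated by the \emph{bounded} operator $H_n$, and $U(s)$ is strongly differentiable on $\domain(H)$.

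Differentiating the product by the Leibniz rule gives two terms. The first comes from differentiating $U_n(t-s)$, producing a factor $+\iu H_n$; using the commutation identity $U_n(t)R_n=R_nU_n(t)$ from Lemma~\ref{lem:rn-prop} together with $H_nR_n=P_n$ (also from that lemma), this term collapses to $\iu\,U_n(t-s)P_nU(s)H^{-1}\psi$ after pulling $H_n$ through $R_n$. The second comes from differentiating $U(s)$, producing a factor $-\iu H$ acting on $U(s)H^{-1}\psi$; since $H$ commutes with its own propagator, $HU(s)H^{-1}\psi = U(s)\psi$, so this term becomes $-\iu\,U_n(t-s)R_n U(s)\psi$. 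Hence
\begin{equation}
    \diff*{g(s)}{s} = \iu\,U_n(t-s)\!\left(P_nH^{-1} - R_n\right)U(s)\psi,
\end{equation}
where I have written $P_nU(s)H^{-1}\psi = P_nH^{-1}U(s)\psi$ using that $U(s)$ commutes with $H^{-1}$ (both being functions of $H$).

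Integrating from $0$ to $t$ then yields $g(t)-g(0)$ on the left. At the endpoints, $g(t)=U_n(0)R_nU(t)H^{-1}\psi = R_nU(t)H^{-1}\psi = R_nH^{-1}U(t)\psi$, and $g(0)=U_n(t)R_nH^{-1}\psi = R_nU_n(t)H^{-1}\psi$ after again invoking $U_n(t)R_n=R_nU_n(t)$; since $U(t)$ and $U_n(t)$ commute with $H^{-1}$ and $R_n$ respectively in the appropriate sense, the difference $g(t)-g(0)$ equals $R_n(U(t)-U_n(t))H^{-1}\psi$, which is precisely the left-hand side of the claimed identity. Rearranging the sign from the integrated derivative gives the factor $-\iu$ in front of the integral.

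The main obstacle I anticipate is the rigorous justification of the differentiation under the composition, specifically ensuring that $s\mapsto U(s)H^{-1}\psi$ is strongly differentiable with derivative $-\iu\,U(s)\psi$ (this holds exactly because $H^{-1}\psi\in\domain(H)$ and $HU(s)H^{-1}\psi=U(s)\psi$), and that the Leibniz product rule is valid when one factor is differentiated only in the strong sense. Since $H_n$ and $R_n$ are bounded and $U_n(\cdot)$ is norm-differentiable, the only delicate point is the $U(s)$ factor, which is controlled by Stone's theorem on the dense domain $\domain(H)$. Once the derivative computation is in place, the bookkeeping with the Lemma~\ref{lem:rn-prop} identities ($H_nR_n=P_n$ and the commutation $U_n(t)R_n=R_nU_n(t)$) is routine, as is verifying that all the functions of $H$ involved ($U(s)$, $H^{-1}$) mutually commute.
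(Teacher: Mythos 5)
Your proof is correct and follows essentially the same route as the paper: both define the Duhamel-type function $G_n(s)=U_n(t-s)R_nU(s)H^{-1}\psi$, differentiate it using $H_nR_n=P_n$ and $U_n(t)R_n=R_nU_n(t)$ from Lemma~\ref{lem:rn-prop}, and integrate over $[0,t]$ to recover the claimed identity from the endpoint values. The justification you give for differentiability (boundedness of $H_n$ and $R_n$, plus Stone's theorem on $\domain(H)$ applied to $U(s)H^{-1}\psi$) matches the paper's argument.
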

    \begin{proof}
        We prove the statement for $t\geq 0$; for $t\leq0$, the proof is analogous.
        To begin with, we note that
        \begin{align}
            U(t) H \phi & = H U(t) \phi \,,  \quad\quad\!\diff*{U(t)}{t}\phi  = -\iu H U(t) \phi \qquad\quad\;\forall \phi \in \domain(H), \, \\
            U_n(t)H_n  \psi & = H_n U_n(t) \psi\, , \quad \diff*{U_n(t)}{t}\psi = -\iu H_n U_n(t) \psi \quad  \quad \!\forall \psi \in \hilbert\,.
        \end{align}        
        Given $t > 0$, $\psi \in \hilbert$, define the $\hilbert$-valued function $(0,t)\ni s\mapsto G_n(s)\in\hilbert$ by 
        \begin{equation}
            G_n(s) \coloneqq U_n(t-s) R_n U(s) H^{-1} \psi \, .
        \end{equation}
        Since $H^{-1}\psi\in\domain(H)$, the function $s\mapsto U(s)H^{-1}\psi$ is differentiable; as such,       
        $G_n(s)$ is a differentiable function through $\hilbert$ for $0 < s < t$. Applying the chain rule and using Lemma~\ref{lem:rn-prop}, its derivative reads
        \begin{equation}\label{eq:int1}
            \begin{split}
                \diff*{G_n(s)}{s} & =  U_n(t-s)R_n (-\iu H)U(s)H^{-1}\psi -(-\iu H_n) U_n(t-s) R_n U(s) H^{-1} \psi \\
                & = - \iu U_n(t-s) R_n U(s) \psi+\iu U_n(t-s) P_n H^{-1} U(s) \psi  \\
                & = - \iu U_n(t-s) \left(R_n - P_n H^{-1} \right) U(s) \psi \,.
            \end{split}
        \end{equation}
On the other hand, using $R_n U_n(t) = U_n(t) R_n$ (again by Lemma~\ref{lem:rn-prop}) and the definition of $G_n(t)$,
        \begin{equation}\label{eq:int2}
            \begin{split}
                \int_0^t \diff*{G_n(s)}{s} \dl s &  = G_n(t)-G_n(0)\\&=U_n(t-t) R_n U(t) H^{-1} \psi - U_n(t-0) R_n U(0) H^{-1} \psi \\
                & = R_n \left( U(t) - U_n(t) \right) H^{-1} \psi,
            \end{split}  
        \end{equation}
whence the claimed equality is obtained by integrating Eq.~\eqref{eq:int1} and computing the left-hand side by means of Eq.~\eqref{eq:int2}.
    \end{proof}
    We apply the previous lemma to the Galerkin projection $Q_n$.
    \begin{lemma}
        \label{lem:trotter-approximation-theorem-qn}
        For all $\psi \in \domain(H^2)$ and $t\in\mathbb{R}$,
        \begin{equation}
            Q_n U(t) \psi - U_n(t) Q_n \psi = -\iu \int_0^t U_n(t-s) P_n \left(Q_n -  \id \right) U(s) H \psi \dl s \, .
        \end{equation}
    \end{lemma}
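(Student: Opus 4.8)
The plan is to reduce Lemma~\ref{lem:trotter-approximation-theorem-qn} to the already-established Lemma~\ref{lem:trotter-approximation-theorem} by a judicious choice of test vector, exploiting the definition $Q_n=R_nH$ together with the commutation identities collected in Lemma~\ref{lem:rn-prop}. Concretely, I would apply Lemma~\ref{lem:trotter-approximation-theorem} not to $\psi$ itself but to the vector $H^2\psi$, which lies in $\hilbert$ precisely because $\psi\in\domain(H^2)$ --- this is exactly why the stronger domain hypothesis is imposed. Since $H\psi\in\domain(H)$, we have $H^{-1}(H^2\psi)=H\psi$, so the left-hand side of Lemma~\ref{lem:trotter-approximation-theorem} specializes to $R_n\bigl(U(t)-U_n(t)\bigr)H\psi$.

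The next step is to recognize this expression as the desired left-hand side. Using that $H$ commutes with $U(t)$ on $\domain(H)$ and that $U(t)\psi\in\domain(H)$, I would rewrite $R_nU(t)H\psi=R_nHU(t)\psi=Q_nU(t)\psi$; and using the commutation $U_n(t)R_n=R_nU_n(t)$ from Lemma~\ref{lem:rn-prop} together with $Q_n=R_nH$, I would likewise rewrite $R_nU_n(t)H\psi=U_n(t)R_nH\psi=U_n(t)Q_n\psi$. Hence $R_n\bigl(U(t)-U_n(t)\bigr)H\psi=Q_nU(t)\psi-U_n(t)Q_n\psi$, which matches the claimed left-hand side.

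It then remains to identify the integrand. Setting $\chi\coloneqq U(s)H\psi\in\domain(H)$ and using commutativity one finds $U(s)H^2\psi=H\chi$, so the integrand produced by Lemma~\ref{lem:trotter-approximation-theorem} reads $\bigl(R_n-P_nH^{-1}\bigr)H\chi$. I would expand this as $R_nH\chi-P_nH^{-1}H\chi=Q_n\chi-P_n\chi$, and then invoke $P_nQ_n=Q_n$ (Lemma~\ref{lem:rn-prop}) to write $Q_n\chi-P_n\chi=P_n(Q_n-\id)\chi=P_n(Q_n-\id)U(s)H\psi$. Substituting this back into the integral yields exactly the stated formula.

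Since the argument is essentially bookkeeping, I do not expect a genuine analytic obstacle; the one point requiring care is domain tracking. All the manipulations --- moving $H$ through $U(t)$ and $U(s)$, applying $H^{-1}H=\id$, and reading $R_nH$ as $Q_n$ --- are legitimate only on $\domain(H)$, and the whole scheme hinges on $\phi=H^2\psi$ lying in $\hilbert$ with $H^{-1}\phi=H\psi\in\domain(H)$. This is what forces the hypothesis $\psi\in\domain(H^2)$ rather than the weaker $\psi\in\domain(H)$, so the only subtlety to watch is verifying that each intermediate vector sits in the correct domain before any $H$ is commuted, inverted, or cancelled.
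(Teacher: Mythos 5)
Your proposal is correct and follows essentially the same route as the paper: apply Lemma~\ref{lem:trotter-approximation-theorem} to the vector $H^2\psi$, use the commutation identities $HU(t)=U(t)H$ on $\domain(H)$, $U_n(t)R_n=R_nU_n(t)$, and $P_nQ_n=Q_n$ from Lemma~\ref{lem:rn-prop} to identify both the left-hand side and the integrand. The domain bookkeeping you flag is exactly the point the paper relies on as well, so there is nothing to add.
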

    \begin{proof}
        Let $\psi \in \domain(H^2)$.
        Then, using the identities $R_n U_n(t) = U_n(t) R_n$ and $P_n Q_n = Q_n$ (Lemma~\ref{lem:rn-prop}), and applying Lemma~\ref{lem:trotter-approximation-theorem} to the vector $H^2\psi\in\hilbert$, we have
        \begin{equation}
            \begin{split}     
                Q_n U(t) \psi - U_n(t) Q_n \psi & = R_n H U(t) H^{-1} H \psi - U_n(t) R_n H H^{-1} H \psi \\
                & = R_n \left( U(t) - U_n(t) \right) H^{-1} H^2 \psi \\
                & = -\iu \int_0^t U_n(t-s)\left(R_n -P_n H^{-1} \right) U(s) H^2 \psi \dl s \\
                & = -\iu \int_0^t U_n(t-s) \left( R_n - P_nH^{-1} \right) HU(s) H \psi \dl s \\
                & = -\iu \int_0^t U_n(t-s) \left( R_n H - P_n \right) U(s) H \psi \dl s \\
                & = -\iu \int_0^t U_n(t-s) P_n \left( Q_n - \id \right) U(s) H \psi \dl s \, ,
            \end{split}
        \end{equation}
        which is the claimed identity.
    \end{proof}

\begin{proof}[Proof of Proposition~\ref{prop:galerkin_criterion}]
    Let $H$ be a coercive self-adjoint operator, i.e., there exists $\gamma >0$ such that $\braket{\psi,H \psi} \geq \gamma \norm{\psi}^2$ for all $\psi \in \domain(H)$.
    Thus, by Proposition~\ref{prop:coercive-implies-invertibility}, $H$ is invertible and $\|H^{-1}\|\leq\frac{1}{\gamma}$; in particular, $Q_n$ is well-defined.     
    We now show that $R_n$ is also bounded by $\frac{1}{\gamma}$.
    To begin with,
    \begin{equation}
        \braket{\psi_n,H_n \psi_n} = \braket{\psi_n,H\psi_n} \geq \gamma \norm{\psi_n}^2 \quad \text{for all }\psi_n \in \hilbert_n \, .
    \end{equation}
    Let $\hat{H}_n$ and $\hat{\psi}_n$ be the restrictions of $H_n$ and $\psi_n$ on $\hilbert_n$.
    Then,
    \begin{equation}
        \braket{\hat{\psi}_n, \hat{H}_n \hat{\psi}_n} \geq \gamma \norm{\hat{\psi_n}}\,, 
    \end{equation}
    and, by the same argument as in the proof of Proposition~\ref{prop:coercive-implies-invertibility}, $\norm{\hat{H}_n^{-1}} \leq \frac{1}{\gamma}$.
    As $R_n = \hat{H}_n^{-1} \oplus 0_{\hilbert_n^\perp}$, we have $\norm{R_n} \leq \frac{1}{\gamma}$ for all $n \in \nnum$.
    We now return to proving 
    \begin{equation}
        \lim_{n \to \infty} \norm{U(t) \psi - U_n(t) \psi} = 0 \quad \forall \psi \in \hilbert \, ,\;t\in\mathbb{R}.
    \end{equation}
    As $U(t)$ and the $U_n(t)$ are unitary, and thus bounded operators, it is sufficient to prove convergence on a dense set.
    We can choose the dense subspace $\domain(H^2)$, cf.~\cite[p.~180]{reed-mmmp2-fourier-1975}.
    Then, given $\psi \in \domain(H^2)$, we have
    \begin{equation}\label{eq:bound}
        \begin{split}
            \norm{U(t)\psi -U_n(t)\psi }  \leq &\;\norm{U(t)\psi - Q_n U(t) \psi} + \norm{Q_n U(t) \psi - U_n(t)Q_n \psi} \\
            & + \norm{U_n(t) Q_n \psi -U_n(t) \psi} \\
             = &\;\norm{(Q_n - \id) U(t) \psi} + \norm{Q_n U(t) \psi - U_n(t)Q_n \psi} \\
            & + \norm{ (Q_n -\id)\psi} \, .\
        \end{split}
    \end{equation}
    Per assumption, $\lim_{n \to \infty} \norm{Q_n \phi - \phi} = 0$ for all $\phi \in \domain(H)$. Since both $\psi$ and $U(t)\psi$ are in $\domain(H^2)\subset\domain(H)$, the first and third term above converge to zero. We must still prove
    \begin{equation}
        \lim_{n \to \infty} \norm{Q_n U(t) \psi - U_n(t) Q_n \psi} = 0 \, .
    \end{equation}
    For this purpose, consider the following function:
    \begin{equation}\label{eq:fn}
        f_n(s) = \norm{(Q_n -\id) U(s) H \psi} \,, \quad s \in [0,t] \, ,
    \end{equation}
    which is well-defined since $\psi\in\domain(H^2)$ implies $H\psi\in\domain(H)$ and again $U(t)H\psi\in\domain(H)$.
    Then,
    \begin{equation}
        \lim_{n \to \infty} f_n(s) = \lim_{n \to \infty} \norm{(Q_n -\id) U(s) H \psi} = 0 \quad \forall s \in [0,t].
    \end{equation}
     Furthermore, for all $s \in [0,t]$,
    \begin{align}
        \label{eq:bound-integral}
        \norm{(Q_n -\id) U(s) H \psi} &= \norm{(R_n -H^{-1}) U(s) H^2 \psi} \\ \nonumber&\leq \norm{(R_n -H^{-1})}\norm{U(s)H^2 \psi} \leq \frac{2}{\gamma} \norm{H^2 \psi}\, ,
    \end{align}
    as $\norm{R_n} \leq \frac{1}{\gamma}$ and $\norm{H^{-1}} \leq \frac{1}{\gamma}$ as well.
    Thus, the functions $f_n(s)$ converge pointwise to $0$ as $n\to\infty$ for every $s\in[0,t]$, and are uniformly bounded by $\frac{2}{\gamma} \norm{H^2 \psi}$.
    Using Lemma~\ref{lem:trotter-approximation-theorem-qn},
    \begin{align}
     \norm{Q_n U(t) \psi - U_n(t) Q_n \psi} &\leq \int_0^t \norm{U_n(t-s) P_n \left(Q_n -  \id \right) U(s) H \psi} \dl s \nonumber\\& \leq \int_0^t \norm{ \left(Q_n -  \id \right) U(s) H \psi} \dl s\, \nonumber\\&= \int_0^t f_n(s) \dl s\, .
    \end{align}
    As $[0,t]$ is compact and the $f_n(s)$ are uniformly bounded, by Lebesgue's dominated convergence theorem~\cite[p.~24]{reed-mmmp1-funkana-1980} we get
    \begin{equation}
        \lim_{n \to \infty} \norm{Q_n U(t) \psi - U_n(t) Q_n \psi} = \lim_{n \to \infty} \int_0^t f_n(s) \dl s = 0\, ,
    \end{equation}
    which proves the claim.
\end{proof}

\begin{remark}
This result could be alternatively proven by using the Kato--Trotter approximation theorem~\cite[p.~209]{nagel-oneparametersemigroups-1999}. The proof employed in the paper has the advantage of being directly suitable to calculating state-dependent convergence rates for Galerkin approximations, which are especially useful for applications. For example, if $\psi$ is an eigenvector of $H$ with eigenvalue $E$, the functions $f_n(s)$ as defined in Eq.~\eqref{eq:fn} satisfy $f_n(s) \leq E \norm{Q_n \psi-\psi}$, whence
    \begin{equation}
        \norm{Q_n U(t) \psi - U_n(t) Q_n \psi} \leq E t \norm{(Q_n-\id) \psi} \, ;
    \end{equation}
    in such a case, the convergence rate of each of the three terms in Eq.~\eqref{eq:bound} essentially depends on the (state-dependent) convergence rate of $Q_n$ to the identity.
    
    In general, the tighter the bound on the functions $f_n(s)$, the tighter the one on the error. Applying such arguments to concrete models will be the object of future research.
\end{remark}

\subsection{Proof of Proposition~\ref{prop:friedrichs}}
\label{sec:proof-friedrichs}

We begin by noting that, while in the definition of Galerkin projectors (Definition~\ref{def:galerkin_approximation}) we required our operator $H$ to be coercive ($\braket{\psi,H\psi}\geq\gamma\|\psi\|^2$ with $\gamma>0$) and thus to admit a bounded inverse, Proposition~\ref{prop:friedrichs} only requires boundedness from below---i.e., with $\gamma$ possibly being zero or negative. As anticipated in Remark~\ref{rem:waiving_coerciveness}, this is possible since any symmetric operator bounded from below only differs from a coercive one by an immaterial shift. For this purpose, we need to check that such a shift does not affect the Friedrichs extension---that is, the Friedrichs extensions of two operators only differing by a constant differ themselves by the same constant:
\begin{lemma}
\label{lem:friedrich_plus_b}
    Let $A:\domain(A)\subset\hilbert\rightarrow\hilbert$ be a densely defined symmetric operator bounded from below, $b \in \rnum$, and $B=A+b$.
    Then the Friedrichs extensions $\tilde{A},\tilde{B}$ of $A$ and $B$ satisfy $\tilde{B}=\tilde{A}+b$.
\end{lemma}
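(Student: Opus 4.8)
The plan is to work entirely at the level of the sesquilinear forms underlying the Friedrichs construction of Proposition~\ref{prop:def-friedrich}, exploiting the fact that a constant shift leaves the relevant non-negative form untouched once the lower bounds are paired correctly. Fix a lower bound $\gamma\in\rnum$ for $A$, so that $q_A(\psi,\psi)\geq\gamma\norm{\psi}^2$ for all $\psi\in\domain(A)$. Since $B=A+b$ has the same domain $\domain(B)=\domain(A)$ and satisfies $q_B(\psi,\psi)=q_A(\psi,\psi)+b\norm{\psi}^2\geq(\gamma+b)\norm{\psi}^2$, the natural lower bound to use for $B$ is $\gamma+b$. With this pairing the two non-negative forms coincide exactly on $\domain(A)=\domain(B)$:
\begin{equation}
    q_{B-(\gamma+b)}(\psi,\varphi)=q_B(\psi,\varphi)-(\gamma+b)\braket{\psi,\varphi}=q_A(\psi,\varphi)-\gamma\braket{\psi,\varphi}=q_{A-\gamma}(\psi,\varphi).
\end{equation}

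From here everything follows by uniqueness. Because the two non-negative forms agree, so do the associated stronger norms of Eq.~\eqref{eq:plusnorm}, giving $\|\cdot\|_{+,A}=\|\cdot\|_{+,B}$ on $\domain(A)$; hence their closures coincide, so $\domain(\tilde{q}_{B-(\gamma+b)})=\domain(\tilde{q}_{A-\gamma})$, and by the uniqueness of the form extension in Proposition~\ref{prop:def-friedrich} the extended forms themselves coincide, $\tilde{q}_{B-(\gamma+b)}=\tilde{q}_{A-\gamma}$.

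It remains to identify the operators. The idea is to verify that the candidate $C:=\tilde{A}+b$ satisfies the two properties that characterize $\tilde{B}$ uniquely in Proposition~\ref{prop:def-friedrich}. Indeed $C$ is self-adjoint with domain $\domain(C)=\domain(\tilde{A})\subset\domain(\tilde{q}_{A-\gamma})=\domain(\tilde{q}_{B-(\gamma+b)})$, and since $C-(\gamma+b)=\tilde{A}-\gamma$, for all $\psi,\varphi\in\domain(\tilde{A})$ we have
\begin{equation}
    \braket{\psi,(C-(\gamma+b))\varphi}=\braket{\psi,(\tilde{A}-\gamma)\varphi}=\tilde{q}_{A-\gamma}(\psi,\varphi)=\tilde{q}_{B-(\gamma+b)}(\psi,\varphi).
\end{equation}
By the uniqueness clause of Proposition~\ref{prop:def-friedrich} applied to $B$ with lower bound $\gamma+b$, this forces $\tilde{B}=C=\tilde{A}+b$.

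The only point demanding care is the bookkeeping of the lower bounds: the Friedrichs construction is phrased relative to a chosen $\gamma$, and the whole argument hinges on pairing the bound $\gamma$ for $A$ with the bound $\gamma+b$ for $B$ so that the shifted forms become \emph{literally} equal rather than merely comparable. One could separately observe that the form domain, and hence $\tilde{A}$ itself, is independent of which valid lower bound is chosen---different admissible choices of $\gamma$ yield equivalent $\|\cdot\|_+$-norms with the same closure---but the canonical pairing above sidesteps the need for this remark.
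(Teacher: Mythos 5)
Your proof is correct and follows essentially the same route as the paper: pair the lower bound $\gamma$ for $A$ with $\gamma+b$ for $B$, observe that the shifted forms and hence the $\|\cdot\|_+$-norms and closed form domains coincide, and conclude by the uniqueness clause of Proposition~\ref{prop:def-friedrich}. The only (immaterial) difference is the direction of the final uniqueness argument — you show $\tilde{A}+b$ is the operator characterizing $\tilde{B}$, while the paper shows $\tilde{B}-b$ is a self-adjoint extension of $A$ with domain in $\domain(\tilde{q}_{A-\gamma})$ and hence equals $\tilde{A}$.
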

\begin{proof}
    By assumption, there is $\gamma\in\mathbb{R}$ such that $\braket{\psi,A\psi} \geq \gamma \norm{\psi}^2$ for all $\psi \in \domain(A)$.
    Consequently,
    \begin{equation}
        \braket{\psi,B\psi} = \braket{\psi,A\psi+b\psi} \geq (\gamma+b) \norm{\psi}^2 \quad \forall \psi \in \domain(B) = \domain(A) \, .
    \end{equation}
    Let $\norm{\psi}_{+,A}$ and $\norm{\psi}_{+,B}$ be the norms associated to $A$ and $B$ as defined in Eq.~\eqref{eq:plusnorm}: for all $\psi\in\domain(A)=\domain(B)$,
    \begin{align}
        \norm{\psi}_{+,A}^2 & = q_{A-\gamma}(\psi,\psi) + \norm{\psi}^2 = \braket{\psi,(A-\gamma+1)\psi},\\
        \norm{\psi}_{+,B}^2 & = q_{B-\gamma-b}(\psi,\psi) + \norm{\psi}^2 = \braket{\psi,(B-\gamma-b+1)\psi},
    \end{align}
    Thus, $\norm{\psi}_{+,A} = \norm{\psi}_{+,B}$ for all $\psi \in \domain(A)=\domain(B)$, and so clearly are the form domains of their Friedrichs extensions:
    \begin{equation}
        \domain(\tilde{q}_{A-\gamma}) = \overline{\domain(A)}^{\norm{}_{+,A}} = \overline{\domain(B)}^{\norm{}_{+,B}} = \domain(\tilde{q}_{B-\gamma-b}) \, .
    \end{equation}
    Let $\tilde{A}$ and $\tilde{B}$ be the Friedrichs extensions of $A$ and $B$; we claim $\tilde{B}=\tilde{A}+b$. To this extent, it suffices to note that $\tilde{B}-b$ is a self-adjoint extension of $A$, since
    \begin{equation}
        (\tilde{B}-b) \psi = B\psi - b \psi = A\psi \quad \text{for all } \psi \in \domain(A) = \domain(B) \, ,
    \end{equation}
and its domain is simply 
\begin{equation}
        \domain(\tilde{B}-b) = \domain(\tilde{B}) \subset \domain(\tilde{q}_{B-\gamma-b}) =  \domain(\tilde{q}_{A-\gamma}) \, .
    \end{equation}
    But $\tilde{A}$ is the only self-adjoint extension of $A$ whose domain is a subset of $\domain(\tilde{q}_{A-\gamma})$ (cf.~Definition~\ref{def:friedrichs_extension}).
    Thus $\tilde{B}-b= \tilde{A}$.
\end{proof}

\begin{proposition}
    \label{prop:friedrich-galerkin-projection}
    Let $H:\domain(H)\subset\hilbert\rightarrow\hilbert$ be a coercive self-adjoint operator, $(P_n)_{n \in \nnum}$ a family of orthogonal projectors satisfying $P_n \hilbert \subset \domain(H)$, $P_n P_m = P_n$ for all $n \leq m$, and $P_n \psi \to \psi$ for all $\psi \in \hilbert$. Define
\begin{equation}
    \hilbert_{\rm fin}:=\bigcup_{n\in\mathbb{N}}P_n\hilbert,
\end{equation}
and let $H_{\rm fin}$ be the restriction of $H$ to $\hilbert_{\rm fin}$.
    Then $H_{\rm fin}$ is densely defined, symmetric and positive. Let $\tilde{H}_{\rm fin}$ be the Friedrichs extension  of $H_{\rm fin}$, cf.~Definition~\ref{def:friedrichs_extension}, and $\tilde{Q}_n$ be the Galerkin projection associated with $\tilde{H}_{\rm fin}$, cf.~Definition~\ref{def:galerkin_projector}. Then
    \begin{equation}
        \lim_{n \to \infty} \tilde{Q}_n \psi = \psi \quad \text{for all } \psi \in \domain(\tilde{H}_{\rm fin}) \, .
    \end{equation}
\end{proposition}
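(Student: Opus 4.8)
The plan is to separate the elementary structural assertions about $H_{\rm fin}$ from the substantive convergence statement, and to reduce the latter to a standard fact about orthogonal projections---once the right inner product is identified. First I would dispatch the structural claims. The space $\hilbert_{\rm fin}=\bigcup_{n}P_n\hilbert$ is dense since it contains every $P_n\hilbert$ and $P_n\to\id$ strongly; symmetry of $H_{\rm fin}$ is inherited from the self-adjointness of $H$ by restriction, and coercivity (hence positivity) with the same constant $\gamma>0$ is likewise inherited. Being densely defined, symmetric and bounded below by $\gamma>0$, $H_{\rm fin}$ admits its Friedrichs extension $\tilde{H}_{\rm fin}$ (Definition~\ref{def:friedrichs_extension}), which by Proposition~\ref{prop:def-friedrich} is self-adjoint and again bounded below by $\gamma>0$; in particular $\tilde{H}_{\rm fin}$ is coercive, so the associated Galerkin projector $\tilde{Q}_n$ of Definition~\ref{def:galerkin_projector} is well defined. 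Note also that $P_n\hilbert\subset\hilbert_{\rm fin}=\domain(H_{\rm fin})\subset\domain(\tilde{H}_{\rm fin})$, so all objects in Definitions~\ref{def:galerkin_approximation} and~\ref{def:galerkin_projector} are meaningful for $\tilde{H}_{\rm fin}$.

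The key idea is to view the closed form $\tilde{q}(\varphi,\psi)=\braket{\varphi,\tilde{H}_{\rm fin}\psi}$ associated with $\tilde{H}_{\rm fin}$, with form domain $\mathcal{V}:=\domain(\tilde{q}_{H_{\rm fin}-\gamma})=\overline{\hilbert_{\rm fin}}^{\,\|\cdot\|_+}$ (Proposition~\ref{prop:def-friedrich}), as an \emph{inner product} on $\mathcal{V}$, and to identify $\tilde{Q}_n$ as the corresponding orthogonal projection. Since $\tilde{H}_{\rm fin}$ is coercive, $\tilde{q}(\varphi,\varphi)\geq\gamma\|\varphi\|^2$, and a one-line estimate shows the norm $\|\cdot\|_{\tilde{q}}:=\tilde{q}(\cdot,\cdot)^{1/2}$ is equivalent to the form norm $\|\cdot\|_+$ of Eq.~\eqref{eq:plusnorm}; thus $(\mathcal{V},\tilde{q})$ is a Hilbert space in which $\hilbert_{\rm fin}$ is dense \emph{by the very construction of the Friedrichs extension}. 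I would then prove the Galerkin orthogonality
\[
\tilde{q}(\chi,\tilde{Q}_n\psi-\psi)=0\qquad\text{for all }\chi\in\hilbert_n,\ \psi\in\domain(\tilde{H}_{\rm fin}),
\]
starting from the identity $\tilde{H}_n\tilde{Q}_n=P_n\tilde{H}_{\rm fin}$ furnished by Lemma~\ref{lem:rn-prop} (applied to $\tilde{H}_{\rm fin}$): for $\chi\in\hilbert_n$ one writes $\tilde{q}(\chi,\tilde{Q}_n\psi)=\braket{\chi,P_n\tilde{H}_{\rm fin}\tilde{Q}_n\psi}=\braket{\chi,\tilde{H}_n\tilde{Q}_n\psi}=\braket{\chi,P_n\tilde{H}_{\rm fin}\psi}=\tilde{q}(\chi,\psi)$, using $P_n\chi=\chi$ and $P_n\tilde{Q}_n\psi=\tilde{Q}_n\psi$. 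Since moreover $\tilde{Q}_n\psi\in\hilbert_n$, this says precisely that $\tilde{Q}_n\psi$ is the $\tilde{q}$-orthogonal projection of $\psi$ onto $\hilbert_n$.

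To conclude, I would invoke the standard fact that for an increasing sequence of closed subspaces $\hilbert_n$---nested because $P_nP_m=P_n$ for $n\leq m$ together with $P_n=P_n^*$ forces $\hilbert_n\subset\hilbert_m$---with dense union, the orthogonal projections converge strongly to the identity. As $\overline{\bigcup_n\hilbert_n}^{\,\|\cdot\|_{\tilde{q}}}=\overline{\hilbert_{\rm fin}}^{\,\|\cdot\|_{\tilde{q}}}=\mathcal{V}$, this yields $\|\tilde{Q}_n\psi-\psi\|_{\tilde{q}}\to0$ for every $\psi\in\mathcal{V}$, in particular for $\psi\in\domain(\tilde{H}_{\rm fin})\subset\mathcal{V}$; coercivity then gives $\|\tilde{Q}_n\psi-\psi\|\leq\gamma^{-1/2}\|\tilde{Q}_n\psi-\psi\|_{\tilde{q}}\to0$, which is the claim.

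The main obstacle is the middle step: showing that $\tilde{Q}_n$, which is genuinely \emph{non-self-adjoint} as an operator on $\hilbert$ (Lemma~\ref{lem:rn-prop}), is nevertheless an honest orthogonal projection when the geometry is measured by the form inner product $\tilde{q}$. Everything hinges on this reinterpretation, combined with the observation that the Friedrichs form domain is by definition the $\|\cdot\|_+$-closure of $\hilbert_{\rm fin}$, so that the density needed for convergence is automatic; the remaining work is purely form-theoretic bookkeeping (norm equivalence, and checking that all vectors stay in the domains where the manipulations are legitimate).
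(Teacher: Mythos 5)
Your proposal is correct and follows essentially the same route as the paper: both proofs hinge on the Galerkin orthogonality $\tilde q(\chi,\tilde Q_n\psi-\psi)=0$ obtained from $\tilde H_n\tilde Q_n=P_n\tilde H_{\rm fin}$ (Lemma~\ref{lem:rn-prop}), on coercivity to pass from the form norm back to $\|\cdot\|$, and on the defining property that $\domain(\tilde H_{\rm fin})$ sits inside the $\|\cdot\|_+$-closure of $\hilbert_{\rm fin}$. The only difference is packaging: the paper spells out the quasi-optimality (C\'ea) inequality $\tilde q(\psi-\tilde Q_n\psi,\psi-\tilde Q_n\psi)\leq\tilde q(\psi-\phi_n,\psi-\phi_n)$ and an explicit $\epsilon$-argument, whereas you recognize $\tilde Q_n$ as the $\tilde q$-orthogonal projection onto the nested subspaces $\hilbert_n$ and invoke the standard strong convergence of such projections---the same content in different words.
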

\begin{proof}
To begin with, let us check that $H_{\mathrm{fin}}$ is densely defined, symmetric and positive.
The fact that $\hilbert_{\rm fin}$ is dense follows directly from the property $P_n\psi\to\psi$: given $\psi\in\hilbert$, the sequence $(P_n\psi)_{n\in\mathbb{N}}\subset\hilbert_{\rm fin}$ converges to $\psi$.
As $H$ is symmetric, its restriction $H_{\mathrm{fin}}$ is also a symmetric operator.
Furthermore, per assumption there exists $\gamma>0$ such that $\braket{\psi,H\psi} \geq \gamma\norm{\psi}^2$ for all $\psi \in \domain(H)$, and therefore its restriction $H_{\rm fin}$ also satisfies
\begin{equation}
    q(\varphi,\varphi):=\braket{\varphi,H_{\rm fin}\varphi}\geq\gamma\|\varphi\|^2
\end{equation}
for all $\varphi\in\hilbert_{\rm fin}$. As such, $H_{\rm fin}$ has a Friedrichs extension $\tilde{H}_{\rm fin}:\domain(\tilde{H}_{\rm fin})\subset\hilbert\rightarrow\hilbert$ whose associated quadratic form $\tilde{q}$ also satisfies
\begin{equation}
\label{proofeq:tildeq-positive}
    \tilde{q} (\varphi,\varphi) = \braket{\varphi,\tilde{H}_{\rm fin} \varphi} \geq \gamma\norm{\varphi}^2 \quad \forall \varphi \in \domain(\tilde{H}_{\rm fin}) \, .
\end{equation}
Besides, as $\tilde{q}$ is a symmetric positive form on the linear space $\domain(\tilde{H}_{\rm fin})$, it satisfies the Cauchy--Schwarz inequality:
\begin{equation}
    \label{proofeq:cauchy-schwary-q}
    |\tilde{q}(\psi,\varphi)|^2 \leq \tilde{q}(\varphi,\varphi)\tilde{q}(\psi,\psi) \quad \forall \varphi,\psi \in \domain(\tilde{H}_{\rm fin}) \, .
\end{equation}
Also notice that, by the very definition of $\hilbert_{\rm fin}$, $P_n\hilbert\subset\hilbert_{\rm fin}=\mathcal{D}(H_{\rm fin})\subset\mathcal{D}(\tilde{H}_{\rm fin})$. 

The two operators $H$ and $\tilde{H}_{\rm fin}$ are, in general, distinct self-adjoint extensions of $H_{\rm fin}$. However, since $P_n\hilbert\subset\hilbert_{\rm fin}$, the corresponding Galerkin approximations, $P_nHP_n$ and $P_n\tilde{H}_{\rm fin}P_n$ actually coincide:
\begin{equation}
\label{proofeq:galerkin-approx-equal}
   H_n =  P_nHP_n=P_n\tilde{H}_{\rm fin}P_n =: \tilde{H}_{\mathrm{fin},n} \qquad\text{for all }n\in\mathbb{N},
\end{equation}
while the corresponding Galerkin projectors, cf.~Definition~\ref{def:galerkin_projector}, generally do not. In the following, let $\tilde{Q}_n$ be the $n$th Galerkin projector associated with $\tilde{H}_{\rm fin}$.  We claim the following:
\begin{equation}
    \lim_{n\to\infty}\|\tilde{Q}_n\psi-\psi\|=0\qquad\text{for all }\psi\in\domain(\tilde{H}_{\rm fin}).
\end{equation}
Let $\psi\in\domain(\tilde{H}_{\rm fin})$, $n\in\mathbb{N}$, $\psi_n = \tilde{Q}_n \psi$ and $\hilbert_n = P_n \hilbert$.
Then, using Eq.~\eqref{proofeq:galerkin-approx-equal} and Lemma~\ref{lem:rn-prop},
\begin{equation}
    H_n \psi_n = \tilde{H}_{\mathrm{fin},n} \tilde{Q}_n \psi = P_n \tilde{H}_{\mathrm{fin}} \psi\, , 
\end{equation}
and therefore
\begin{equation}
    \Braket{H_n \psi_n,\phi_n} = \Braket{\tilde{H}_{\rm fin} \psi,\phi_n} \quad \text{for all } \phi_n \in \hilbert_n \, .
\end{equation}
Thus, 
\begin{equation}
    \Braket{\phi_n,\tilde{H}_{\rm fin}(\psi - \psi_n)} = 0\qquad\text{for all }\phi_n\in\hilbert_n,
\end{equation}
and in particular
\begin{equation}
    \Braket{\psi_n,\tilde{H}_{\rm fin}(\psi - \psi_n)} = 0.
\end{equation}
As such, for all $\phi_n\in\hilbert_n$ we have
\begin{align}\label{eq:switch}
   \tilde{q}(\psi-\psi_n,\psi-\psi_n)&= \Braket{\psi-\psi_n,\tilde{H}_{\rm fin}(\psi - \psi_n) }\nonumber\\ &= \Braket{\psi-\phi_n,\tilde{H}_{\rm fin}(\psi - \psi_n) } \nonumber\\&= \tilde{q}(\psi-\phi_n,\psi-\psi_n),
\end{align}
whence, using Eq.~\eqref{eq:switch} and the Cauchy--Schwarz inequality Eq.~\eqref{proofeq:cauchy-schwary-q}, for all $\phi_n\in\hilbert_n$ we have
\begin{align}
    |\tilde{q}(\psi-\psi_n,\psi-\psi_n)|&=|\tilde{q}(\psi-\phi_n,\psi-\psi_n)|\nonumber\\
    &\leq  \sqrt{\tilde{q}(\psi-\phi_n,\psi-\phi_n)} \sqrt{\tilde{q}(\psi-\psi_n,\psi-\psi_n)},
\end{align}
and therefore
\begin{align}\label{eq:minimum}
    |\tilde{q}(\psi-\psi_n,\psi-\psi_n)|\leq|\tilde{q}(\psi-\phi_n,\psi-\phi_n)|\qquad\text{for all }\phi_n\in\hilbert_n;
\end{align}
that is, the absolute value of $\tilde{q}(\psi-\phi_n,\psi-\phi_n)$ for $\phi_n$ ranging in $\hilbert_n$ reaches its minimum at $\psi_n$. Finally, using the coercivity of $\tilde{q}$ (Eq.~\eqref{proofeq:tildeq-positive}) and Eq.~\eqref{eq:minimum}, we obtain\footnote{This is a special case of Céa's lemma~\cite{cea-approximationvariationnelleproblemes-1964}, which gives general bounds on Galerkin approximations with respect to suitable norms on Banach spaces~\cite[p.~64]{brenner-mathematicaltheoryfinite-2008}.}
\begin{align}\label{proofeq:cea}
\norm{\tilde{Q}_n\psi-\psi}^2&=\norm{\psi-\psi_n}^2 \nonumber\\&\leq   \frac{1}{\gamma}\tilde{q}(\psi-\psi_n,\psi-\psi_n)\nonumber\\&\leq\frac{1}{\gamma}\tilde{q}(\psi-\phi_n,\psi-\phi_n)\nonumber\\&\leq\frac{1}{\gamma}\|\psi-\phi_n\|^2_+\qquad\text{for all }\phi_n\in\hilbert_n,
\end{align}
with the norm $\|\cdot\|_+$ as defined in Eq.~\eqref{eq:plusnorm}.

Now, recall that, by Definition~\ref{def:friedrichs_extension}, the domain of the Friedrichs extension $\domain(\tilde{H}_{\rm fin})$ of $H_{\rm fin}$ satisfies
\begin{equation}\label{eq:crucial_inclusion}
    \domain(\tilde{H}_{\rm fin})\subset\overline{\domain(H_{\rm fin})}^{\norm{}_+} \, =\overline{\bigcup_{n \in \nnum} P_n \hilbert}^{\norm{}_+} \, ,
\end{equation}
Let $\epsilon>0$. By the equality above, there exists $N_\epsilon\in\mathbb{N}$ and $\phi_{N_\epsilon}\in\hilbert_{N_\epsilon}$ such that
\begin{equation}\label{eq:epsilon}
    \|\psi-\phi_{N_\epsilon}\|_+\leq\epsilon\sqrt{\gamma};
\end{equation}
besides, clearly $\phi_{N_\epsilon}\in\hilbert_n$ for every $n\geq N_\epsilon$ as well since we required $P_mP_n=P_m$ for all $n\geq m$. By combining Eqs.~\eqref{proofeq:cea} and~\eqref{eq:epsilon} we finally get, for all $n\geq N_\epsilon$,
\begin{equation}
    \norm{\tilde{Q}_n\psi-\psi}\leq\frac{1}{\sqrt{\gamma}}\|\psi-\phi_{N_\epsilon}\|_+\leq\epsilon,
\end{equation}
and thus $\lim_{n\to\infty}\|\tilde{Q}_n\psi-\psi\|=0$ as claimed.
\end{proof}

\begin{remark}
    Note that, in the proof of Proposition~\ref{prop:friedrich-galerkin-projection}, the fact that $\tilde{H}_{\rm fin}$ is precisely the Friedrichs extension of $H_{\rm fin}$ (and not any other self-adjoint extension of it) entered our argument when applying Eq.~\eqref{eq:crucial_inclusion}, since the Friedrichs extension of $H_{\rm fin}$ is its unique self-adjoint extension satisfying this property.
\end{remark}

\begin{proof}[Proof of Proposition~\ref{prop:friedrichs}]

Let us start by proving the claim for coercive $H$, i.e. there is $\gamma>0$ such that $\braket{\psi,H\psi}\geq\gamma\|\psi\|^2$ for all $\psi\in\domain(H)$. In this case, by Proposition~\ref{prop:friedrich-galerkin-projection} we know that the Galerkin projectors $\tilde{Q}_n$ associated with the Friedrichs extension $\tilde{H}_{ \rm fin}$ of $H_{\rm fin}$ satisfy
\begin{equation}
    \lim_{n\to\infty}\tilde{Q}_n\psi=\psi\qquad\text{for all }\psi\in\domain(\tilde{H}_{\rm fin}),
\end{equation}
which implies, by Proposition~\ref{prop:galerkin_criterion}, 
\begin{equation}
    \lim_{n \to \infty} U_n(t) \psi = \tilde{U}(t) \psi \quad \text{for all } \psi \in \hilbert.
\end{equation}
Let us now consider the general case in which $H$ is bounded from below by some possibly nonpositive $\gamma\in\mathbb{R}$. In such a case, let $A:=H-\gamma+1$. This is clearly a coercive operator, since $\braket{\psi,A\psi}\geq\|\psi\|^2$; besides, $A_{\rm fin}=H_{\rm fin}-\gamma+1$ and, by Lemma~\ref{lem:friedrich_plus_b}, their Friedrichs extensions also differ by the same constant,
\begin{equation}
    \tilde{A}_{\rm fin}=\tilde{H}_{\rm fin}-\gamma+1;
\end{equation}
The Galerkin approximations of the two operators satisfy
\begin{equation}
    A_n:=P_nAP_n=H_n+(1-\gamma)P_n,
\end{equation}
and, recalling the decomposition $H_n=\hat{H}_n\oplus0_{\hilbert_n^\perp}$ and noticing that $P_n=\id_{\hilbert_n}\oplus0_{\hilbert_n^\perp}$,
\begin{equation}
    A_n=\hat{A}_n\oplus\id_{\hilbert_n^\perp}=\left(\hat{H}_n+1-\gamma\right)\oplus0_{\hilbert_n^\perp}.
\end{equation}
As such, defining $\tilde{T}(t)=\e^{-\iu t\tilde{A}_{\rm fin}}$ and $T_n(t)=\e^{-\iu tA_n} $, the first propagator only differs from $\tilde{U}(t)$ by a phase term:
\begin{equation}
    \tilde{U}(t)=\e^{\iu(1-\gamma)t}\tilde{T}(t);
\end{equation}
and, as for $T_n(t)$ and $U_n(t)$, we have
\begin{align}
    T_n(t)&=\e^{-\iu(1-\gamma)t}\e^{-\iu t\hat{H}_n}\oplus\id_{\hilbert_n^\perp};\\
        U_n(t)&=\e^{-\iu t\hat{H}_n}\oplus\id_{\hilbert_n^\perp},
\end{align}
whence
\begin{align}
    \e^{\iu(1-\gamma)t}T_n(t)&=\e^{-\iu t\hat{H}_n}\oplus \e^{\iu(1-\gamma)t}\id_{\hilbert_n^\perp}\nonumber\\
    &=\e^{-\iu t\hat{H}_n}\oplus \id_{\hilbert_n^\perp}+0_{\hilbert_n}\oplus(\e^{\iu(1-\gamma)t}-1)\id_{\hilbert_n^\perp}\nonumber\\
    &=U_n(t)+0_{\hilbert_n}\oplus(\e^{\iu(1-\gamma)t}-1)\id_{\hilbert_n^\perp},
\end{align}
implying
\begin{align}
U_n(t)-\tilde{U}(t)&=\e^{\iu(1-\gamma)t}\left(T_n(t)-\tilde{T}(t)\right)\nonumber\\&+0_{\hilbert_n}\otimes(1-\e^{\iu(1-\gamma)t})\id_{\hilbert_n^\perp};
\end{align}
the first term converges strongly to zero as discussed above, and the second term also converges strongly to zero since $P_n$ converges strongly to the identity, thus completing the proof.
\end{proof}

\section{Proofs of Section~\ref{sec:particle_results}}\label{sec:proofs_sec3}

In this section we shall provide proofs for the convergence statements on the Laplace operator on $(-1,1)$, namely: Propositions~\ref{prop:friedrichs-dirichlet},~\ref{prop:assoc_legendre_diff_1perp},~\ref{prop:alpha-periodic-unitary}, and Theorem~\ref{thm:alpha-periodic-legendre}.

\subsection{Proof of Proposition~\ref{prop:friedrichs-dirichlet}}\label{sec:proof_prof_friedrichs-dirichlet}

\begin{proof}[Proof of Proposition~\ref{prop:friedrichs-dirichlet}]
    Let $q_\Phi:\domain (H_\Phi)\times \domain(H_\Phi)\rightarrow\mathbb{C}$ be the sesquilinear form associated with $H_\Phi$,
    \begin{equation}
        q_\Phi(\psi,\varphi)=-\int_{-1}^{+1}\overline{\psi(x)}\varphi''(x)\;\mathrm{d}x=\int_{-1}^{+1}\overline{\psi'(x)}\varphi'(x)\;\mathrm{d}x,
    \end{equation}
    where, in the last step, we performed an integration by parts and used the fact that all functions in $\operatorname{Span}\Phi$ satisfy Dirichlet boundary conditions. Therefore, $H_\Phi$ is a nonnegative operator, and the form $q_\Phi$ is closable; the closure of $q_\Phi$, which we denote by $\tilde{q}_\Phi$, is defined on $\overline{\domain( H_\Phi)}^{\|\cdot\|_+}$, where the norm $\|\cdot\|_+$ is given by
    \begin{equation}
        \|\psi\|^2_+=\|\psi'\|^2+\|\psi\|^2,
    \end{equation}
    and thus, in particular, coincides with the Sobolev norm $\|\cdot\|_{\mathrm{H}^1}$, cf.~Eq.~\eqref{eq:sobolev}. We claim the following equality:
    \begin{equation}\label{eq:claim}
    \overline{\domain (H_\Phi)}^{\|\cdot\|_+}=\mathrm{H}^1_0((-1,1)).
    \end{equation}

    We begin by proving $\overline{\domain( H_\Phi)}^{\|\cdot\|_+}\subset\mathrm{H}^1_0((-1,1))$. Let $\psi\in\overline{\domain( H_\Phi)}^{\|\cdot\|_+}$, meaning that there exists a sequence $(\psi_n)_{n\in\mathbb{N}}\subset\operatorname{Span}\Phi$ such that $\|\psi_n-\psi\|_+\to0$, i.e.
    \begin{equation}
        \|\psi_n-\psi\|\to0\qquad\text{and}\qquad\|\psi'_n-\psi'\|\to0.
    \end{equation}
    Necessarily, $\psi\in\mathrm{H}^1((-1,1))$; we must prove that it also satisfies Dirichlet boundary conditions. By Proposition~\ref{prop:continuous_representative}, $\psi$ is continuous up to the boundary (in the sense that it admits a continuous representative, with which we hereafter identify it). Now, $\|\psi_n-\psi\|\to0$ implies that there exists a subsequence $(\psi_{n_k})_{k\in\mathbb{N}}\subset\operatorname{Span}\Phi$ such that $\psi_{n_k}(x)\to\psi(x)$ almost everywhere~\cite[p.~312]{jost-partialdifferentialequations-2002}; since $\psi$ is continuous, $\psi_{n_k}(x)\to\psi(x)$ everywhere. But then, in particular,
    \begin{equation}
    \label{proofeq:friedrich-dirichlet-limit}
        \psi(-1)=\lim_{k\to\infty}\psi_{n_k}(-1)=0,\qquad\psi(1)=\lim_{k\to\infty}\psi_{n_k}(1)=0
    \end{equation}
    since each $\psi_{n_k}$ satisfies Dirichlet boundary conditions. This proves the desired inclusion.

    We now proceed to prove the converse inclusion $\overline{\domain( H_\Phi)}^{\|\cdot\|_+}\supset\mathrm{H}^1_0((-1,1))$. Let $\psi\in\mathrm{H}^1_0((-1,1))$, and thus $\psi'\in L^1((-1,1))$; besides, 
    \begin{equation}
        \label{proofeq:friedrich-dirichlet-approx-start}
        \braket{1,\psi'}=\int_{-1}^{+1}\psi'(x)\;\mathrm{d}x=\psi(1)-\psi(-1)=0,
    \end{equation}
    whence $\psi'\in\{1\}^\perp=\overline{\operatorname{Span}\Phi'}$, implying that there exists a sequence $(\xi_n)_{n\in\mathbb{N}}\subset\operatorname{Span}\Phi'$ such that $\|\xi_n-\psi'\|\to0$. But then, by linearity, each $\xi_n$ is the first derivative of a function $\psi_n\in\operatorname{Span}\Phi$, i.e. there exists a sequence $(\psi_n)_{n\in\mathbb{N}}\subset\operatorname{Span}\Phi$ such that
    \begin{equation}
        \|\psi'_n-\psi'\|\to0.
    \end{equation}
    Furthermore, as $\Phi\subset\mathrm{H}^1_0((-1,1))$, both $\psi_n$ and $\psi$ are in $\mathrm{H}^1_0((-1,1))$; as such, Poincarè's inequality (Proposition~\ref{prop:poincare}) applies:
    \begin{equation}
    \label{proofeq:friedrich-dirichlet-approx}
        \|\psi_n-\psi\|\leq\|\psi'_n-\psi'\|\to0,
    \end{equation}
    and thus $\|\psi_n-\psi\|_+\to0$. This proves the desired inclusion, and thus the claimed equality~\eqref{eq:claim}.

    We have thus proven that the domain of the closure $\tilde{q}_\Phi$ of the sesquilinear form associated with $H_\Phi$ is $\mathrm{H}^1_0((-1,1))$; as such, the Friedrichs extension $\tilde{H}_\Phi$ of $H_\Phi$ admits $\mathrm{H}^1_0((-1,1))$ as form domain, and in particular $\domain(\tilde{H}_\Phi)\subset\mathrm{H}^1_0((-1,1))$. By the general properties of Friedrichs extension (Proposition~\ref{prop:def-friedrich}), $\tilde{H}_\Phi$ is also the unique self-adjoint extension of $H_\Phi$ such that $\domain(\tilde{H}_\Phi)\subset\mathrm{H}^1_0((-1,1))$. But the Dirichlet realization $H_{\mathrm{ Dir}}$ of the Laplace operator has domain
    \begin{equation}
        \domain (H_{\mathrm{Dir}})=\mathrm{H}^1_0((-1,1))\cap\mathrm{H}^2((-1,1))\subset\mathrm{H}^1_0((-1,1)),
    \end{equation}
    cf.~Remark~\ref{rem:dirichlet}, and is itself a self-adjoint extension of $H_\Phi$ since $\Phi\subset\mathrm{H}^1_0((-1,1))\cap\mathrm{H}^2((-1,1))$ by assumption. By the aforementioned uniqueness of the Friedrichs extension, $\tilde{H}_\Phi=H_{\mathrm{Dir}}$.
\end{proof}

\subsection{Proof of Proposition~\ref{prop:assoc_legendre_diff_1perp}}
    \label{sec:assoc_legendre_diff_1perp}
    
       \begin{proof}[Proof of Proposition~\ref{prop:assoc_legendre_diff_1perp}]
We will prove the following equality:
    \begin{equation}
        \overline{\operatorname{Span}\left(P_l^m\right)'_{l \geq m}} = \{1\}^\perp,
    \end{equation}
    which, of course, implies the claimed equality $ \overline{\operatorname{Span}\left(p_l^m\right)'_{l \geq m}} = \{1\}^\perp$ since the linear span is unaffected by normalization factors.
       
        Let $m \geq 4$ fixed.
        We start by proving the inclusion 
        \begin{equation}            \overline{\operatorname{Span}\left(P^m_l\right)'_{l \geq m}} \subseteq \{1\}^\perp.
        \end{equation}
        To this purpose, recall that the associated Legendre Polynomials for $m \geq 4$ fulfill Dirichlet boundary conditions (Proposition~\ref{prop:boundary_data_legendre}), whence 
        \begin{equation}\label{eq:inclusion}
            \braket{1,(P^m_l)'}=\int_{-1}^1\frac{\mathrm{d}}{\mathrm{d}x}P^m_l(x)\;\mathrm{d}x=P^m_l(1)-P^m_l(-1)=0.
        \end{equation}
    By linearity, this readily implies $\braket{1,\psi}=0$ for all $\psi\in\operatorname{Span}\left(P_l^m\right)'_{l \geq m}$; furthermore, by the continuity of the scalar product, the same holds for all $\psi\in \overline{\operatorname{Span}\left(P_l^m\right)'_{l \geq m}}$.

    We shall now show that, in fact, the two sets are equal. Let us begin by showing the following equality for all $n\in\mathbb{N}$:
        \begin{equation}
            \label{proofeq:der-al-1perp-1}
            \operatorname{Span}\left(P_l^m\right)_{l=m}^{m+n} = \operatorname{Span}\left((1-x^2)^{m/2} x^l \right)_{l=0}^n \, .
        \end{equation}
        To this purpose, let $\psi\in\operatorname{Span}\left(P_l^m\right)_{l=m}^{m+n}$; from the definition~\eqref{eq:def-assoc-legendre} of associated Legendre polynomials, this means that there exist $c_m,\dots,c_{m+n}\in\mathbb{C}$ such that
        \begin{equation}
            \psi(x)=\sum_{l=m}^{m+n}c_lP^m_l(x)=(1-x^2)^{m/2}\frac{\mathrm{d}^m}{\mathrm{d}x^m}\sum_{l=m}^{m+n}c_lP_l(x)
        \end{equation}
        and thus also
        \begin{equation}
            \psi(x)=(1-x^2)^{m/2}\frac{\mathrm{d}^m}{\mathrm{d}x^m}\sum_{l=0}^{m+n}c_lP_l(x)
        \end{equation}
with $c_0=\ldots=c_{m-1}=0$. But, since $\operatorname{Span}(P_l)_{l=0}^{m+n}=\operatorname{Span}(x^l)_{l=0}^{m+n}$ (Eq.~\eqref{eq:legendre-poly-gramschmidt}), this is equivalent to stating that there exist $\tilde{c}_1,\dots,\tilde{c}_{m+n}\in\mathbb{C}$ such that
\begin{equation}
\begin{split}
    \psi(x)=(1-x^2)^{m/2}\frac{\mathrm{d}^m}{\mathrm{d}x^m}\sum_{l=0}^{m+n}\tilde{c}_lx^l&=(1-x^2)^{m/2}\sum_{l=m}^{m+n}\tilde{c}_l l (l-1) \dots (l-m+1) x^{l-m}\\&=(1-x^2)^{m/2}\sum_{l=0}^{n}\tilde{c}_{l+m} (l+m) \dots (l+1) x^{l}\\
    &=\sum_{l=0}^{n}\tilde{c}_{l+m} (l+m) \dots (l+1) (1-x^2)^{m/2}x^{l},
    \end{split}
\end{equation}
thus being equivalent to $\psi\in\operatorname{Span}\left((1-x^2)^{m/2}x^l\right)_{l= 0}^n$, which proves Eq.~\eqref{proofeq:der-al-1perp-1}. The latter implies
\begin{align}
            \label{proofeq:der-al-1perp-1bis}
            \operatorname{Span}\left((P_l^m)'\right)_{l=m}^{m+n} &= \operatorname{Span}S_n;\\            \label{proofeq:der-al-1perp-1ter}
             \operatorname{Span}\left((P_l^m)'\right)_{l=m}^{\infty} &= \operatorname{Span}S,
        \end{align}
where $S_n:=(v_l)_{l=0}^n$ and $S:=(v_l)_{l=0}^\infty$, with each $v_l(x)$ being the first derivative of $(1-x^2)^{m/2}x^l$, namely:
\begin{align}
                v_l(x) = \diff*{(1-x^2)^{m/2}x^l}{x} &= (1-x^2)^{m/2-1} \left(l x^{l-1} - (m+l)x^{l+1} \right)\nonumber\\
                &\equiv v_{-1}(x)\left(l x^{l-1} - (m+l)x^{l+1} \right)\qquad\qquad \forall l > 0\, , \\
                v_0(x)  = \diff*{(1-x^2)^{m/2}x^0}{x}  &= -m(1-x^2)^{m/2-1}x\nonumber\\
                & \equiv v_{-1}(x)(-mx)\, ,
        \end{align} 
 where we defined $v_{-1}(x) := (1-x^2)^{m/2-1}$. Then the following equality holds:
 \begin{equation}
            \label{proofeq:der-al-1perp-2}
            \operatorname{Span} \left(S_n \cup \{v_{-1}\}\right) = \operatorname{Span}\left(v_{-1}\,x^l\right)_{l=0}^{n+1}\quad \forall n \geq m \, .
\end{equation}
Indeed, let $\psi\in\operatorname{Span} \left(S_n \cup \{v_{-1}\}\right)$. Then there exist $c_{0},c_1,\dots,c_{n+1}\in\mathbb{C}$ such that
\begin{equation}
\begin{split}
   \psi(x)&=v_{-1}(x)\left(c_{0}+c_1(-mx)+\sum_{l=1}^nc_{l+1}\left(l x^{l-1} - (m+l)x^{l+1} \right)\right)\nonumber\\
   &\equiv v_{-1}(x)\sum_{l=0}^{n+1}d_{l}x^l,
\end{split}
\end{equation}
where, by a direct comparison of the two sums in the above equation, the coefficients $d_{0},d_1,d_2,\dots,d_{n+1}$ are defined through
\begin{equation}
\label{eq:proof-plm1perp-coeffs}
\begin{split}
    d_{0}&=c_{0}+c_2;\\
    d_1&=-m\,c_1+2c_3;\\
    d_2&=-(m+1)c_2+3c_4;\\
    \vdots &=   \quad\vdots \\
    d_{l}&=-(m+l-1)c_{l}+(l+1)c_{l+2};\\
        \vdots &=   \quad\vdots \\
    d_{n-1}&=-(m+n-2)c_{n-1}+nc_{n+1};\\
    d_{n}&=-(m+n-1)c_{n};\\
    d_{n+1}&=-(m+n)c_{n+1},
    \end{split}
\end{equation}
which can be recast as a matrix equation: $\mathsf{d}=\mathbb{A}\mathsf{c}$, where $\mathsf{d}=(d_{0},d_1,\ldots,d_{n+1})^\intercal$, $\mathsf{c}=(c_{0},c_1,\ldots,c_{n+1})^\intercal$, and $\mathbb{A}$ is a $(n+1)\times(n+1)$ upper triangular matrix whose diagonal elements are all nonzero; as such, $\mathbb{A}$ is invertible and the relation above can be inverted. Therefore, $\psi\in\operatorname{Span} \left(S_n \cup \{v_{-1}\}\right)$ if and only if $\psi\in\operatorname{Span}(v_{-1}\,x^l)_{l=0}^{n+1}$. This proves Eq.~\eqref{proofeq:der-al-1perp-2} for all $n\in\mathbb{N}$. From Eq.~\eqref{proofeq:der-al-1perp-2}, we then get
\begin{equation}\label{eq:piece1}
     \operatorname{Span} \left(S \cup \{v_{-1}\}\right) = \operatorname{Span}\left(v_{-1}\,x^l\right)_{l=0}^{\infty}.
\end{equation}
At the same time, replacing $m$ with $m-2$ in Eq.~\eqref{proofeq:der-al-1perp-1}, we obtain
\begin{equation}
            \label{proofeq:der-al-1perp-1-bis}
            \operatorname{Span}\left(P_l^{m-2}\right)_{l=m-2}^{m+n-2} = \operatorname{Span}\left((1-x^2)^{m/2-1} x^l \right)_{l=0}^n = \operatorname{Span}\left(v_{-1} x^l \right)_{l=0}^n,
        \end{equation}
        which implies
        \begin{equation}\label{eq:piece2}
            \operatorname{Span}\left(P_l^{m-2}\right)_{l=m-2}^{\infty} = \operatorname{Span}\left(v_{-1} x^l \right)_{l=0}^\infty.
        \end{equation}
Consequently, equating Eqs.~\eqref{eq:piece1} and~\eqref{eq:piece2}, we get
\begin{equation}
     \operatorname{Span} \left(S \cup \{v_{-1}\}\right)= \operatorname{Span}\left(P_l^{m-2}\right)_{l=m-2}^{\infty},
\end{equation}
and finally, taking closures,
\begin{equation}\label{eq:step}
     \overline{\operatorname{Span} \left(S \cup \{v_{-1}\}\right)}= \overline{\operatorname{Span}\{P_l^{m-2}\}_{l=m-2}^{\infty}}=\hilbert,
\end{equation}
where in the last step we used the fact that $(P^l_{m-2})_{l\geq m-2}$ is a complete orthonormal set in $\hilbert$.

We now claim that $v_{-1}\in\overline{\operatorname{Span}(S\cup\{1\})}$. To begin with, because of Eq.~\eqref{eq:step}, for every $\epsilon>0$ there exist $\alpha_\epsilon\in\mathbb{C}$ and $u_\epsilon\in\operatorname{Span}\,S$ such that
\begin{equation}
            \label{proofeq:der-al-1perp-3}
            \norm{1 - (\alpha_\epsilon v_{-1} + u_\epsilon)} < \epsilon \, ,
        \end{equation}
whence, by the Cauchy--Schwarz inequality,
\begin{equation}
    \left|\braket{1,1 - (\alpha_\epsilon v_{-1} + u_\epsilon}\right|\leq\|1\|\norm{1 - (\alpha_\epsilon v_{-1} + u_\epsilon)} < \sqrt{2}\epsilon \, .
\end{equation}
The left-hand side of the equation above can be computed explicitly: indeed, $\braket{1,1}=\int_{-1}^1\mathrm{d}x=2$, while $\braket{1,v_{-1}}$ computes to
\begin{equation}\label{eq:cauchyschwarz}
    \braket{1,v_{-1}}=\int_{-1}^1(1-x^2)^{m/2-1}\;\mathrm{d}x=\frac{\sqrt{\pi} \Gamma(m/2)}{\Gamma((m+1)/2)}=:\kappa_m>0;
\end{equation}
finally, $\braket{1,u_\epsilon}=0$, since $u_\epsilon\in\operatorname{Span}\,S=\operatorname{Span}\{(P^m_l)'\}_{l=m}^\infty$ (Eq.~\eqref{proofeq:der-al-1perp-1ter}) and the latter is contained in $\{1\}^\perp$ (Eq.~\eqref{eq:inclusion}). As such, Eq.~\eqref{eq:cauchyschwarz} becomes
\begin{equation}
    |2-\alpha_\epsilon\kappa_m|<\sqrt{2}\epsilon,
\end{equation}
whence, by the reverse triangle inequality,
\begin{equation}
    \left|2-|\alpha_\epsilon|\kappa_m\right|<|2-\alpha_\epsilon\kappa_m|<\sqrt{2}\epsilon.
\end{equation}
This holds for all $\epsilon>0$. Let us take $\epsilon<1$; then 
\begin{equation}
        \left|2-|\alpha_\epsilon|\kappa_m\right|<\sqrt{2}
\end{equation}
and therefore $|\alpha_\epsilon|\kappa_m\in(2-\sqrt{2},2+\sqrt{2})$; in particular, $\alpha_\epsilon\neq0$. But then Eq.~\eqref{proofeq:der-al-1perp-3} can be written as
\begin{equation}
    |\alpha_\epsilon|\left\|v_{-1}-\frac{1-u_\epsilon}{\alpha_\epsilon}\right\|<\epsilon
\end{equation}
whence, using $|\alpha_\epsilon|\kappa_m\geq2-\sqrt{2}$,
\begin{equation}
    \left\|v_{-1}-\frac{1-u_\epsilon}{\alpha_\epsilon}\right\|<\frac{\epsilon}{|\alpha_\epsilon|}\leq\frac{\epsilon\kappa_m}{2-\sqrt{2}}.
\end{equation}
As $\kappa_m/(2-\sqrt{2})$ is an $\epsilon$-independent constant, we have shown that indeed $v_{-1}$ is in $\overline{\operatorname{Span}(S\cup\{1\})}$.

We can now wrap up our proof. Since $v_{-1}\in\overline{\operatorname{Span}(S\cup\{1\})}$, we have $\operatorname{Span}\,(S\cup\{v_{-1}\})\subset\overline{\operatorname{Span}(S\cup\{1\})}$ and thus, taking closures,
\begin{equation}
    \overline{\operatorname{Span}\,(S\cup\{v_{-1}\})}\subset\overline{\operatorname{Span}(S\cup\{1\})};
\end{equation}
but, on the other hand, we had already proven $\overline{\operatorname{Span}\,(S\cup\{v_{-1}\})})=\hilbert$; therefore the equation above yields $\overline{\operatorname{Span}(S\cup\{1\})}=\hilbert$, which is the desired claim.
    \end{proof}

\subsection{Proof of Proposition~\ref{prop:alpha-periodic-unitary}}
\label{sec:proof-alpha-periodic-unitary}

We will now prove sufficient conditions for the convergence of $U_n(t)$ to the evolution group $U_{W(\alpha)}(t)$ generated by the Laplacian with $\alpha$-periodic boundary conditions. In the following, $\alpha\in[0,2\pi)$ shall be fixed.
\begin{lemma}
\label{lem:alpha-sobolev-dense}
    The subspace of $\mathrm{H}^1((-1,1))$ defined by
    \begin{equation}
        \mathrm{H}^1_{\alpha,\mathrm{per}}((-1,1)) = \left\{ f \in \mathrm{H}^1((-1,1)) \, , \, \, \, f(-1) = \e^{\iu \alpha} f(1) \right\}    \end{equation}
    is closed with respect to the norm $\norm{\cdot}_+$ defined by $\|\psi\|^2_+ = \norm{\psi'}^2+\norm{\psi}^2$.
\end{lemma}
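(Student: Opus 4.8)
The plan is to first recognize that, on $\mathrm{H}^1((-1,1))$, the norm $\norm{\cdot}_+$ with $\norm{\psi}_+^2 = \norm{\psi'}^2 + \norm{\psi}^2$ is precisely the Sobolev norm $\norm{\cdot}_{\mathrm{H}^1}$ of Eq.~\eqref{eq:sobolev}. Hence the statement to prove is that $\mathrm{H}^1_{\alpha,\mathrm{per}}((-1,1))$ is a closed subspace of the Hilbert space $\mathrm{H}^1((-1,1))$. The cleanest route is to exhibit this subspace as the kernel of a bounded linear functional: I would define $L:\mathrm{H}^1((-1,1)) \to \mathbb{C}$ by $L(\varphi) = \varphi(-1) - \e^{\iu\alpha}\varphi(1)$, which is meaningful pointwise by Proposition~\ref{prop:continuous_representative} (each element of $\mathrm{H}^1$ has a continuous representative), and observe that $\mathrm{H}^1_{\alpha,\mathrm{per}}((-1,1)) = \ker L$. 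Since the kernel of a continuous linear functional is closed, it then suffices to prove that $L$ is bounded.

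The key step, and the only nontrivial one, is therefore the continuity of the boundary-evaluation functionals $\varphi \mapsto \varphi(\pm 1)$ with respect to $\norm{\cdot}_{\mathrm{H}^1}$. I would establish the uniform bound $\sup_{x\in[-1,1]}|\varphi(x)| \le C\,\norm{\varphi}_{\mathrm{H}^1}$ via the fundamental theorem of calculus: for $x,y \in [-1,1]$ one writes $\varphi(x) = \varphi(y) + \int_y^x \varphi'(s)\,\mathrm{d}s$, averages over $y\in(-1,1)$ to obtain $2\varphi(x) = \int_{-1}^1 \varphi(y)\,\mathrm{d}y + \int_{-1}^1\int_y^x \varphi'(s)\,\mathrm{d}s\,\mathrm{d}y$, and bounds the two terms by the Cauchy--Schwarz inequality, yielding $|\varphi(x)| \le \tfrac{1}{\sqrt 2}\norm{\varphi} + \sqrt{2}\,\norm{\varphi'} \le C\,\norm{\varphi}_{\mathrm{H}^1}$. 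This immediately gives $|L(\varphi)| \le |\varphi(-1)| + |\varphi(1)| \le 2C\,\norm{\varphi}_{\mathrm{H}^1}$, so $L$ is bounded and $\mathrm{H}^1_{\alpha,\mathrm{per}}((-1,1)) = \ker L$ is closed.

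Equivalently, and perhaps more transparently, one can run the argument sequentially: given $(f_n) \subset \mathrm{H}^1_{\alpha,\mathrm{per}}((-1,1))$ with $\norm{f_n - f}_+ \to 0$, completeness of $\mathrm{H}^1((-1,1))$ gives $f \in \mathrm{H}^1((-1,1))$, while the trace bound above gives $f_n(\pm 1) \to f(\pm 1)$; passing to the limit in the identity $f_n(-1) = \e^{\iu\alpha} f_n(1)$ then yields $f(-1) = \e^{\iu\alpha} f(1)$, i.e.\ $f \in \mathrm{H}^1_{\alpha,\mathrm{per}}((-1,1))$. I expect the boundary trace estimate to be the only genuine obstacle; everything else is formal. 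It is essentially the one-dimensional Sobolev embedding $\mathrm{H}^1 \hookrightarrow C^0$ already implicit in Proposition~\ref{prop:continuous_representative}, so the remaining work consists only in making the continuity of point evaluation quantitative.
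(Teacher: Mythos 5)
Your proof is correct, but it takes a genuinely different route from the paper's. The paper argues sequentially and qualitatively: given $\psi_n\to\psi$ in $\norm{\cdot}_+$, it uses only the $L^2$ part of the convergence to extract a subsequence converging almost everywhere, identifies $\psi$ with its continuous representative, and passes to the limit in the boundary identity $\psi_{n_k}(-1)=\e^{\iu\alpha}\psi_{n_k}(1)$. You instead make the one-dimensional trace embedding quantitative, proving $\sup_{x\in[-1,1]}|\varphi(x)|\le\tfrac{1}{\sqrt2}\norm{\varphi}+\sqrt2\,\norm{\varphi'}$ (your averaging argument and constants check out), and then exhibit the subspace as the kernel of the bounded functional $L(\varphi)=\varphi(-1)-\e^{\iu\alpha}\varphi(1)$ on $\bigl(\mathrm{H}^1((-1,1)),\norm{\cdot}_{\mathrm{H}^1}\bigr)$. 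What your approach buys is robustness and reusability: the trace bound gives \emph{uniform} convergence of the continuous representatives, so the boundary values genuinely converge, whereas the paper's step ``$\psi_{n_k}\to\psi$ a.e.\ and $\psi$ continuous, hence everywhere'' is delicate on its own (a.e.\ convergence of continuous functions to a continuous limit does not in general imply convergence at every point --- one implicitly needs the $\norm{\psi'_{n_k}-\psi'}\to0$ information, which your estimate uses explicitly). The same bounded-functional formulation also disposes of the Dirichlet analogue and of Proposition~\ref{prop:alpha-periodic} in one stroke. The paper's argument is shorter and avoids any explicit constants, which is all that is needed for a qualitative closedness statement.
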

\begin{proof}
    Let $\psi \in \overline{\mathrm{H}^1_{\alpha,\mathrm{per}}((-1,1))}^{\norm{\cdot}_+}$.
    Then there exist $(\psi_n)_{n\in\mathbb{N}} \subset \mathrm{H}^1_{\alpha,\mathrm{per}}((-1,1))$ such that $\norm{\psi-\psi_n}_+ \to 0$, and therefore $\psi \in \mathrm{H}^1((-1,1))$. We follow analogous steps as in the proof of Proposition~\ref{prop:friedrichs-dirichlet}: without loss of generality we can assume that $\psi$ is continuous up to the boundary (see Proposition~\ref{prop:continuous_representative}). 
    $\norm{\psi-\psi_n}_+ \to 0$ implies $\norm{\psi-\psi_n} \to 0$ and thus there exists a subsequence $\psi_{n_k}$ with $\psi_{n_k}(x) \to \psi(x)$ almost everywhere~\cite[p.~312]{jost-partialdifferentialequations-2002}: as $\psi$ is continuous, $\psi_{n_k}(x) \to \psi(x)$ for all $x \in [-1,1]$.
    Consequently,
    \begin{equation}
            \psi(-1) = \lim_{k \to \infty} \psi_{n_k}(-1) = \lim_{k \to \infty} \e^{\iu \alpha} \psi_{n_k}(1) = \e^{\iu \alpha} \psi(1) \, .
    \end{equation}
    Thus, $\psi \in \mathrm{H}^1_{\alpha,\mathrm{per}}((-1,1))$.
\end{proof}
Note that, differently from the Dirichlet case, $\domain(H_{W(\alpha)}) \subsetneq \mathrm{H}^1_{\alpha,\mathrm{per}}((-1,1)) \cap \mathrm{H}^2((-1,1))$.

\begin{lemma}
\label{prop:alpha-periodic}
    Let $\Phi = (\phi_l)_{l \in \nnum}$ be a complete orthonormal set of $L^2((-1,1))$ satisfying the following additional properties
    \begin{itemize}
        \item[(i)] $\Phi\subset \domain(H_{W(\alpha)})$;
        \item[(ii)] There exists $\phi_0 \in \Phi$ with $\phi_0(1) \neq 0$ such that $\overline{\operatorname{Span}\tilde{\Phi}'}=\{1\}^\perp$,
    \end{itemize}
    where $\tilde{\Phi} = \{\phi(x)-\frac{\phi(1)}{\phi_0(1)} \phi_0(x) , \, \phi \in \Phi \}$, and $\tilde{\Phi}'$ denotes the set of derivatives of elements of $\tilde{\Phi}$. Then the symmetric operator $H_\Phi$ defined by
    \begin{equation}
        \domain (H_\Phi)=\operatorname{Span}\Phi,\qquad H_\Phi\psi=-\psi''
    \end{equation}
    has a Friedrichs extension equal to the $\alpha$-periodic  Laplacian $H_{W(\alpha)}$ with domain $\domain(H_{W(\alpha)})$.
\end{lemma}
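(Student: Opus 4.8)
The plan is to mirror the structure of the proof of Proposition~\ref{prop:friedrichs-dirichlet}, the goal being to identify the form domain of the Friedrichs extension of $H_\Phi$ with the closed subspace $\mathrm{H}^1_{\alpha,\mathrm{per}}((-1,1))$ introduced in Lemma~\ref{lem:alpha-sobolev-dense}. First I would compute, for $\psi,\varphi\in\operatorname{Span}\Phi$, the sesquilinear form $q_\Phi(\psi,\varphi)=-\int_{-1}^{+1}\overline{\psi(x)}\varphi''(x)\,\mathrm{d}x$. Integrating by parts yields the boundary term $-\bigl(\overline{\psi(1)}\varphi'(1)-\overline{\psi(-1)}\varphi'(-1)\bigr)$, which vanishes: since all elements of $\operatorname{Span}\Phi$ satisfy $\psi(-1)=\e^{\iu\alpha}\psi(1)$ and $\varphi'(-1)=\e^{\iu\alpha}\varphi'(1)$ by assumption (i), one finds $\overline{\psi(-1)}\varphi'(-1)=\e^{-\iu\alpha}\overline{\psi(1)}\,\e^{\iu\alpha}\varphi'(1)=\overline{\psi(1)}\varphi'(1)$. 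Hence $q_\Phi(\psi,\varphi)=\braket{\psi',\varphi'}$, exactly as in the Dirichlet case, so $H_\Phi$ is nonnegative, $q_\Phi$ is closable, and its form norm $\|\cdot\|_+$ coincides with the Sobolev norm $\|\cdot\|_{\mathrm{H}^1}$. The central claim then reduces to $\overline{\operatorname{Span}\Phi}^{\|\cdot\|_+}=\mathrm{H}^1_{\alpha,\mathrm{per}}((-1,1))$.

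The inclusion $\subseteq$ is immediate from assumption (i), which gives $\operatorname{Span}\Phi\subset\mathrm{H}^1_{\alpha,\mathrm{per}}((-1,1))$, together with the $\|\cdot\|_+$-closedness of the latter established in Lemma~\ref{lem:alpha-sobolev-dense}. The reverse inclusion is the main obstacle and is where condition (ii) and the auxiliary function $\phi_0$ play their role. The key observation I would isolate is that the corrected functions satisfy \emph{Dirichlet} boundary conditions: for the modified basis element $\tilde{\phi}_l=\phi_l-\frac{\phi_l(1)}{\phi_0(1)}\phi_0$ one has $\tilde{\phi}_l(1)=0$ by construction, and $\tilde{\phi}_l(-1)=\e^{\iu\alpha}\phi_l(1)-\frac{\phi_l(1)}{\phi_0(1)}\e^{\iu\alpha}\phi_0(1)=0$ by $\alpha$-periodicity, so that $\operatorname{Span}\tilde{\Phi}\subset\mathrm{H}^1_0((-1,1))$. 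The same correction applied to a general target brings us back to the Dirichlet setting, where Poincar\'e's inequality is available.

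Concretely, given $\psi\in\mathrm{H}^1_{\alpha,\mathrm{per}}((-1,1))$ I would set $c=\psi(1)/\phi_0(1)$ and verify, by the identical computation, that $\psi-c\phi_0\in\mathrm{H}^1_0((-1,1))$. Its derivative then lies in $\{1\}^\perp$, since $\int_{-1}^{+1}(\psi-c\phi_0)'\,\mathrm{d}x=(\psi-c\phi_0)(1)-(\psi-c\phi_0)(-1)=0$. By condition (ii), $\{1\}^\perp=\overline{\operatorname{Span}\tilde{\Phi}'}$, so there exists a sequence $\eta_n\in\operatorname{Span}\tilde{\Phi}$ with $\|\eta_n'-(\psi-c\phi_0)'\|\to0$. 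As both $\eta_n$ and $\psi-c\phi_0$ belong to $\mathrm{H}^1_0((-1,1))$, Poincar\'e's inequality (Proposition~\ref{prop:poincare}) yields $\|\eta_n-(\psi-c\phi_0)\|\leq\|\eta_n'-(\psi-c\phi_0)'\|\to0$, hence $\|\eta_n-(\psi-c\phi_0)\|_+\to0$. Finally, since $\operatorname{Span}\tilde{\Phi}\subset\operatorname{Span}\Phi$ and $c\phi_0\in\operatorname{Span}\Phi$ (because $\phi_0\in\Phi$), adding back $c\phi_0$ shows $\psi\in\overline{\operatorname{Span}\Phi}^{\|\cdot\|_+}$, which closes the reverse inclusion and establishes the equality of form domains.

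To identify the Friedrichs extension itself, I would invoke the uniqueness clause of Proposition~\ref{prop:def-friedrich}: $\tilde{H}_\Phi$ is the unique self-adjoint extension of $H_\Phi$ whose domain is contained in the form domain $\mathrm{H}^1_{\alpha,\mathrm{per}}((-1,1))$. The operator $H_{W(\alpha)}$ is self-adjoint by Proposition~\ref{prop:basic_facts_laplacian}, acts as $-\psi''$, and contains $\Phi$ in its domain by assumption (i), so it is a self-adjoint extension of $H_\Phi$; moreover $\domain(H_{W(\alpha)})\subset\mathrm{H}^2((-1,1))\cap\mathrm{H}^1_{\alpha,\mathrm{per}}((-1,1))\subset\mathrm{H}^1_{\alpha,\mathrm{per}}((-1,1))$. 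By the aforementioned uniqueness, $\tilde{H}_\Phi=H_{W(\alpha)}$, as claimed. The delicate point throughout is the reverse inclusion in the third paragraph, where the single correction vector $\phi_0$ is exactly what converts an $\alpha$-periodic approximation problem into a Dirichlet one accessible to Poincar\'e's inequality.
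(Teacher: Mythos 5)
Your proposal is correct and follows essentially the same route as the paper's proof: establish $q_\Phi(\psi,\varphi)=\braket{\psi',\varphi'}$ via the vanishing $\alpha$-periodic boundary term, identify the form-domain closure with $\mathrm{H}^1_{\alpha,\mathrm{per}}((-1,1))$ by subtracting the multiple $\frac{\psi(1)}{\phi_0(1)}\phi_0$ to reduce to the Dirichlet/Poincar\'e argument, and conclude by the uniqueness clause of the Friedrichs extension. The only cosmetic difference is that you invoke the closedness from Lemma~\ref{lem:alpha-sobolev-dense} directly for the forward inclusion rather than repeating the subsequence argument, which is a harmless shortcut.
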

\begin{proof}
    We follow the same strategy of the proof of Proposition~\ref{prop:friedrichs-dirichlet}, starting with showing that the quadratic form $q_\Phi: \domain(H_\phi) \times \domain(H_\Phi) \to \cnum $ given by $q_\Phi(\psi,\varphi) = \braket{\psi,H_\Phi \varphi}$ is positive:
    \begin{equation}
        q_\Phi(\psi,\varphi) = -\int_{-1}^{+1} \overline{\psi(x)}\varphi''(x)  \dl x = \int_{-1}^{+1} \overline{\psi'(x)}\varphi'(x) \dl x - \left[\overline{\psi(x)}\varphi'(x)\right]_{-1}^{+1} \geq 0 \, .
    \end{equation}
    The boundary term vanishes as $\overline{\psi(-1)}\varphi'(-1) = \overline{\e^{\iu \alpha} \psi(1)}\e^{\iu \alpha} \varphi'(1) = \overline{\psi(1)} \varphi'(1)$.
    Thus, $q_\Phi$ is closable with closure $\tilde{q}_\Phi$.
   The norm $\|\cdot\|_+$, defined as in Definition~\ref{def:friedrichs_extension}, is again equal to
    \begin{equation}
        \norm{\psi}_+^2 = q_\Phi(\psi,\psi) + \norm{\psi}^2 = \norm{\psi'}^2+\norm{\psi}^2
    \end{equation}
     We claim the following equality:
    \begin{equation}
         \domain(\tilde{q}_\Phi) = \overline{\domain(H_\Phi)}^{\norm{}_+} = \mathrm{H}^1_{\alpha,\mathrm{per}}((-1,1)) \,.
    \end{equation}
    The right-hand side is indeed closed with respect to $\norm{}_+$, as shown in Lemma~\ref{lem:alpha-sobolev-dense}.
    
    We begin by proving $\overline{\domain(H_\Phi)}^{\norm{}_+} \subset \mathrm{H}^1_{\alpha,\mathrm{per}}((-1,1))$.
    Let $\psi \in \overline{\domain(H_\Phi)}^{\norm{}_+}$.
    Then there exists a sequence $(\psi_n)_{n\in\mathbb{N}} \in \domain(H_\Phi)$ such that $\norm{\psi-\psi_n}_+ \to 0$, and therefore, $\psi \in \mathrm{H}^1((-1,1))$.
    Besides, following the same steps as in the proof of Lemma~\ref{lem:alpha-sobolev-dense}, one readily shows
    \begin{equation}
        \psi(-1)  = \e^{\iu \alpha} \psi(1) \, ,
    \end{equation}
    whence $\psi\in\mathrm{H}^1_{\alpha,\mathrm{per}}((-1,1))$.
    
    We now prove the inclusion $ \mathrm{H}^1_{\alpha,\mathrm{per}}((-1,1)) \subset \overline{\domain(H_\Phi)}^{\norm{}_+} $.
    Let $\psi \in  \mathrm{H}^1_{\alpha,\mathrm{per}}((-1,1))$, i.e. $\psi \in \mathrm{H}^1((-1,1))$ and $\psi(-1) =  \e^{\iu \alpha} \psi(1)$, and define 
    \begin{equation}
        \tilde{\psi}(x) = \psi(x) - \frac{\psi(1)}{\phi_0(1)} \phi_0(x) \, .
    \end{equation}
    Then, as per assumption $\phi_0(1) \neq 0$, $\tilde{\psi}(\pm 1) = 0$, and as both $\psi,\phi_0 \in \mathrm{H}^1((-1,1))$, we get $\tilde{\psi}\in \mathrm{H}^1_0((-1,1))$.
    According to assumption \textit{(ii)}, $\operatorname{Span}\tilde{\Phi}' = \{1\}^\perp$.
    Following analogous steps as in the proof of Proposition~\ref{prop:friedrichs-dirichlet}, cf.~Eqs.~\ref{proofeq:friedrich-dirichlet-approx-start}--\ref{proofeq:friedrich-dirichlet-approx}, there exists a sequence $(\tilde{\psi}_n)_{n\in\mathbb{N}} \subset \operatorname{Span}\tilde{\Phi}$ such that $\norm{\tilde{\psi} - \tilde{\psi}_n}_+ \to 0$.
    Let
    \begin{equation}
        \psi_n(x) = \tilde{\psi}_n(x) + \frac{\psi(1)}{\phi_0(1)} \phi_0(x) \, .
    \end{equation}
    Per definition, $\operatorname{Span}\tilde{\Phi} \subset \operatorname{Span} \Phi$. Then, as $\phi_0 \in \Phi$ and $\tilde{\psi} \in \operatorname{Span}\tilde{\Phi}$, all $\psi_n \in \operatorname{Span}\Phi$ as well.
    But 
    \begin{equation}
        \norm{\psi-\psi_n}_+ = \norm*{\tilde{\psi} + \frac{\psi(1)}{\phi_0(1)} \phi_0 - \tilde{\psi}_n - \frac{\psi(1)}{\phi_0(1)} \phi_0}_+ = \norm{\tilde{\psi} - \tilde{\psi}_n}_+ \to 0 \, ,
    \end{equation}
    proving the desired inclusion.
    Thus, the form domain of the closure $\tilde{q}_\Phi$ is equal to $\mathrm{H}^1_{\alpha,\mathrm{per}}((-1,1))$.
    For the domain of the $\alpha$-periodic Hamiltonian $H_{W(\alpha)}$ the following inclusions hold:
    \begin{equation}
        \domain(H_\Phi) \subset \domain(H_{W(\alpha)}) \subset \mathrm{H}^1_{\alpha,\mathrm{per}}((-1,1))\, .
    \end{equation}
    Thus, $H_{W(\alpha)}$ is a self-adjoint extension of $H_\Phi$ and its domain is a subset of the form domain of $\tilde{q}_\Phi$; by uniqueness of the Friedrichs extension (Proposition~\ref{prop:def-friedrich}) we get the desired equality $\tilde{H}_\Phi = H_{W(\alpha)}$.  
\end{proof}

\begin{proof}[Proof of Proposition~\ref{prop:alpha-periodic-unitary}]
    We claim that $H_{W(\alpha)}$ is the Friedrichs extension of the operator $H_{\mathrm{fin}} $ given by
    \begin{equation}
     \domain(H_{W(\alpha)}) = \bigcup_{n \in \nnum} P_n \hilbert \, ,
    \end{equation}
    where the $P_n$ are the projections on the first $n$ elements of $\Phi$. 
    Indeed, $H_{W(\alpha)} \psi = - \psi''$, $\domain(H_{W(\alpha)}) = \operatorname{Span} \Phi$, and as $\Phi$ satisfies the conditions of the previous Lemma~\ref{prop:alpha-periodic}, $H_{W(\alpha)}$ is the Friedrichs extension of $H_{\mathrm{fin}} $.
    Then, using Theorem~\ref{prop:friedrichs}, the result follows.
\end{proof}    

\subsection{Proof of Theorem~\ref{thm:alpha-periodic-legendre}}
\label{sec:proof-thm-alpha-periodic-legendre}
\begin{proof}[Proof of Theorem~\ref{thm:alpha-periodic-legendre}]

    We show that the basis $$(\phi_l)_{l \in \nnum} = \mathrm{GS}(f_0,(p_k^m)_{k \geq m}) $$ fulfills the two conditions of Proposition~\ref{prop:alpha-periodic-unitary}.
    To this end, notice that we can write $\phi_l$, the $l$th element of the Gram-Schmidt orthogonalization of $f_0$ and $(p_k^m)_{k \geq m}$, as
    \begin{equation}
        \label{proofeq:thm-alphaperiodic-gs-1}
        \phi_l(x) = \beta^{(l)} f_0(x) + \sum_{k = 1}^{l}\gamma_k^{(l)} p_{k+m-1}^m(x)\, , 
    \end{equation}
    with coefficients $\beta^{(l)}$ and $\gamma_k^{(l)}$ in $\cnum$.
    In particular, $\phi_0 = f_0$.
    Since the associated Legendre polynomials have zero boundary data (Proposition~\ref{prop:boundary_data_legendre}), $\phi_l(\pm 1) = \beta^{(l)} f_0(\pm 1)$ and $\phi_l'(\pm 1) = \beta^{(l)} f_0'(\pm 1)$.
    Therefore $\phi_l$ fulfills condition (\textit{i}) of Proposition~\ref{prop:alpha-periodic-unitary}: $\phi_l \in \domain(H_{W(\alpha)})$ for all $l$.

    We now show that they also fulfill condition (\textit{ii}).
    To begin with, we show $\gamma_l^{(l)} \neq 0$ for $l \geq 1$.
    As $f_0(1) \neq 0$, $f_0$ is linearly independent from the (in itself linearly independent) set $(p_{k+m-1}^m)_{k = 1}^l$.
    Thus, $p_{l+m-1}^m$ is also linearly independent from the set $(\phi_k)_{k =0}^{l-1}$.
    Therefore, when adding the function $p_{l+m-1}^m$ to the orthogonal set $(\phi_k)_{k =0}^{l-1}$ using the Gram--Schmidt algorithm, the resulting coefficient $\gamma_l^{(l)}$ is non-zero:  $\gamma_l^{(l)} \neq 0$.
    Furthermore, using Eq.~\eqref{proofeq:thm-alphaperiodic-gs-1} and noticing that $\phi_l(1)=\beta^{(l)}f_0(1)$ since all other terms in the sum vanish there, $\tilde{\phi}_l \in \tilde{\Phi}$ is given by
    \begin{equation}
     \label{proofeq:thm-alphaperiodic-gs-tildephi}
    \begin{split}
        \tilde{\phi}_l(x) & = \phi_l(x) - \frac{\phi_l(1)}{f_0(1)} f_0(x) = \beta^{(l)} f_0(x) + \sum_{k = 1}^{l}\gamma_k^{(l)} p_{k+m-1}^m(x) - \frac{\beta^{(l)} f_0( 1)}{f_0(1)}f_0(x) \\
         & = \sum_{k = 1}^{l}\gamma_k^{(l)} p_{k+m-1}^m(x) \, .
    \end{split}
    \end{equation}
    Next, we show the following equality for $n \in \nnum$ and $n \geq 1$:
    \begin{equation}
       \operatorname{Span}(\tilde{\phi}_l)_{l = 1}^n =  \operatorname{Span}(p_l^m)_{l = m}^{n+m-1}
    \end{equation}
    Let $\psi \in \operatorname{Span}(\tilde{\phi}_l)_{l = 1}^n$.
    Then, there exist coefficients $(c_l)_{l = 1}^n$ such that
    \begin{equation}
        \psi = \sum_{l =1}^n c_l \tilde{\phi}_l = \sum_{l =1}^n \sum_{k = 1}^{l}\gamma_k^{(l)} p_{k+m-1}^m = \sum_{k = 1}^{n} d_k p_{k+m-1}^m\, ,
    \end{equation}
    where we used~\eqref{proofeq:thm-alphaperiodic-gs-tildephi} for the second equation.
    Comparing the coefficients, we get
    \begin{equation}
        d_k = \sum_{l = k}^n \gamma_k^{(l)} \quad \forall \, 1 \leq k \leq n\, , 
    \end{equation}
    which we can write, similarly to Eq.~\eqref{eq:proof-plm1perp-coeffs}, as a matrix equation $\mathsf{d} = \mathbb{A} \mathsf{c}$, where $\mathsf{d} = (d_1,d_3, \dots d_n)^\perp$, $\mathbf{c} = (c_1,c_2,\dots, c_n)^\perp$ and $\mathbb{A}$ is an $n\times n$ lower-triangular matrix whose diagonal elements  $\gamma^{(l)}_l$ are, as shown above, nonzero; as such, $\mathbb{A}$ is invertible.
    Thus, $\psi \in \operatorname{Span}(\tilde{\phi}_l)_{l = 1}^n$ if and only if  $\psi \in \operatorname{Span}(p_l^m)_{l = m}^{n+m-1}$.
    Taking $n$ to infinity and using $\tilde{\phi}_0 = f_0-f_0 = 0$, we get
    \begin{equation}
        \operatorname{Span}(\tilde{\phi}_l)_{l = 0}^\infty = \operatorname{Span}(p_l^m)_{l = m}^{\infty} \, .
    \end{equation}
    Using Proposition~\ref{prop:assoc_legendre_diff_1perp}, we finally obtain
    \begin{equation}
        \overline{\operatorname{Span}(\tilde{\phi}_l')_{l = 0}^\infty} = \overline{\operatorname{Span}((p_l^m)')_{l = m}^{\infty}} = \{1\}^\perp \, ,
    \end{equation}
    which is condition \textit{ii}) of Proposition~\ref{prop:alpha-periodic-unitary},  thus concluding the proof.
\end{proof}
        
\section{Concluding remarks}\label{sec:conclusion}
   
    In this work we have provided a detailed discussion on the convergence---or lack thereof---of the solutions of the Sch\"odinger equation generated by finite-dimensional truncations of a given quantum Hamiltonian $H$ with finite ground state energy on a Hilbert space $\hilbert$ via a family of projectors $(P_n)_{n\in\mathbb{N}}$ converging to the identity. Not surprisingly, the success or failure of this approximation method crucially depends on the specific choice of $P_n$: for example, if one discretizes with respect to a given orthonormal basis $(\phi_l)_{l\in\mathbb{N}}$, different choices of basis can lead to very different results. 
    
    Surprisingly enough, however, a relatively simple criterion for the outcome of the method can be found. To begin with, one restricts $H$ to the union of all finite-dimensional spaces $P_n\hilbert$ (which, as a direct consequence of infinite-dimensionality, is not simply equal to the whole $\hilbert$), thus obtaining a symmetric, not self-adjoint operator $H_{\rm fin}$. Then:
    \begin{enumerate}
        \item if $H_{\rm fin}$ to this space is still \textit{essentially self-adjoint} (i.e., it admits a single self-adjoint extension, necessarily coinciding with $H$), then the correct solution is obtained in the limit $n\to\infty$, and we do not need to worry;
        \item if $H_{\rm fin}$ has multiple self-adjoint extensions, with $H$ only being one of them, then the truncated solutions will reproduce, in the limit $n\to\infty$, the one corresponding to a specific self-adjoint extension of $H_{\rm fin}$ among all others: its Friedrichs extension.
    \end{enumerate}
    Let us rephrase this in a slightly more appealing way. Finite-dimensional truncations are guaranteed to work only if they ensure that ``not too much information about the physics'' of the system is lost, so that $H_{\rm fin}$ is still essentially self-adjoint. If, instead, we lose such information when truncating, physics cannot help us choosing the correct self-adjoint extensions---in this case, mathematics makes this choice for us, always selecting the Friedrichs extension among all others. Since $H_{\rm fin}$ can either admit one or infinitely many self-adjoint extensions, in the second case there are simply infinitely many ways we can take the wrong choice. And once again, numerics cannot generally help us in this case: what we will see is that the solutions, as $n\to\infty$, are converging to a legitimate, normalized wavefunction. Without mathematics, we just cannot trust this wavefunction to correspond to the physics we want to reproduce. Morally, not only must the basis be suited to the particular physical situation we aim to reproduce: they should also be unsuited to similar, but distinct ones.

    To put this machinery in motion in a case where we do have the exact, analytical solutions at hand, we went back to the essentials, and analyzed the particle in a box---a case in which different self-adjoint extensions of $H_{\rm fin}$ are conveniently mapped in different boundary conditions---proving that it is indeed possible to find choices of discretization bases such that things go spectacularly wrong: namely, one can as well find themselves in the position of the unlucky student in the ``thought homework'' presented at the start of this paper, reproducing in the limit $n\to\infty$ the \textit{right} solution for the particle in the \textit{wrong} box. 
    
    In this case, at the very least, we could come to the aid of our (imaginary) unfortunate student by just computing the exact analytical solution corresponding to the desired boundary conditions, and comparing it with the numerical one. For most complex quantum systems of research interest, e.g. in the context of quantum chemistry, obtaining analytical solutions is however out of question. In such cases, a wrong choice of basis can cause analogous issues that just go undetected: all we observe is that our computers yield a legitimate solution for each $n$, and that the solutions stay normalized as $n$ grows. 
    
    Our work motivates a more careful study of finite-dimensional truncations in the broader context, and stimulates a discussion between ``practitioners'' performing numerics and theoreticians applying functional analysis. In particular, it would be interesting to bridge between rather abstract domain issues on the one hand and practical convergence issues encountered in the numerics, as well as sufficiently general tricks to avoid them. This will, of course, require thinking outside the box.

    \section*{Acknowledgments}

 We thank S. Schraven for his insightful comments. DL acknowledges financial support by Friedrich-Alexander-Universit\"at Erlangen-N\"urnberg through the funding program ``Emerging Talent Initiative'' (ETI).
   
    \appendix

\section{Exact time evolution for Dirichlet and periodic boundary conditions}
\label{sec:app-exact-timeevolution}
In this appendix, we calculate the exact time evolution of the the Dirichlet Laplacian, $U_{\mathrm{Dir}}(t)$ and the periodic Laplacian $U_{\mathrm{per}}(t)$ at time $t = \frac{4}{\pi}$.
This is used in Example~\ref{ex:dirichlet_vs_neumann} to show that the two time evolutions can be maximally different at specific times, cf.~Figure~\ref{fig:dir_periodic_sin}. 

In the following, as in Section~\ref{sec:particle-in-a-box}, we set $\hilbert = L^2((-1,1))$, and we denote by $H_{\mathrm{Dir}}$ and $H_{\mathrm{per}}$ the Laplace operator with Dirichlet and periodic boundary conditions (Definition~\ref{def:particular_bcs}), with $U_{\rm Dir}(t)$, $U_{\rm per}(t)$ being the corresponding unitary propagators. Furthermore, given $\psi_0\in\hilbert$, we use the following notation:
\begin{equation}
 \psi_{\mathrm{Dir}}(x;t) = (U_{\mathrm{Dir}}(t)\psi_0)(x)\, ,\qquad  \psi_{\mathrm{per}}(x;t) = (U_{\mathrm{per}}(t)\psi_0)(x)\, .
\end{equation}

\begin{proposition}
\label{prop:app-dirichlet-exact}
    Let $\psi_0 \in \hilbert$ and $t = \frac{4}{\pi}$.
    Then
    \begin{equation}
       \psi_{\mathrm{Dir}}(x;4/ \pi) = -\psi_0(-x).
    \end{equation}
\end{proposition}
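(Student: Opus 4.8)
The plan is to diagonalize $U_{\mathrm{Dir}}(4/\pi)$ in the Dirichlet eigenbasis and recognize the resulting operator as minus the reflection $x \mapsto -x$. First I would recall from Example~\ref{ex:bc-eigenfunctions} that $H_{\mathrm{Dir}}$ has purely discrete spectrum, with eigenfunctions $\psi_j(x) = \sin\!\left(\frac{j\pi}{2}(x+1)\right)$ and eigenvalues $E_j = \frac{j^2\pi^2}{4}$ for $j \geq 1$; a direct computation shows $\|\psi_j\| = 1$, and by self-adjointness with compact resolvent these form a complete orthonormal basis of $\hilbert$. Expanding $\psi_0 = \sum_{j\geq 1} c_j \psi_j$ with $c_j = \braket{\psi_j,\psi_0}$, the spectral theorem gives $U_{\mathrm{Dir}}(t)\psi_0 = \sum_{j\geq 1} c_j\, \e^{-\iu t E_j}\psi_j$, with convergence in $L^2$.

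Next I would evaluate the phases at the special time $t = 4/\pi$. Since $\e^{-\iu(4/\pi)E_j} = \e^{-\iu j^2 \pi} = (-1)^{j^2} = (-1)^j$, using that $j^2$ and $j$ have the same parity, each eigenvalue contributes only a sign, so that $U_{\mathrm{Dir}}(4/\pi)\psi_j = (-1)^j\psi_j$. The nontrivial phase structure of the general-time evolution (which produces the fractal behaviour of~\cite{berry-quantumfractalsboxes-1996}) thus collapses into a pure parity operation at $t = 4/\pi$.

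The crux is an elementary trigonometric identity relating $(-1)^j\psi_j(x)$ to the reflected eigenfunction $\psi_j(-x)$. Writing $\frac{j\pi}{2}(1-x) = j\pi - \frac{j\pi}{2}(x+1)$ and using $\sin(j\pi - u) = -(-1)^j\sin u$, I obtain $\psi_j(-x) = \sin\!\left(\frac{j\pi}{2}(1-x)\right) = -(-1)^j\psi_j(x)$, equivalently $(-1)^j\psi_j(x) = -\psi_j(-x)$. Introducing the reflection operator $(R\psi)(x) := \psi(-x)$, which is unitary on $L^2((-1,1))$ since the interval is symmetric, this identity reads $U_{\mathrm{Dir}}(4/\pi)\psi_j = (-1)^j\psi_j = -R\psi_j$ for every $j$.

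Finally, since $U_{\mathrm{Dir}}(4/\pi)$ and $-R$ are bounded operators that agree on a complete orthonormal set, they coincide on all of $\hilbert$, yielding $\psi_{\mathrm{Dir}}(x;4/\pi) = (U_{\mathrm{Dir}}(4/\pi)\psi_0)(x) = -\psi_0(-x)$ as an identity in $L^2((-1,1))$. I expect the only genuine subtlety to be bookkeeping---checking the normalization of the $\psi_j$, the parity reduction $(-1)^{j^2} = (-1)^j$, and the reflection identity---rather than any analytic difficulty: by reading the claim as an $L^2$ equality and passing through the operator identity $U_{\mathrm{Dir}}(4/\pi) = -R$, no delicate interchange of the $L^2$-convergent series with pointwise evaluation is required.
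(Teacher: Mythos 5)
Your proof is correct and follows essentially the same route as the paper: expand $\psi_0$ in the Dirichlet eigenbasis, observe that the phases at $t=4/\pi$ reduce to $\e^{-\iu j^2\pi}=(-1)^j$, and exploit the parity of the eigenfunctions $\psi_j(-x)=-(-1)^j\psi_j(x)$. The only cosmetic difference is that you package the parity argument as the operator identity $U_{\mathrm{Dir}}(4/\pi)=-R$ with $R$ the reflection, whereas the paper phrases it via projectors onto the symmetric and antisymmetric subspaces; the content is identical.
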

\begin{proof}
    The eigenfunctions $\psi_j$ and eigenvalues $E_j$ of the Dirichlet Laplacian are given by (cf.~Example~\ref{ex:bc-eigenfunctions}):
    \begin{equation}
        \label{eq:app-dir-eigenfunctions}
         E_j = \frac{j^2 \pi^2}{4}, \quad  \psi_j(x) = \sin\left(\frac{j\pi}{2}(x+1)\right), \quad  j \geq 1\, .
    \end{equation}
    For odd $j$ the wavefunctions are symmetric, and for even $j$ anti-symmetric:
    \begin{equation}
        \psi_{2j+1}(-x) = \psi_{2j+1}(x), \quad \psi_{2j}(-x) = -\psi_{2j}(x), \quad n \in \nnum, n \geq 1.
    \end{equation}
    As such, we can define projectors onto the symmetric ($P_{\rm s}$) and anti-symmetric subspace ($P_{\rm a}$) of $\hilbert$ by means of the following expressions:
    \begin{equation}
    \label{eq:app-dir-projectors}
        \begin{split}
            P_{\rm s} &: \hilbert \to \hilbert, \quad P_{\rm s} \psi = \sum_{j = 0}^\infty \braket{\psi_{2j+1},\psi} \psi_{2j+1}, \\
            P_{\rm a} &: \hilbert \to \hilbert, \quad P_{\rm a} \psi = \sum_{j = 1}^\infty \braket{\psi_{2j},\psi} \psi_{2j},
        \end{split}
    \end{equation}
and by construction
\begin{equation}\label{eq:proj}
        (P_{\rm s} \psi)(-x) = (P_{\rm s} \psi)(x), \quad (P_{\rm a} \psi)(-x) = -(P_{\rm a} \psi)(x), \quad P_{\rm s} \psi + P_{\rm a} \psi = \psi, \quad \forall \psi \in \hilbert.
    \end{equation}    
    Using the spectral theorem and Eq.~\eqref{eq:app-dir-eigenfunctions}, the time evolution of an initial state $\psi_0$ reads at time $t$
    \begin{equation}
            U_{\mathrm{Dir}}(t) \psi_0 = \sum_{j = 1}^\infty \e^{-\iu E_j t }\braket{\psi_j,\psi_0} \psi_j\, .
    \end{equation}
    Inserting $t = \frac{4}{\pi}$ and using $E_j = \frac{j^2 \pi^2}{4}$, we get
    \begin{equation}
        \begin{split}
             U_{\mathrm{Dir}}\left(\frac{4}{\pi}\right) \psi_0 & = \sum_{j = 1}^\infty \e^{-\iu \frac{j^2 \pi^2}{4}\frac{4}{\pi}} \braket{\psi_j,\psi_0} \psi_j =  \sum_{j = 1}^\infty (-1)^{j^2} \braket{\psi_j,\psi_0} \psi_j \\
             & = - \sum_{j = 0}^\infty \braket{\psi_{2j+1},\psi} \psi_{2j+1} + \sum_{j = 1}^\infty \braket{\psi_{2j},\psi} \psi_{2j} = -P_{\rm s} \psi_0 + P_{\rm a} \psi_0 \, ,
        \end{split}
    \end{equation}
    where in the last line we split the sum into even and odd parts and used the definition of the projectors in Eq.~\eqref{eq:app-dir-projectors}.
    Finally, by Eq.~\eqref{eq:proj},
    \begin{equation}
        U_{\mathrm{Dir}}\left(\frac{4}{\pi}\right) \psi_0 = -P_{\rm s} \psi_0 + P_{\rm a} \psi_0 = -(P_{\rm s} \psi_0)(-x) -(P_{\rm a} \psi_0)(-x) = -\psi_0(-x)\, ,
    \end{equation}
    thus completing the proof.
\end{proof}

\begin{proposition}
\label{prop:app-per-exact}
     Let $\psi_0 \in \hilbert$ and $t = \frac{4}{\pi}$.
    Then the \textit{periodic} time evolution of $\psi_0$ reads at $t = 4/\pi$
    \begin{equation}
       \psi_{\mathrm{per}}(x;4/ \pi) = \psi_0(x).
    \end{equation}
\end{proposition}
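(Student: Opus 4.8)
The plan is to proceed exactly as in the proof of Proposition~\ref{prop:app-dirichlet-exact}, exploiting the spectral decomposition of the periodic Laplacian in its own eigenbasis. First I would recall from Example~\ref{ex:bc-eigenfunctions} that the periodic eigenfunctions and eigenvalues are $\psi_j(x) = \frac{1}{\sqrt{2}}\e^{\iu j \pi(x+1)}$ with $E_j = j^2\pi^2$, $j \in \znum$, and that these form a complete orthonormal set of $\hilbert$. By the spectral theorem, the time evolution of an arbitrary $\psi_0 \in \hilbert$ reads
\begin{equation}
    U_{\mathrm{per}}(t)\psi_0 = \sum_{j \in \znum} \e^{-\iu E_j t}\braket{\psi_j,\psi_0}\psi_j.
\end{equation}

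The crucial step is then to evaluate the phase factors at the specific time $t = 4/\pi$. Inserting $E_j = j^2\pi^2$ gives $E_j t = j^2\pi^2 \cdot \frac{4}{\pi} = 4\pi j^2 = 2\pi\cdot 2j^2$, which is an integer multiple of $2\pi$ for every $j \in \znum$. Hence $\e^{-\iu E_j t} = 1$ for all $j$, and the sum collapses to $\sum_{j \in \znum}\braket{\psi_j,\psi_0}\psi_j = \psi_0$ by completeness of the eigenbasis. This immediately yields $\psi_{\mathrm{per}}(x;4/\pi) = \psi_0(x)$.

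I do not anticipate any genuine obstacle here: the entire content of the statement is the arithmetic observation that the periodic spectrum $E_j = j^2\pi^2$---which differs by a factor of $4$ from the Dirichlet spectrum $j^2\pi^2/4$---makes every phase $\e^{-\iu E_j t}$ return to $1$ \emph{simultaneously} at $t = 4/\pi$. This is precisely what distinguishes the periodic case from the Dirichlet case treated in Proposition~\ref{prop:app-dirichlet-exact}, where the analogous phases evaluate to $(-1)^{j^2} = (-1)^j$ and thus produce a reflection rather than the identity. The only point requiring minimal care is the convergence of the series, but this is automatic since $U_{\mathrm{per}}(t)$ is a bounded (unitary) operator and the partial sums converge in $\hilbert$; no interchange of limits beyond the spectral theorem is needed.
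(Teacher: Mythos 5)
Your argument is correct and coincides with the paper's own proof: both expand $\psi_0$ in the periodic eigenbasis via the spectral theorem and observe that $E_j t = 4\pi j^2$ is an integer multiple of $2\pi$ at $t=4/\pi$, so every phase equals $1$ and completeness collapses the sum to $\psi_0$. No gaps.
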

\begin{proof}
    The eigenfunctions $\psi_j$ and eigenvalues $E_j$ of the periodic Laplacian are given by (cf.~Example~\ref{ex:bc-eigenfunctions}):
    \begin{equation}
        \label{eq:app-periodic-eigenfunctions}
        E_j = j^2 \pi^2, \quad \psi_j(x) = \frac{1}{\sqrt{2}}\e^{-\iu j \pi (x+1)}, \quad j \in \znum\, .
    \end{equation}
    As they form a complete orthogonal set, summing over their respective projectors yields the identity:
    \begin{equation}
    \label{eq:app-periodic-id}
        \sum_{j \in \znum} \braket{\psi_j,\psi} \psi_j = \psi \quad \forall \psi \in \hilbert\, .
    \end{equation}
    Using the spectral theorem and Eq.~\eqref{eq:app-periodic-eigenfunctions}, the periodic time evolution of the initial state $\psi_0$ at time $t$ reads
    \begin{equation}
        U_{\mathrm{per}}(t) \psi_0 = \sum_{j \in \znum} \e^{-\iu E_j t} \braket{\psi_j,\psi_0}\psi_j \, .
    \end{equation}
    Inserting $t = \frac{4}{\pi}$, $E_j = j^2 \pi^2$ and using Eq.~\eqref{eq:app-periodic-id},
    \begin{equation}
        U_{\mathrm{per}}\left(\frac{4}{\pi}\right) \psi_0 = \sum_{j \in \znum} \e^{-\iu j^2 \pi^2 \frac{4}{\pi}} \braket{\psi_j,\psi_0}\psi_j = \sum_{j \in \znum}\braket{\psi_j,\psi_0}\psi_j = \psi_0 \, ,
    \end{equation}
    thus completing the proof.
\end{proof}
\begin{corollary}
\label{corr:app-dirichlet-periodic-maxdifferent}
    For $t = \frac{4}{\pi}$ the Dirichlet and periodic time evolutions are maximally different, in the following sense: the operator norm of their difference reads
    \begin{equation}
        \norm*{U_{\mathrm{Dir}}\left(\frac{4}{\pi}\right) -U_{\mathrm{per}}\left(\frac{4}{\pi}\right) } = 2 \, .
    \end{equation}
\end{corollary}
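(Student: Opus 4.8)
The plan is to use the explicit formulas from the two preceding propositions to reduce the computation of the operator norm to an elementary estimate involving the parity decomposition of $\hilbert=L^2((-1,1))$. First I would combine Proposition~\ref{prop:app-dirichlet-exact} and Proposition~\ref{prop:app-per-exact}: for every $\psi\in\hilbert$, writing $\Delta:=U_{\mathrm{Dir}}(4/\pi)-U_{\mathrm{per}}(4/\pi)$, one has
\begin{equation}
    (\Delta\psi)(x) = -\psi(-x) - \psi(x) = -2\psi_{\mathrm s}(x),
\end{equation}
where $\psi_{\mathrm s}(x)=\tfrac12\bigl(\psi(x)+\psi(-x)\bigr)$ denotes the even (symmetric) part of $\psi$. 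This identity is the key observation: it exhibits $\Delta$, up to sign and a factor $2$, as the orthogonal projection onto the symmetric subspace of $\hilbert$.

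For the upper bound I would simply invoke unitarity: since $U_{\mathrm{Dir}}(4/\pi)$ and $U_{\mathrm{per}}(4/\pi)$ are unitary, each has operator norm $1$, so the triangle inequality immediately yields $\norm{\Delta}\leq 2$. (Alternatively, this follows directly from the identity above together with $\norm{\psi_{\mathrm s}}\leq\norm{\psi}$, since the parity projection is a contraction.)

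For the lower bound I would exploit that the symmetric projection attains its norm. From the identity above, $\norm{\Delta\psi}=2\norm{\psi_{\mathrm s}}$ for all $\psi$; choosing any nonzero even function, for instance the constant $\psi_0\equiv 1$ (for which $\psi_{\mathrm s}=\psi_0$), gives $\norm{\Delta\psi_0}=2\norm{\psi_0}$, hence $\norm{\Delta}\geq 2$. Combining the two bounds yields $\norm{\Delta}=2$, as claimed.

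There is no substantial obstacle here: the only point requiring a moment's care is that the operator norm is genuinely \emph{attained} (not merely approached) at the value $2$, which is guaranteed because the supremum is realized by any symmetric test function rather than only in a limiting sequence. The whole argument hinges on recognizing the parity structure of $\Delta$, after which both inequalities are routine.
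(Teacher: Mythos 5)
Your proof is correct and follows essentially the same route as the paper: the upper bound comes from the triangle inequality for unitaries, and the lower bound from evaluating the difference on an even wavefunction using Propositions~\ref{prop:app-dirichlet-exact} and~\ref{prop:app-per-exact}. The identification of the difference as $-2$ times the parity projection is a nice structural remark but does not change the argument.
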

\begin{proof}
Clearly the difference between two unitaries is always bounded by $2$. To show that the equality holds, just consider any even wavefunction $\psi_0$, i.e., $\psi_0(-x)=\psi_0(x)$. Then, by Propositions~\ref{prop:app-dirichlet-exact} and~\ref{prop:app-per-exact},
     \begin{equation}
        \norm*{U_{\mathrm{Dir}}\left(\frac{4}{\pi}\right)\psi_0 -U_{\mathrm{per}}\left(\frac{4}{\pi}\right)\psi_0 } = 2\|\psi_0\| \,,
    \end{equation}
    which proves the claim.
\end{proof}
Finally, we remark that the exact time evolution generated by the Dirichlet Laplacian for arbitrary rational times $t = \frac{p}{q \pi}$, $p,q \in \nnum$ (in units of $\frac{1}{\pi}$) was famously discussed by Berry in~\cite{berry-quantumfractalsboxes-1996}. Similar techniques can be applied to other boundary conditions.

\section{Expansion coefficients in the associated Legendre polynomials}
\label{app:expansion-coefficients}

In this appendix we show how we obtained the numerical data in Figure~\ref{fig:numerics-nt-alt}, and how similar plots can be obtained.
To this extent, let us begin by finding a computable expression for the state-dependent approximation error $\norm*{U_n(t) \psi_{0,W} - U_{W}(t) \psi_{0,W}}$, with $\psi_{0,W}$ being an eigenvector of the corresponding realization $H_W$ of the Laplace operator,
\begin{equation}
    H_W\psi_{0,W}=E_{0,W}\psi_{0,W}.
\end{equation}
In this case, given any orthonormal basis $(\phi_l)_{l\in\mathbb{N}}$, one easily gets
\begin{align}
    \label{eq:numerics-norm-expansion}
     \norm*{U_W(t)\psi_{0,W} -U_n(t) \psi_{0,W}}^2 =&  \sum_{l=0}^\infty \left|\Braket{\phi_l,U_W(t)\psi_{0,W} - U_n(t) \psi_{0,W}} \right|^2 \\
    =&    \sum_{l=0}^{n} \left|\exp(-\iu E_{0,W} t)\Braket{\phi_l,\psi_{0,W}} -  \Braket{\phi_l, U_n(t)\psi_{0,W}} \right|^2 \nonumber\\&        +  \sum_{l = n+1}^\infty \left|\exp(-\iu E_{0,W} t)-1\right|^2 \left| \Braket{\phi_l,\psi_{0,W}} \right|^2
        \label{proofeq:numerics-norm-1} \\
    =&   \sum_{l=0}^{n} \left|\exp(-\iu E_{0,W} t)\Braket{\phi_l,\psi_{0,W}} -  \Braket{\phi_l, U_n(t)\psi_{0,W}} \right|^2\nonumber\\& + \left|\exp(-\iu E_{0,W} t)-1\right|^2 \left(1-\sum_{l=0}^{n} \left|\Braket{\phi_l,\psi_{0,W}} \right|^2 \right) \,,
        \label{proofeq:numerics-norm-2}
\end{align}
where we used the property $\e^{-\iu tH_W}\psi_{0,W}=\e^{-\iu E_{0,W}t}\psi_{0,W}$, as well as the fact that $H_n$ by construction acts trivially on the vectors $\phi_l$ with $l>n$.
No infinite sum is thus required to compute the error in this case.

In the following we choose the normalized associated Legendre polynomials $(p^m_l)_{l\geq m}$ as orthonormal basis, and we consider the two cases of Dirichlet and periodic boundary conditions, for which all eigenvalues and eigenvectors are known. In these cases, to compute the quantity in Eq.~\eqref{proofeq:numerics-norm-2} we need to compute the following quantities:
\begin{itemize}
    \item the expansion coefficients $\braket{\phi_l,\psi_{0,W}}$ of the eigenvector $\psi_{0,W}$ in the basis $(\phi_l)_{l \in \nnum}$;
    \item the quantities $\braket{\phi_l,U_n(t)\psi_{0,W}}$.
\end{itemize}
As we will see, the expansion coefficients of both Dirichlet and periodic boundary conditions can be evaluated exactly; this is done in Sections~\ref{sec:app-coeff-dir}--\ref{sec:app-coeff-per}. To compute $\braket{\phi_l,U_n(t)\psi_{0,W}}$, we adopt the following strategy:
\begin{enumerate}
    \item We evaluate exactly the matrix elements $\braket{p_l^m, H_W p_k^m}$ of the Laplace operator, which gives us the exact expression of $H_n$ for every fixed $n$. This is done in Section~\ref{sec:app-matrixels}.
    \item We then determine the exponential $U_n(t) = \e^{-\iu tH_n}$ numerically via direct matrix exponentiation.
\end{enumerate}
For the latter purpose, we use the python library \verb|mpmath|~\cite{mpmath}, which provides arbitrary precision floating point arithmetic. This allows us to avoid overflow issues when dealing with large factorials and determine $U_n(t)$ with high precision. In particular, we used a precision of $500$ decimal places; at this point further increasing the precision did no change the results. 

Before going through the calculations, we state some auxiliary lemmata on the derivatives of the associated Legendre polynomials that will be needed in the remainder of the appendix.
\begin{lemma}
    \label{lem:legendre_derivative}
    Let $\alpha,l$ be integers such that $\alpha\leq l$. Then
    \begin{equation}
        \diff*[\alpha]{P_l(x)}{x}[x=1] = \frac{2^{-\alpha}}{\alpha!} \frac{(l+\alpha)!}{(l-\alpha)!} \, .
    \end{equation}
    For $\alpha > l$, $\diff*[\alpha]{P_l(x)}{x} = 0$.
\end{lemma}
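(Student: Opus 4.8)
The plan is to work directly from the Rodrigues formula~\eqref{eq:def-legendre}, which expresses $P_l$ as a normalized $l$th derivative of $(x^2-1)^l$. Writing $f(x) = (x^2-1)^l = (x-1)^l(x+1)^l$, the definition gives $P_l = \frac{1}{2^l l!}\, f^{(l)}$, and hence
\begin{equation}
    \frac{\mathrm{d}^\alpha P_l}{\mathrm{d}x^\alpha}(x) = \frac{1}{2^l l!}\, f^{(l+\alpha)}(x).
\end{equation}
The entire computation therefore reduces to evaluating the single high-order derivative $f^{(l+\alpha)}$ at the boundary point $x=1$.

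To carry this out I would apply the general Leibniz rule to the factorization $f(x) = (x-1)^l\cdot(x+1)^l$, obtaining
\begin{equation}
    f^{(l+\alpha)}(x) = \sum_{j=0}^{l+\alpha}\binom{l+\alpha}{j}\,\frac{\mathrm{d}^j}{\mathrm{d}x^j}(x-1)^l \cdot \frac{\mathrm{d}^{\,l+\alpha-j}}{\mathrm{d}x^{\,l+\alpha-j}}(x+1)^l.
\end{equation}
The key observation is that $\frac{\mathrm{d}^j}{\mathrm{d}x^j}(x-1)^l = \frac{l!}{(l-j)!}(x-1)^{l-j}$ vanishes at $x=1$ for every $j<l$, and vanishes identically for $j>l$; hence, after setting $x=1$, only the term $j=l$ survives. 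Evaluating that lone term with $\frac{\mathrm{d}^\alpha}{\mathrm{d}x^\alpha}(x+1)^l\big|_{x=1} = \frac{l!}{(l-\alpha)!}\,2^{l-\alpha}$ (legitimate precisely because $\alpha\le l$) gives
\begin{equation}
    f^{(l+\alpha)}(1) = \binom{l+\alpha}{l}\, l!\, \frac{l!}{(l-\alpha)!}\, 2^{l-\alpha}.
\end{equation}

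Substituting this back and expanding $\binom{l+\alpha}{l} = (l+\alpha)!/(l!\,\alpha!)$, the two prefactors $1/(2^l l!)$ combine with the surviving factorials so that all powers of $l!$ cancel and $2^{l-\alpha}/2^l = 2^{-\alpha}$, collapsing the expression to the claimed $\frac{2^{-\alpha}}{\alpha!}\frac{(l+\alpha)!}{(l-\alpha)!}$. The remaining case $\alpha>l$ is immediate, since the Rodrigues formula already exhibits $P_l$ as a polynomial of degree $l$, whose derivatives of order exceeding $l$ vanish identically. I do not anticipate a genuine obstacle here: the only delicate point is the Leibniz bookkeeping together with the fact that exactly one summand survives at $x=1$, which rests entirely on the vanishing of $(x-1)^l$ and all its derivatives of order below $l$ at the boundary.
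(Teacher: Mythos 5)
Your proof is correct, but it takes a genuinely different route from the paper's. The paper does not use the Rodrigues formula at this point: it instead invokes the series representation $P_l(x)=\sum_{k=0}^{l}\binom{l}{k}\binom{l+k}{k}\left(\tfrac{x-1}{2}\right)^{k}$ (imported from DLMF), differentiates term by term, and observes that at $x=1$ only the $k=\alpha$ summand survives, yielding $2^{-\alpha}\alpha!\binom{l}{\alpha}\binom{l+\alpha}{\alpha}$, which simplifies to the stated expression. You work directly from the paper's own Definition~\ref{def:legendre}, factor $(x^2-1)^l=(x-1)^l(x+1)^l$, apply the Leibniz rule to the $(l+\alpha)$th derivative, and note that only the $j=l$ term survives at $x=1$; the bookkeeping $\frac{1}{2^l l!}\binom{l+\alpha}{l}\,l!\,\frac{l!}{(l-\alpha)!}\,2^{l-\alpha}=\frac{2^{-\alpha}}{\alpha!}\frac{(l+\alpha)!}{(l-\alpha)!}$ checks out, and the $\alpha>l$ case is handled identically in both arguments. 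The two proofs hinge on the same mechanism (exactly one summand survives at the boundary), but yours is self-contained relative to the paper, needing no representation beyond the Rodrigues definition, while the paper's is marginally shorter once the external series expansion is granted.
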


\begin{proof}
We will use the following representation of Legendre polynomials~\cite{dlmf147-legendre-rodrigues}:
\begin{equation}
    P_l(x) = \sum_{k=0}^l \binom{l}{k}\binom{l+k}{k} \left(\frac{x-1}{2}\right)^k \, .
\end{equation}
In this representation, the derivative at $x=1$ reads
\begin{equation}
    \diff*[\alpha]{P_l(x)}{x}[x=1] =\sum_{k=0}^l \binom{l}{k}\binom{l+k}{k} \diff*[\alpha]{\left(\frac{x-1}{2}\right)}{x}[ x=1] =  2^{-\alpha} \alpha! \, \binom{l}{\alpha}\binom{l+\alpha}{\alpha} \, ,
\end{equation}
whence, for $l \geq \alpha$,
\begin{equation}
    \diff*[\alpha]{P_l(x)}{x}[x=1] = 2^{-\alpha} \alpha! \, \frac{l! \, (l+\alpha)!}{\alpha! \, (l-\alpha)! \, \alpha! \, (l+\alpha-\alpha)! } = \frac{2^{-\alpha}}{\alpha!} \frac{(l+\alpha)!}{(l-\alpha)!} \, .
\end{equation}
\end{proof}
\begin{lemma}
    \label{lem:assoc_legendre_derivative}
    Let $m,\alpha,l$ integers such that $m$ is even and $\alpha\leq l$. Then
    \begin{multline}
         \diff*[\alpha]{P_l^m(x)}{x}[x=1] = (-1)^{m/2} 2^{-\alpha} \alpha!  \left(\frac{m}{2}\right)! \\
         \times \,  \sum_{t=m/2}^{\min(m,\alpha)} \frac{(l+m+\alpha-t)!}{(\alpha-t)! (t-m/2)! (m-t)! (m-t+\alpha)! (l-m-\alpha +t)!} \,.
    \end{multline}
    For $\alpha > l$, $\diff*[\alpha]{P_l^m(x)}{x} = 0$.
\end{lemma}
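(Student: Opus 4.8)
The plan is to reduce the statement to two facts already at hand: the definition of the associated Legendre polynomials and the closed form for derivatives of the ordinary Legendre polynomials in Lemma~\ref{lem:legendre_derivative}. Since $m$ is even we have $(-1)^m=1$, so Eq.~\eqref{eq:def-assoc-legendre} reads $P_l^m(x)=(1-x^2)^{m/2}\,\frac{\mathrm{d}^m}{\mathrm{d}x^m}P_l(x)$, which is a polynomial of degree $l$. The final claim is then immediate: the $\alpha$th derivative of a degree-$l$ polynomial vanishes identically for $\alpha>l$.

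For the main formula I would apply the Leibniz rule to this product,
\begin{equation}
    \frac{\mathrm{d}^\alpha}{\mathrm{d}x^\alpha}P_l^m(x)=\sum_{k=0}^{\alpha}\binom{\alpha}{k}\left[\frac{\mathrm{d}^k}{\mathrm{d}x^k}(1-x^2)^{m/2}\right]\frac{\mathrm{d}^{m+\alpha-k}}{\mathrm{d}x^{m+\alpha-k}}P_l(x),
\end{equation}
and evaluate at $x=1$. The Legendre factor is supplied directly by Lemma~\ref{lem:legendre_derivative} with order $m+\alpha-k$, under the convention that terms with $m+\alpha-k>l$ vanish (their denominator factorial being that of a negative integer, consistent with the vanishing clause of that lemma).

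The heart of the argument is to determine which terms survive at $x=1$. I would factor $(1-x^2)^{m/2}=(1-x)^{m/2}(1+x)^{m/2}$ and apply Leibniz once more. At $x=1$ the factor $(1-x)^{m/2}$ forces exactly $m/2$ of the derivatives to act on it (its lower-order derivatives vanish at $x=1$ and its higher-order ones vanish identically), giving the lower bound $k\ge m/2$; the remaining $k-m/2$ derivatives must then fall on $(1+x)^{m/2}$, a polynomial of degree $m/2$, giving the upper bound $k\le m$. This fixes the summation range to $m/2\le k\le\min(m,\alpha)$ and yields the compact evaluation $\left.\frac{\mathrm{d}^k}{\mathrm{d}x^k}(1-x^2)^{m/2}\right|_{x=1}=(-1)^{m/2}\,\tfrac{k!\,(m/2)!}{(k-m/2)!\,(m-k)!}\,2^{m-k}$.

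What remains is bookkeeping, which I expect to be the only delicate part. Substituting both evaluations, I would clear the binomial via $\binom{\alpha}{k}k!=\alpha!/(\alpha-k)!$, combine the two powers of two into $2^{m-k}\cdot2^{-(m+\alpha-k)}=2^{-\alpha}$, factor out $(-1)^{m/2}2^{-\alpha}\alpha!\,(m/2)!$, and relabel $k\to t$. The surviving denominator factorials $(\alpha-t)!$, $(t-m/2)!$, $(m-t)!$, $(m-t+\alpha)!$, $(l-m-\alpha+t)!$ together with the numerator $(l+m+\alpha-t)!$ then reproduce the stated expression term by term. The main care is to keep the index shifts consistent and to interpret the degree/vanishing conventions uniformly, so that the bounds $m/2\le t\le\min(m,\alpha)$ and the negative-argument factorials are handled coherently throughout.
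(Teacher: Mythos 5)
Your proposal is correct and follows essentially the same route as the paper: Leibniz's rule applied to $(1-x^2)^{m/2}\,\mathrm{d}^m P_l/\mathrm{d}x^m$, the observation that only $m/2\le t\le\min(m,\alpha)$ survives at $x=1$, evaluation of the $P_l$ factor via Lemma~\ref{lem:legendre_derivative}, and the same bookkeeping. The only (harmless) variation is that you evaluate $\mathrm{d}^t(1-x^2)^{m/2}/\mathrm{d}x^t$ at $x=1$ by a second Leibniz expansion of $(1-x)^{m/2}(1+x)^{m/2}$, whereas the paper rewrites it as $(-1)^{m/2}2^{m/2}(m/2)!\,\mathrm{d}^{t-m/2}P_{m/2}/\mathrm{d}x^{t-m/2}$ and reuses Lemma~\ref{lem:legendre_derivative}; both yield the same closed form.
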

\begin{proof}
Using the product rule, we get
\begin{multline}
        \diff*[\alpha]{P_l^m(x)}{x}[x = 1]  = (-1)^m \diff*[\alpha]{\left( (1-x^2)^{m/2} \diff*[m]{P_l(x)}x \right)}{x}[x=1] \\
         = \sum_{t=0}^\alpha \binom{\alpha}{t} \diff*[t]{(1-x^2)^{m/2}}{x}[x = 1] \diff*[m+\alpha-t]{P_l(x)}{x}[x = 1]
\end{multline}
The first part of each summand is $0$ for $t < \frac{m}{2}$, as one term $(1-x^2)$ that evaluates to $0$ at $x=1$ always remains.
Therefore, we take $t \geq \frac{m}{2}$.
Furthermore, for $t>m$, $\diff*[t]{(1-x^2)^{m/2}}{x} = 0$.
If $\alpha < m$, then $\alpha$ is the upper limit of the sum.
Then, using Eq.~\eqref{eq:def-legendre},
\begin{equation}
    \diff*[t]{(1-x^2)^{m/2}}{x} = (-1)^{m/2} 2^{m/2}(m/2)! \, \diff*[t-m/2]{P_{m/2}(x)}{x},
\end{equation}
and by Lemma~\ref{lem:legendre_derivative} we get
\begin{equation}
\begin{split}
            & \diff*[\alpha]{P_l^m(x)}{x}[x = 1]         = (-2)^{m/2}(m/2)! \,  \sum_{t=m/2}^{\min(m,\alpha)} \binom{\alpha}{t} \diff*[t-m/2]{P_{m/2(x)}}{x}[x = 1] \diff*[m+\alpha-t]{P_l(x)}{x}[x = 1] \\
         & = (-2)^{m/2}(m/2)! \,  \sum_{t=m/2}^{\min(m,\alpha)} \frac{\alpha!}{t! \, (\alpha-t)!} \frac{2^{-t+m/2} t!}{(t-m/2)! (m-t)!} \frac{2^{-m+t-\alpha} (l+m+\alpha-t)!}{(m-t+\alpha)!(l-m-\alpha+t)!} \\
         & = (-1)^{m/2} 2^{m/2+m/2-m-t+t-\alpha} \alpha!(m/2)! \\
         & \qquad \times \sum_{t=m/2}^{\min(m,\alpha)}  \frac{t!}{t!} \frac{(l+m+\alpha-t)!}{(\alpha-t)! (t-m/2)! (m-t)! (m-t+\alpha)! (l-m-\alpha +t)!} \, ,
\end{split}
\end{equation}
which concludes the proof.
\end{proof}

\subsection{Matrix elements of the Laplacian}
\label{sec:app-matrixels}
By using Lemmata~\ref{lem:legendre_derivative}--\ref{lem:assoc_legendre_derivative}, we will now be able to compute the matrix elements of any realization $H_W$ of the Laplace operator on $(-1,1)$. Here, to avoid cumbersome expressions, we will compute them without the normalization factors.
\begin{proposition}
    \label{prop:matrix-elements}
    Let $m\geq4$ even, $W\in\mathrm{U}(2)$, and $H_W$ being the corresponding realization of the Laplace operator on $L^2((-1,1))$. Then, for every integers $l,k\geq m$,
    \begin{equation}\label{eq:matrix_general}
        \begin{split}
            \Braket{P_l^m , H_W  P_k^m} &= 2 \delta_{\mmod(l+k,2),0} \times \begin{cases}
                h_{lk} & l \geq k \\ h_{kl} & l < k
            \end{cases}\, , \\
            h_{lk} & = \sum_{t=\left\lfloor \frac{m}{2} \right\rfloor + 1}^m (-1)^t   \diff*[m-t]{P_l(x)}{x}[x = 1] \times \\
            & \quad \quad \times \diff*[t-1]{\left( (1-x^2)^{m/2} \diff*[2]{\left( (1-x^2)^{m/2} \diff*[m]{P_k(x)}{x} \right) }{x} \right)}{x}[{x = 1}].
        \end{split}
    \end{equation}
    In particular, for $m=4$,
  \begin{equation}\label{eq:matrix_4}
       h_{lk} = \left(2-\frac{l(l+1)}{12}\right)\frac{(k+4)!}{(k-4)!} + \frac{18}{5!} \frac{(k+5)!}{(k-5)!}.     \end{equation}
\end{proposition}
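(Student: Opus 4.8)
The plan is to exploit the vanishing boundary data of the associated Legendre polynomials so that the computation reduces to repeated integration by parts. Since $m\geq 4$, Proposition~\ref{prop:boundary_data_legendre} gives $P_k^m(\pm1)=(P_k^m)'(\pm1)=0$, so $P_k^m\in\domain(H_W)$ for \emph{every} $W\in\mathrm{U}(2)$ and $H_W P_k^m=-(P_k^m)''$; this already accounts for the absence of any $W$-dependence. Using that $m$ is even, I would write $P_l^m=(1-x^2)^{m/2}P_l^{(m)}$ (cf.~Eq.~\eqref{eq:def-assoc-legendre}) and set $w(x):=(1-x^2)^{m/2}(P_k^m)''(x)$, so that
\begin{equation}
\Braket{P_l^m , H_W P_k^m} = -\int_{-1}^{1}(1-x^2)^{m/2}P_l^{(m)}(x)\,(P_k^m)''(x)\,\mathrm{d}x = -\int_{-1}^{1}P_l^{(m)}(x)\,w(x)\,\mathrm{d}x .
\end{equation}

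Next I would integrate by parts $m$ times via the standard identity
\begin{equation}
\int_{-1}^{1}P_l^{(m)}w\,\mathrm{d}x = \sum_{j=0}^{m-1}(-1)^j\bigl[P_l^{(m-1-j)}w^{(j)}\bigr]_{-1}^{1} + (-1)^m\int_{-1}^{1}P_l\,w^{(m)}\,\mathrm{d}x .
\end{equation}
The leftover integral vanishes whenever $l\geq k$: indeed $w$ is a polynomial of degree $m+k-2$, so $w^{(m)}$ has degree $k-2<l$, and $P_l$ is orthogonal to every polynomial of lower degree. Two parity remarks then close the general formula. The integrand $P_l^m(P_k^m)''$ has parity $(-1)^{l+k}$, so the whole matrix element vanishes unless $l+k$ is even, producing the factor $\delta_{\mmod(l+k,2),0}$; and when $l+k$ is even each product $P_l^{(m-1-j)}w^{(j)}$ is an \emph{odd} function, whence $[\,\cdot\,]_{-1}^{1}=2\,(\,\cdot\,)|_{x=1}$. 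Collecting signs and relabeling $t=j+1$ gives $\Braket{P_l^m,H_W P_k^m}=2\sum_{t=1}^{m}(-1)^t P_l^{(m-t)}(1)\,w^{(t-1)}(1)$. Finally, since $(1-x^2)^{m/2}$ vanishes to order $m/2$ at $x=1$, we have $w^{(t-1)}(1)=0$ for $t-1<m/2$, so the sum starts at $t=\lfloor m/2\rfloor+1$ and equals $2h_{lk}$. The case $l<k$ follows from $\Braket{P_l^m,H_W P_k^m}=\Braket{P_k^m,H_W P_l^m}$ (self-adjointness of $H_W$ together with the reality of the polynomials), swapping $l$ and $k$.

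For the explicit $m=4$ formula~\eqref{eq:matrix_4} I would specialize the sum to $t\in\{3,4\}$. The $P_l$-factors come straight from Lemma~\ref{lem:legendre_derivative}: $P_l(1)=1$ and $P_l'(1)=\tfrac{l(l+1)}{2}$. For the $w$-factors, writing $w=(1-x^2)^2 v$ with $v=(P_k^4)''$ and applying the Leibniz rule at $x=1$ (where $(1-x^2)^2$ and its first derivative vanish, while its second and third derivatives equal $8$ and $24$) yields $w''(1)=8\,v(1)$ and $w'''(1)=24\,v'(1)+24\,v(1)$. The two quantities $(P_k^4)''(1)$ and $(P_k^4)'''(1)$ I would read off from Lemma~\ref{lem:assoc_legendre_derivative} with $m=4$ and $\alpha=2,3$, obtaining $(P_k^4)''(1)=\tfrac{(k+4)!}{48\,(k-4)!}$ and $(P_k^4)'''(1)=\tfrac{(k+5)!}{160\,(k-5)!}+\tfrac{(k+4)!}{16\,(k-4)!}$. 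Substituting and collecting the coefficients of $\tfrac{(k+4)!}{(k-4)!}$ and $\tfrac{(k+5)!}{(k-5)!}$ produces exactly $2-\tfrac{l(l+1)}{12}$ and $\tfrac{3}{20}=\tfrac{18}{5!}$, as claimed.

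The main obstacle is not conceptual but arithmetic: the Leibniz expansion combined with the factorial identities of Lemma~\ref{lem:assoc_legendre_derivative} must be carried out with care, and the various numerical prefactors tracked precisely so that they collapse to the clean coefficients $2-\tfrac{l(l+1)}{12}$ and $\tfrac{18}{5!}$. The general formula, by contrast, reduces entirely to integration by parts plus the two parity arguments, and presents no real difficulty once the vanishing of the leftover integral by Legendre orthogonality is observed.
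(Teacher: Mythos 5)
Your proof is correct and reaches both the general formula and the $m=4$ specialization by essentially the same mechanism as the paper---repeated integration by parts, identification of the surviving boundary terms via the order of vanishing of $(1-x^2)^{m/2}$ at $\pm1$, and the parity argument yielding the factor $2\delta_{\mmod(l+k,2),0}$---but you dispose of the residual integral in a genuinely different and arguably cleaner way. The paper keeps integrating by parts until $T=k+m-1$, at which point the polynomial $w=(1-x^2)^{m/2}\bigl((1-x^2)^{m/2}P_k^{(m)}\bigr)''$ of degree $k+m-2$ is annihilated; this forces it to interpret $P_l^{(m-t)}$ for $t>m$ as iterated antiderivatives via the Rodrigues formula (Eq.~\eqref{eq:t>m}) and to prove separately that all boundary terms with $t>m$ vanish. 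You instead stop after exactly $m$ integrations by parts and observe that the leftover integral $\int_{-1}^{1}P_l\,w^{(m)}\,\mathrm{d}x$ vanishes because $w^{(m)}$ has degree $k-2<l$ and $P_l$ is orthogonal to all polynomials of lower degree; your boundary sum then terminates at $t=m$ automatically. This shortcut buys a shorter argument and avoids the negative-order-derivative bookkeeping, at the negligible price of invoking the orthogonality property of Legendre polynomials and of needing $l\geq k$ explicitly---which you correctly restore for $l<k$ via self-adjointness. Your $m=4$ computation is organized slightly differently as well (Leibniz on $w=(1-x^2)^2(P_k^4)''$ using the values $(P_k^4)''(1)$ and $(P_k^4)'''(1)$ from Lemma~\ref{lem:assoc_legendre_derivative}, rather than the paper's direct expansion of the nested derivatives), but the intermediate values you quote and the final coefficients $2-\tfrac{l(l+1)}{12}$ and $\tfrac{18}{5!}=\tfrac{3}{20}$ all check out.
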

As anticipated in the main text, the matrix elements do not depend on the particular choice of boundary conditions.
As $H_W$ is self adjoint, we assume $l \geq k$ without loss of generality.
Furthermore, with a slight abuse of notation, in the following we will drop the explicit dependence on the operator on $W$.
In addition, we recall the following property of the associated Legendre polynomials:
\begin{equation}\label{eq:t>m}
    \diff[m-t]{P_l(x)}{x} = \frac{1}{2^l l!} \diff[l+m-t]{(x^2-1)^l}{x},\qquad  m < t \leq l+m.
\end{equation}
We prove an auxiliary lemma before proving the proposition.
\begin{lemma}
    Let $l,k,m$ be integers with $l \geq k \geq m\geq4$ and $m$ being even.
    Then, for every integer $T$ such that $1 \leq T \leq l+m$, the following equality holds:
\begin{multline}
    \label{proofeq:matrixels-recurrence}
    \Braket{P_l^m,H P_k^m} =   \!\sum_{t=1}^T (-1)^t\!  \left.\left( \diff[m-t]{P_l(x)}{x}\right) \diff*[t-1]{(1-x^2)^{m/2} \diff*[2]{ \left( (1-x^2)^{m/2} \diff[m]{P_k(x)}{x} \right)}{x}}{x} \right|_{-1}^1 \\
     + (-1)^{T+1} \int_{-1}^{1} \left( \diff[m-T]{P_l(x)}{x} \right) \diff*[T]{ \left( (1-x^2)^{m/2} \diff*[2]{ \left( (1-x^2)^{m/2} \diff[m]{P_k(x)}{x} \right)}{x}\right)}{x}\dl x
\end{multline}
\end{lemma}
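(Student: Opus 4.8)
The plan is to obtain the identity by starting from the defining integral for $\Braket{P_l^m,H P_k^m}$ and peeling off the derivatives of $P_l$ one at a time through $T$ successive integrations by parts. First I would rewrite the matrix element in a convenient grouped form. Since $H P_k^m = -(P_k^m)''$ and, for even $m$, all the polynomials involved are real, we have $\Braket{P_l^m,H P_k^m} = -\int_{-1}^{1} P_l^m(x)\,(P_k^m)''(x)\dl x$. Substituting $P_j^m(x) = (-1)^m (1-x^2)^{m/2}\diff[m]{P_j(x)}{x}$ for $j\in\{l,k\}$, the two factors $(-1)^m$ cancel; regrouping one factor $(1-x^2)^{m/2}$ onto the $P_k$-side gives
\begin{equation}
\Braket{P_l^m,H P_k^m} = -\int_{-1}^{1}\diff[m]{P_l(x)}{x}\,\Phi(x)\dl x,\qquad \Phi(x) := (1-x^2)^{m/2}\diff[2]{\left((1-x^2)^{m/2}\diff[m]{P_k(x)}{x}\right)}{x}.
\end{equation}
This is exactly the asserted formula in the degenerate case $T=0$ (empty boundary sum), so it serves as the base of an induction on $T$.

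Next I would run the induction. Assume the identity holds at level $T$ for some $0\le T\le l+m-1$; its last term is $(-1)^{T+1}\int_{-1}^{1}\diff[m-T]{P_l(x)}{x}\,\diff[T]{\Phi(x)}{x}\dl x$. Writing $\diff[m-T]{P_l(x)}{x}$ as $\tfrac{\mathrm{d}}{\mathrm{d}x}\diff[m-T-1]{P_l(x)}{x}$ and integrating by parts once produces the boundary term $(-1)^{T+1}\bigl[\diff[m-T-1]{P_l(x)}{x}\,\diff[T]{\Phi(x)}{x}\bigr]_{-1}^{1}$, which is precisely the $t=T+1$ summand, together with the integral $(-1)^{T+2}\int_{-1}^{1}\diff[m-T-1]{P_l(x)}{x}\,\diff[T+1]{\Phi(x)}{x}\dl x$, which is the level-$(T+1)$ integral since $(-1)^{T+2}=(-1)^{(T+1)+1}$. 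Matching the signs $(-1)^t$ term by term advances the formula from $T$ to $T+1$ and, starting from $T=0$, yields it for every $1\le T\le l+m$.

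The main obstacle is giving meaning to the symbol $\diff[m-t]{P_l(x)}{x}$ once $t>m$, where the order $m-t$ is negative. Here I would rely on the extended convention of Eq.~\eqref{eq:t>m}: for $m<t\le l+m$ the symbol denotes $\tfrac{1}{2^l l!}\diff[l+m-t]{(x^2-1)^l}{x}$, a genuine lower-order derivative of $(x^2-1)^l$, hence a polynomial that is continuous up to $x=\pm1$, so all boundary evaluations are well defined. The point that must be checked carefully is that the fundamental-theorem step used at each integration by parts, $\tfrac{\mathrm{d}}{\mathrm{d}x}\diff[m-T-1]{P_l(x)}{x}=\diff[m-T]{P_l(x)}{x}$, remains valid across the transition $T=m$: for $T<m$ it is the ordinary derivative relation, while for $T\ge m$ it follows by differentiating $\tfrac{1}{2^l l!}\diff[l+m-T-1]{(x^2-1)^l}{x}$ once, the two conventions agreeing at $T=m$ through Rodrigues' formula $P_l=\tfrac{1}{2^l l!}\diff[l]{(x^2-1)^l}{x}$. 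Since $l+m-T-1\ge 0$ for all $T\le l+m-1$, every antiderivative occurring in the recursion is a legitimate derivative of $(x^2-1)^l$, so each integration by parts is justified throughout the claimed range of $T$, completing the argument.
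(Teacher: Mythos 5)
Your proof is correct and follows essentially the same route as the paper's: an induction on $T$ in which each step is a single integration by parts that peels one derivative off the $P_l$ factor and puts it onto the $\Phi$ factor, with the symbol $\diff[m-t]{P_l(x)}{x}$ for $t>m$ interpreted via the Rodrigues-formula convention of Eq.~\eqref{eq:t>m}. The only cosmetic difference is that you anchor the induction at the degenerate case $T=0$ while the paper starts at $T=1$, and you are somewhat more explicit about why the antiderivative relation survives the transition past $t=m$.
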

\begin{proof}
To begin with, we note that all expressions $\diff[m-t]{P_l(x)}{x}$ are well-defined by Eq.~\eqref{eq:t>m}, as we assumed $t \leq T \leq l+m$.

The claim will be obtained by performing $T$ integrations by parts. To this extent, we prove the lemma by induction over $T$, starting with $T = 1$:
\begin{equation}
    \begin{split}   
    \braket{P_l^m ,H P_k^m} &= - \int_{-1}^{1} \left( (-1)^m (1-x^2)^{m/2} \diff[m]{P_l(x)}{x}\right) \diff*[2]{ \left( (-1)^m (1-x^2)^{m/2} \diff[m]{P_k(x)}{x} \right)}{x} \dl x \\
    & \begin{multlined}[t]
        = - \left. \left( \diff[m-1]{P_l(x)}{x}\right) (1-x^2)^{m/2} \diff*[2]{\left( (1-x^2)^{m/2} \diff[m]{P_k(x)}{x}\right) }{x} \right|_{-1}^1 \\
         + \int_{-1}^{1} \left( \diff[m-1]{P_l(x)}{x} \right) \diff*[1]{\left( (1-x^2)^{m/2} \diff*[2]{  (1-x^2)^{m/2} \diff[m]{P_k(x)}{x} }{x}\right) }{x} \dl x \, ,
    \end{multlined}
\end{split}
\end{equation}
where we performed a single integration by parts. 

Now assume Eq.~\eqref{proofeq:matrixels-recurrence} to hold for some integer $T$ with $1\leq T \leq l+m-1$. Then it also holds for $T+1$:
\begin{multline}
        \Braket{P_l^m , H P_k^m} =\! \sum_{t=1}^T (-1)^t\!\left. \left( \diff[m-t]{P_l(x)}{x} \right) \diff*[t-1]{\left( (1-x^2)^{m/2} \diff*[2]{ (1-x^2)^{m/2} \diff[m]{P_k(x)}{x} }{x} \right)}{x} \right|_{-1}^1 \\
        + (-1)^{T+1} \int_{-1}^{1} \left( \diff[m-T]{P_l(x)}{x} \right) \diff*[T]{\left( (1-x^2)^{m/2} \diff*[2]{ \left((1-x^2)^{m/2} \diff[m]{P_k(x)}{x} \right)}{x}\right) }{x} \dl x\\
        = \sum_{t=1}^T (-1)^t \left. \left( \diff[m-t]{P_l(x)}{x} \right) \diff*[t-1]{\left( (1-x^2)^{m/2} \diff*[2]{\left( (1-x^2)^{m/2} \diff[m]{P_k(x)}{x} \right)}{x}\right) }{x} \right|_{-1}^1 \\
         + (-1)^{T+1} \left. \left( \diff[m-T-1]{P_l(x)}{x} \right) \diff*[T]{ \left((1-x^2)^{m/2} \diff*[2]{\left( (1-x^2)^{m/2} \diff[m]{P_k(x)}{x} \right)}{x} \right)}{x} \right|_{-1}^1 \\
         - (-1)^{T+1} \int_{-1}^{1}  \left( \diff[m-T-1]{P_l(x)}{x}\right) \diff*[T+1]{\left(  (1-x^2)^{m/2} \diff*[2]{\left( (1-x^2)^{m/2} \diff[m]{P_k(x)}{x}\right) }{x}\right) }{x} .
\end{multline}
Rearranging the terms, the claim is verified, closing the induction.
\end{proof}

\begin{proof}[Proof of Proposition~\ref{prop:matrix-elements}]
After $T \leq l+m$ integrations by parts, the matrix element $\braket{P_l^m,HP_k^m}$ is given by Eq.~\eqref{proofeq:matrixels-recurrence}.
Setting $T = k+m-1$, the remaining integral vanishes and only the sum remains as we will show in the following.
Recalling the definition of the Legendre polynomials (see Definition~\ref{def:legendre}), $P_k$ is a polynomial of degree $k$, and assuming $m$ to be even,
\begin{equation}
 (1-x^2)^{m/2} \diff*[2]{  (1-x^2)^{m/2} \diff*[m]{P_k(x)}{x} }{x}
\end{equation}
is a polynomial of degree $k-2+m$.
Thus, setting $T = k+m-1$, the integrand in Eq.~\eqref{proofeq:matrixels-recurrence} evaluates to 0:
\begin{equation}
    \left( \diff[m-T]{P_l(x)}{x} \right) \diff*[T]{ \left( (1-x^2)^{m/2} \diff*[2]{ \left( (1-x^2)^{m/2} \diff[m]{P_k(x)}{x} \right)}{x}\right)}{x} = 0\, ,
\end{equation}
as the second part of the product evaluates to 0.
The now simplified matrix elements read
\begin{equation}
\label{eq:matrix-els-simp1}
    \Braket{P_l^m,H P_k^m} =\!\!\sum_{t=1}^{k+m-1} (-1)^t\! \left.\left( \diff[m-t]{P_l(x)}{x}\right)\!\diff*[t-1]{(1-x^2)^{m/2} \diff*[2]{ \!\left( (1-x^2)^{m/2} \diff[m]{P_k(x)}{x} \right)}{x}}{x} \right|_{-1}^1 \\
\end{equation}
We now show that we can further restrict the sum to $t \in \left[\left\lfloor \frac{m}{2}\right\rfloor+1,m\right]$.
The following term (cf.~Eq.~\eqref{eq:t>m})
\begin{equation}
    \diff*[m-t]{P_l(x)}{x}\bigg|_{x = \pm 1} = \frac{1}{2^l l!} \diff*[l+m-t]{(x^2-1)^l}{x}\bigg|_{x=\pm 1}
\end{equation}
evaluates to 0 at the boundaries for $m < t \leq l+m$.
Therefore, the sum can be cut off at $m$.
Furthermore, for $t-1 < \frac{m}{2}$, the term $\diff*[t-1]{((1-x^2)^{m/2} g(x))}{x}$ (for an arbitrary polynomial $g$) evaluates to zero at the boundaries.
In particular, 
\begin{equation}
   \left. \diff*[t-1]{\left((1-x^2)^{m/2} \diff*[2]{ \left( (1-x^2)^{m/2} \diff[m]{P_k(x)}{x} \right)}{x}\right)}{x} \right|_{\pm 1} = 0
\end{equation}
for $t-1 < \frac{m}{2}$, and we can simplify Eq.~\eqref{eq:matrix-els-simp1} further to
\begin{equation}
\label{eq:matrix-els-simp2}
\begin{split}
        \Braket{P_l^m,H P_k^m} & = \sum_{t=\left\lfloor \frac{m}{2}\right\rfloor+1}^{m} (-1)^t  \left.F_{m,l,k,t}(x)\right|_{-1}^1 \\
        F_{m,l,k,t}(x) & = \left( \diff[m-t]{P_l(x)}{x}\right) \diff*[t-1]{(1-x^2)^{m/2} \diff*[2]{ \left( (1-x^2)^{m/2} \diff[m]{P_k(x)}{x} \right)}{x}}{x} \, ,
\end{split}
\end{equation}
where we newly introduced the polynomial $F_{m,l,k,t}(x)$ for $\left\lfloor \frac{m}{2}\right\rfloor+1 \leq t \leq m$.
We now show that (for even $m$)
\begin{equation}
    F_{m,l,k,t}(-x) = (-1)^{k+l+1}F_{m,l,k,t}(x)\,. 
\end{equation}
Indeed, for the Legendre polynomials the following parity relation holds: $P_l(-x) = (-1)^l P_l(x)$.
Furthermore, $(1-x^2)^{m/2}$ is a symmetric function, and the derivative of a symmetric function is antisymmetric and vice versa.
Thus, 
\begin{equation}
    F_{m,l,k,t}(-x) = (-1)^{m-t+l+t-1+2+m+k}F_{m,l,k,t}(x) = (-1)^{l+k+1}F_{m,l,k,t}(x)\,
\end{equation}
proving the claim.
Therefore, the evaluation at the boundaries reads 
\begin{multline}
    \left.F_{m,l,k,t}(x)\right|_{-1}^1 = F_{m,l,k,t}(1)- F_{m,l,k,t}(-1) \\ = (1-(-1)^{l+k+1})F_{m,l,k,t}(1) = 2\delta_{\mmod(k+l,2),0} F_{m,l,k,t}(1)\, .
\end{multline}
Inserting this into Eq.~\eqref{eq:matrix-els-simp2} proves Eq.~\eqref{eq:matrix_general}.

To conclude, in the case $m=4$ ($m/2=2$), we obtain a sum with two summands from $t=3$ to $t=4$.
We use Eq.~\eqref{eq:def-legendre} and Lemma~\ref{lem:legendre_derivative} to calculate the terms explicitly.
For $t=3$, the second derivative needs to be fully applied to $\diff*[t-1]{(1-x^2)^{2}}{x}[1]$ to not evaluate to 0, yielding
\begin{multline}
    \diff*{P_l(x)}{x}[x =1]  = \diff*[2]{\left((1-x^2)^2  \diff*[2]{\left((1-x^2)^2 \diff*[4]{P_k(x)}{x}\right)}{x}\right)}{x}[x = 1] \\
    = \frac{1}{2} \frac{(l+1)!}{(l-1)!}  \diff*[2]{(1-x^2)^2}{x}[1]\diff*[2]{(1-x^2)^2}{x}[x = 1] \diff*[4]{P_k(x)}{x}[x = 1] = \frac{l(l+1)}{12} \frac{(k+4)!}{(k-4)!} \, .
\end{multline}
For $t=4$, the product rule is used, yielding
\begin{align}
    P_l(1) & \diff*[3]{\left((1-x^2)^2  \diff*[2]{\left((1-x^2)^2 \diff*[4]{P_k(x)}{x}\right)}{x}\right)}{x}[x = 1] \\
    & \begin{multlined}
        = \binom{3}{3} \diff*[3]{(1-x^2)^2}{x}[x = 1] \diff*[2]{(1-x^2)^2}{x}[x = 1] \diff*[4]{P_k(x)}{x}[x = 1]\\
         +  \binom{3}{2} \diff*[2]{(1-x^2)^2}{x}[x = 1] \binom{3}{3}\diff*[3]{(1-x^2)^2}{x}[x = 1] \diff*[4]{P_k(x)}{x}[x = 1] \\
         + \binom{3}{2} \diff*[2]{(1-x^2)^2}{x}[x = 1] \binom{3}{2}\diff*[2]{(1-x^2)^2}{x}[x = 1] \diff*[4+1]{P_k(x)}{x}[x = 1] 
    \end{multlined} \\
    & = 2\frac{(k+4)!}{(k-4)!} + \frac{18}{5!} \frac{(k+5)!}{(k-5)!} \, ,
\end{align}
which proves Eq.~\eqref{eq:matrix_4}, thus completing the proof.\end{proof}

\subsection{Expansion coefficients of the Dirichlet eigenfunctions}
\label{sec:app-coeff-dir}
In this section, we calculate the expansion coefficients of the Dirichlet eigenfunctions $\psi_j, j \in \nnum$, $j \geq 1$.
Recall that they are given by (see Example~\ref{ex:bc-eigenfunctions})
\begin{equation}
\label{eq:app-dirichlet-eigenfunction}
    \psi_j(x) = \sin\left(\frac{j \pi}{2}(x+1)\right)\, , \quad j \geq 1 \, , \\
\end{equation}
Here, we state the non-normalized expansion coefficients.
\begin{proposition}
    \label{prop:coeffs-sin-exp}
    Let $j,m,l$ be integers such that $j\geq1$, $l\geq m\geq4$, and $m$ is even, and $\psi_j$ as defined in Eq.~\eqref{eq:app-dir-eigenfunctions}.
    Then
    \begin{equation}
        \Braket{P_l^m,\psi_j} = \frac{-2}{j \pi} \left( (-1)^j - (-1)^l \right) \sum_{t=t_{\text{min}}}^{t_{\text{max}}} (-1)^t \left( \frac{2}{j \pi} \right)^{2 t} \left. \diff*[2t]{P_l^m(x)}{x} \right|_{x = 1}\, ,
    \end{equation}
    where $t_{\text{min}} = \left\lfloor \frac{m}{4} - \frac{1}{2} \right\rfloor + 1$ and $t_{\text{max}} = \left\lfloor \frac{l}{2} \right\rfloor$.
\end{proposition}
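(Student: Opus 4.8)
The plan is to compute the coefficient directly as the real integral
\[
    \braket{P_l^m,\psi_j} = \int_{-1}^{1} P_l^m(x)\,\sin\!\left(\tfrac{j\pi}{2}(x+1)\right)\mathrm{d}x ,
\]
and to evaluate it by repeated integration by parts, transferring derivatives from the sine onto the polynomial $P_l^m$ two at a time. Writing $k=\tfrac{j\pi}{2}$, the antiderivatives of $\sin(k(x+1))$ cycle through $-\tfrac1k\cos$ and $-\tfrac1{k^2}\sin$, so each pair of integrations reproduces $\int (P_l^m)^{(2t)}\sin$ up to a factor $-k^{-2}$, plus boundary terms. The two facts that make everything collapse are the boundary values $\cos(k(x+1))|_{x=1}=\cos(j\pi)=(-1)^j$, $\cos(k(x+1))|_{x=-1}=1$, and $\sin(k(x+1))|_{x=\pm1}=0$; the vanishing of the sine at both endpoints kills every boundary term produced at an even step.

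Concretely, I would set $I_t:=\int_{-1}^1 (P_l^m)^{(2t)}(x)\sin(k(x+1))\,\mathrm{d}x$, so that $I_0=\braket{P_l^m,\psi_j}$, and establish the recurrence
\[
    I_t = -\frac1k\left[(-1)^j-(-1)^l\right](P_l^m)^{(2t)}(1) - \frac{1}{k^2}\,I_{t+1}
\]
by performing exactly two integrations by parts. The first produces the boundary contribution $-\tfrac1k\big[(P_l^m)^{(2t)}(1)(-1)^j-(P_l^m)^{(2t)}(-1)\big]$, and here the parity of the associated Legendre polynomials enters: for even $m$ one has $P_l^m(-x)=(-1)^l P_l^m(x)$ (since $(1-x^2)^{m/2}$ is even and $P_l$ has parity $(-1)^l$), hence $(P_l^m)^{(2t)}(-1)=(-1)^l(P_l^m)^{(2t)}(1)$, which is exactly what yields the factor $(-1)^j-(-1)^l$. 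The second integration by parts contributes only a sine boundary term, which vanishes, and leaves $-k^{-2}I_{t+1}$.

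Iterating the recurrence from $t=0$ gives
\[
    I_0 = -\frac1k\left[(-1)^j-(-1)^l\right]\sum_{t=0}^{T-1}(-1)^t k^{-2t}\,(P_l^m)^{(2t)}(1) + (-1)^T k^{-2T} I_T ,
\]
and it remains to pin down where the sum truncates. Since $P_l^m$ is a polynomial of degree $l$ for even $m$, choosing $T=\lfloor l/2\rfloor+1$ makes $(P_l^m)^{(2T)}\equiv 0$, so $I_T=0$ and the upper limit becomes $t_{\text{max}}=\lfloor l/2\rfloor$. For the lower limit, the factor $(1-x^2)^{m/2}$ forces $(P_l^m)^{(\alpha)}(1)=0$ for all $\alpha<m/2$ (equivalently, the sum in Lemma~\ref{lem:assoc_legendre_derivative} is empty unless $\alpha\geq m/2$), so every term with $2t<m/2$ drops out; the smallest surviving index is $t_{\text{min}}=\lceil m/4\rceil$, which coincides with $\lfloor\tfrac m4-\tfrac12\rfloor+1$. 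Substituting $k^{-1}=\tfrac{2}{j\pi}$ then reproduces the claimed formula. I expect the only genuine work to be bookkeeping: tracking the alternating signs and powers of $k$ through the recurrence, and checking that the floor/ceiling expressions for $t_{\text{min}}$ and $t_{\text{max}}$ are precisely the thresholds coming from the $(1-x^2)^{m/2}$ factor and from the degree of $P_l^m$, respectively.
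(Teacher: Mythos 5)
Your proof is correct and follows essentially the same route as the paper: repeated integration by parts two at a time, with the sine boundary terms vanishing, the parity relation $(P_l^m)^{(2t)}(-1)=(-1)^l(P_l^m)^{(2t)}(1)$ producing the factor $(-1)^j-(-1)^l$, and the sum truncated above by the degree of $P_l^m$ and below by the order of vanishing of $(1-x^2)^{m/2}$ at $x=\pm1$. Packaging the computation as the recurrence $I_t=-\tfrac1k[(-1)^j-(-1)^l](P_l^m)^{(2t)}(1)-k^{-2}I_{t+1}$ is only a cosmetic reorganization of the paper's induction (its Lemma~\ref{lem:app-sin-exp-part}), and your identification $t_{\min}=\lceil m/4\rceil=\lfloor m/4-1/2\rfloor+1$ for even $m$ is accurate.
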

We prove the claim using integration by parts.
\begin{lemma}
    \label{lem:app-sin-exp-part}
   Let $j,m,l$ be integers such that $j\geq1$, $l\geq m\geq4$, and $m$ is even, and $\psi_j$ as defined in Eq.~\eqref{eq:app-dir-eigenfunctions}.
    Then, for every integer $0 \leq T \leq \frac{l}{2}$, the following equality holds:
    \begin{multline}
    \label{proofeq:coeff-sin-1}
    \braket{ P_l^m, \psi_j} = \frac{-2}{j \pi} \sum_{t=0}^T (-1)^t \left( \frac{2}{j \pi} \right)^{2t} \cos\left( \frac{j}{2} \pi (x+1) \right) \diff*[2t]{P_l^m(x)}{x} \bigg|_{-1}^{1} \\
    -  (-1)^T \left( \frac{2}{j \pi} \right)^{2T+2}  \int_{-1}^1 \sin\left( \frac{j}{2} \pi (x+1) \right) \diff*[2T+2]{P_l^m(x)}{x} \dl x \,  .
\end{multline}
\end{lemma}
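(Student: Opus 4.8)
The plan is to prove Eq.~\eqref{proofeq:coeff-sin-1} by induction on $T$, the engine of the argument being a single \emph{double} integration by parts. Since $\psi_j(x)=\sin\bigl(\tfrac{j\pi}{2}(x+1)\bigr)$ is real, the scalar product is just $\braket{P_l^m,\psi_j}=\int_{-1}^1 P_l^m(x)\sin\bigl(\tfrac{j\pi}{2}(x+1)\bigr)\,\mathrm{d}x$. Writing $s(x)=\sin\bigl(\tfrac{j\pi}{2}(x+1)\bigr)$ and $c(x)=\cos\bigl(\tfrac{j\pi}{2}(x+1)\bigr)$, so that $\int s=-\tfrac{2}{j\pi}c$ and $\int c=\tfrac{2}{j\pi}s$, the one elementary fact I will exploit at every stage is that $s$ \emph{vanishes at both endpoints}: at $x=-1$ its argument is $0$, and at $x=1$ it is $j\pi$. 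This is precisely what kills the ``sine'' boundary terms and leaves only the cosine ones, producing the clean single sum in the statement.

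For the base case $T=0$, I integrate by parts twice. The first step gives the boundary term $-\tfrac{2}{j\pi}\bigl[c\,P_l^m\bigr]_{-1}^1$ plus the leftover $\tfrac{2}{j\pi}\int_{-1}^1 c\,(P_l^m)'\,\mathrm{d}x$; the second step applied to this leftover produces a boundary term proportional to $\bigl[s\,(P_l^m)'\bigr]_{-1}^1$, which vanishes because $s(\pm1)=0$, and leaves $-\bigl(\tfrac{2}{j\pi}\bigr)^2\int_{-1}^1 s\,(P_l^m)''\,\mathrm{d}x$. Collecting these yields exactly the $T=0$ instance
\[
\int_{-1}^1 P_l^m(x)\,s(x)\,\mathrm{d}x = -\frac{2}{j\pi}\bigl[c(x)\,P_l^m(x)\bigr]_{-1}^1 - \left(\frac{2}{j\pi}\right)^{\!2}\int_{-1}^1 s(x)\,\frac{\mathrm{d}^2}{\mathrm{d}x^2}P_l^m(x)\,\mathrm{d}x.
\]

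For the inductive step, I assume Eq.~\eqref{proofeq:coeff-sin-1} for some $T$ and apply exactly the same double integration by parts to the surviving integral $\int_{-1}^1 s\,\tfrac{\mathrm{d}^{2T+2}}{\mathrm{d}x^{2T+2}}P_l^m\,\mathrm{d}x$. This generates one new cosine boundary term carrying the derivative of order $2(T+1)$, together with a new leftover integral carrying the derivative of order $2(T+1)+2$ and an extra factor $-\bigl(\tfrac{2}{j\pi}\bigr)^2$; the intermediate sine boundary term again vanishes by $s(\pm1)=0$. The remaining work is bookkeeping on the prefactor $-(-1)^T\bigl(\tfrac{2}{j\pi}\bigr)^{2T+2}$ multiplying the old leftover: the sign flip $-(-1)^{T}\cdot(-\tfrac{2}{j\pi})$ must reproduce $\tfrac{-2}{j\pi}(-1)^{T+1}$ so that the new boundary term slots in precisely as the $t=T+1$ summand, while the new leftover must acquire the prefactor $-(-1)^{T+1}\bigl(\tfrac{2}{j\pi}\bigr)^{2(T+1)+2}$, which matches after using $-(-1)^{T+1}=(-1)^T$ and $2(T+1)+2=2T+4$. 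Since for even $m$ each $P_l^m$ is a polynomial of degree $l$, all derivatives involved are well-defined throughout $0\le T\le l/2$, so the recursion closes without obstruction.

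The main difficulty is therefore not conceptual but clerical: coordinating the alternating signs $(-1)^t$, the geometric factors $\bigl(\tfrac{2}{j\pi}\bigr)^{2t}$, and the two-step increase of the derivative order so that the boundary term born at stage $T$ aligns exactly with the $t=T+1$ term of the target sum. Isolating the endpoint vanishing $s(\pm1)=0$ once at the outset, and then invoking it at every second integration by parts, is the structural observation that reduces the whole computation to this bookkeeping and guarantees that only the cosine boundary terms accumulate.
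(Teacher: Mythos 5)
Your proposal is correct and follows essentially the same route as the paper: induction on $T$ with a double integration by parts at each step, using the vanishing of $\sin\bigl(\tfrac{j}{2}\pi(x+1)\bigr)$ at $x=\pm1$ to eliminate the intermediate sine boundary terms so that only the cosine boundary terms accumulate into the sum. The base case and the sign/prefactor bookkeeping in your inductive step match the paper's computation.
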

\begin{proof}
This will be proven by performing $2T+2$ integrations by parts. After two integrations by parts, the expansion coefficients read
    \begin{equation}
    \begin{split}
        & \!\! \braket{ P_l^m, \psi_j} = \int_{-1}^1 \sin\left( \frac{j}{2} \pi (x+1) \right) P_l^m(x) \dl x \\
        &= \frac{-2}{j \pi} \cos\left( \frac{j}{2} \pi (x+1) \right) P_l^m(x) \bigg|_{-1}^{1} + \frac{2}{j \pi} \int_{-1}^1 \cos\left( \frac{j}{2} \pi (x+1) \right) \diff{P_l^m(x)}{x} \dl x \\
        &= \frac{-2}{j \pi} \cos\left( \frac{j}{2} \pi (x+1) \right) (1-x^2)^{m/2} P_l^m(x) \bigg|_{-1}^{1} + \left( \frac{2}{j \pi} \right)^2 \sin\left( \frac{j}{2} \pi (x+1) \right) \diff*{P_l^m(x)}{x} \bigg|_{-1}^{1} \\
        & \quad - \left( \frac{2}{j \pi} \right)^2 \int_{-1}^1 \sin\left( \frac{j}{2} \pi (x+1) \right) \diff*[2]{P_l^m(x)}{x} \dl x \, .
    \end{split}
\end{equation}
As $\sin\left( \frac{j}{2} \pi (\pm 1 + 1) \right) = 0$, the second boundary term vanishes, proving the claim for $T=0$.
Assuming that Eq.~\eqref{proofeq:coeff-sin-1} holds for $0 \leq T \leq l/2-1$, after two further integrations by parts we get
\begin{align}
    \braket{ P_l^m, \psi_j} &= \frac{-2}{j \pi} \sum_{t=0}^T (-1)^t \left( \frac{2}{j \pi} \right)^{2t} \cos\left( \frac{j}{2} \pi (x+1) \right) \diff*[2t]{P_l^m(x)}{x}  \bigg|_{-1}^{1} \\
    &\quad -(-1)^T \left( \frac{2}{j \pi} \right)^{2T+2}  \int_{-1}^1 \sin\left( \frac{j}{2} \pi (x+1) \right) \diff*[2T+2]{P_l^m(x)}{x} \dl x \\
    & = \label{proofeq:coeff-sin-3} \frac{-2}{j \pi} \sum_{t=0}^T (-1)^t \left( \frac{2}{j \pi} \right)^{2t} \cos\left( \frac{j}{2} \pi (x+1) \right) \diff*[2t]{P_l^m(x)}{x} \bigg|_{-1}^{1} \\
    & \label{proofeq:coeff-sin-4} \quad - (-1)^T \left( \frac{2}{j \pi} \right)^{2T+2}  \frac{-2}{j \pi} \cos\left( \frac{j}{2} \pi (x+1) \right) \diff*[2T+2]{P_l^m(x)}{x} \bigg|_{-1}^{1} \\
    &\quad + (-1)^T \left( \frac{2}{j \pi} \right)^{2T+2}  \frac{-2}{j \pi} \int_{-1}^1 \cos\left( \frac{j}{2} \pi (x+1) \right) \diff*[2T+3]{P_l^m(x)}{x} \dl x \\
    &=  \frac{-2}{j \pi} \sum_{t=0}^{T+1} (-1)^t \left( \frac{2}{j \pi} \right)^{2t} \cos\left( \frac{j}{2} \pi (x+1) \right) \diff*[2t]{P_l^m(x)}{x} \bigg|_{-1}^{1} \\
    & \label{proofeq:coeff-sin-5}\quad -(-1)^T \left( \frac{2}{j \pi} \right)^{2T+2}  \left(\frac{2}{j \pi}\right)^2 \sin\left( \frac{j}{2} \pi (x+1) \right) \diff*[2T+3]{P_l^m(x)}{x} \bigg|_{-1}^{1} \\
    &\quad + (-1)^T \left( \frac{2}{j \pi} \right)^{2T+2}   \left(\frac{2}{j \pi}\right)^2 \int_{-1}^1 \sin\left( \frac{j}{2} \pi (x+1) \right) \diff*[2T+4]{P_l^m(x)}{x} \dl x \, .
\end{align}
Plugging Eq.~\eqref{proofeq:coeff-sin-4} into the sum~\eqref{proofeq:coeff-sin-3}, and using $\sin\left( \frac{j}{2} \pi (\pm 1 + 1) \right) = 0$ in Eq.~\eqref{proofeq:coeff-sin-5}, we prove the assumption for the case $T+1$.
\end{proof}

\begin{proof}[Proof of Proposition~\ref{prop:coeffs-sin-exp}]
We show that the integral in Eq.~\eqref{proofeq:coeff-sin-1} vanishes for $T = t_{\mathrm{max}}= \left\lfloor \frac{l}{2} \right\rfloor$.
Then, 
\begin{equation}
    2T+2 = 2\left\lfloor \frac{l}{2} \right\rfloor+2 \geq l+1 > l \, ,
\end{equation}
and as the associated Legendre polynomials are polynomials of degree $l$, the derivative $\diff*[2T+2]{P_l^m(x)}{x} = 0$.
Hence, after $T = t_{\mathrm{max}}$ integrations by part, we get (cf.~Lemma~\ref{lem:app-sin-exp-part})
\begin{equation}
        \label{proofeq:coeff-sin-new}
        \braket{ P_l^m, \psi_j} = \frac{-2}{j \pi} \times \sum_{t=0}^{t_{\mathrm{max}}} (-1)^t \left( \frac{2}{j \pi} \right)^{2t} \cos\left( \frac{j}{2} \pi (x+1) \right) \diff*[2t]{P_l^m(x)}{x} \bigg|_{-1}^{1} \\
\end{equation}
On the other hand, for $2t \leq \frac{m}{2}-1$, $\diff*[2t]{(1-x^2)^{m/2}}{x}$ evaluates to 0 at the boundaries, such that the summands vanish for
\begin{equation}
    \label{eq:coeff-min}
    t < t_{\text{min}} = \left\lfloor \frac{m}{4} - \frac{1}{2} \right\rfloor + 1.
\end{equation}
Plugging the boundaries for $t$ into Eq.~\eqref{proofeq:coeff-sin-new}, we get
\begin{equation}
    \Braket{P_l^m, \psi_j} = \frac{-2}{j \pi} \times \sum_{t=t_{\text{min}}}^{t_{\text{max}}} (-1)^t \left( \frac{2}{j \pi} \right)^{2 t} \left( \cos\left( \frac{j \pi}{2} (x+1) \right) \diff*[2t]{P_l^m(x)}{x} \right) \bigg|_{-1}^{1}\, .
\end{equation}
Furthermore, by using the following parity properties:
\begin{align}
    \cos\left( \frac{j \pi}{2} (1+1) \right) = (-1)^j\, , \quad \cos\left( \frac{j \pi}{2} (-1+1) \right) = +1\, , \\ 
    \diff*[2\alpha]{P_l^m(x)}{x}\big|_{x = -1} = (-1)^l \diff*[2\alpha]{P_l^m(x)}{x}\big|_{x = 1} \quad \forall \alpha \in \nnum \, .
\end{align}
we finally get
\begin{equation}
    \Braket{P_l^m, \psi_j} = \frac{-2}{j \pi} \left( (-1)^j - (-1)^l \right) \times \sum_{t=t_{\text{min}}}^{t_{\text{max}}} (-1)^t \left( \frac{2}{j \pi} \right)^{2 t} \diff*[2t]{P_l^m(x)}{x}[x = 1] \, .
\end{equation}
\end{proof}
\subsection{Expansion coefficients of the periodic eigenfunctions}
\label{sec:app-coeff-per}
To conclude, we calculate the expansion coefficients of the (non-normalized) periodic eigenfunctions $\psi_j, j \in \nnum$, $j \in \znum$, which are given by (cf.~Example~\ref{ex:bc-eigenfunctions})
\begin{equation}
\label{eq:app-per-eigenfunction}
    \psi_j(x)  = \e^{-\iu j \pi (x+1)}, \quad j \in \znum\, . \\
\end{equation}
\begin{proposition}
    \label{prop:coeffs-per-exp}
    Let $j \in \znum$, and $m,l$ be integers such that $l\geq m \geq 4$ and $m$ is even.
    Furthermore, let $t_{\mathrm{min}} = \left\lfloor \frac{m}{4} - \frac{1}{2} \right\rfloor$ and $t_{\mathrm{max}} = \left\lfloor \frac{l}{2}\right\rfloor$.
    Then
    \begin{equation}
        \braket{P_l^m,\psi_j} = \begin{cases}
            2 \sum_{t=t_{\text{min}}-1}^{t_{\text{max}}} \frac{(-1)^t}{(j \pi)^{2t+2}}\diff[2t+1]{P_l^m(x)}{x}[x = 1] & l \text{ even} \\[10pt] 
            -2 \iu  \sum_{t=t_{\text{min}}-1}^{t_{\text{max}}}\frac{(-1)^t}{(j \pi)^{2t+1}} \diff[2t]{P_l^m(x)}{x}[x = 1] & l \text{ odd} 
        \end{cases}
    \end{equation}
    For $j =0$, the expansion coefficients of the constant function for even $m$ and even $l$ are given by
    \begin{equation}
        \label{eq:coeff-legendre-1}
        \Braket{P_l^m, 1} = \int_{-1}^1 P_l^m(x) \dl x = \frac{2m}{l} \frac{((l/2)!)^2 (l+m)! }{((l-m)/2)! ((l+m)/2)! (l+1)!} \, .
    \end{equation}
    For even $m$ and odd $l$, they are 0. 
\end{proposition}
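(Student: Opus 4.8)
The plan is to treat the two regimes $j\in\znum\setminus\{0\}$ and $j=0$ separately, since the first formula (with $j\pi$ in the denominator) degenerates at $j=0$. For $j\neq0$ I would mimic the Dirichlet computation (Lemma~\ref{lem:app-sin-exp-part} and Proposition~\ref{prop:coeffs-sin-exp}) verbatim, replacing the sine by $\psi_j(x)=\e^{-\iu j\pi(x+1)}$. Since a single antiderivative of $\e^{-\iu j\pi(x+1)}$ is $\tfrac{\iu}{j\pi}\e^{-\iu j\pi(x+1)}$, each integration by parts trades one factor $\tfrac{\iu}{j\pi}$ for one extra derivative of $P_l^m$, and by induction on the number of steps I would establish a reduction identity of the same shape as Eq.~\eqref{proofeq:coeff-sin-1}: a boundary sum with $k$th coefficient $(-1)^k(\iu/j\pi)^{k+1}$ multiplying $[(P_l^m)^{(k)}(x)\,\e^{-\iu j\pi(x+1)}]_{-1}^1$, plus a remainder integral carrying $(P_l^m)^{(k)}$. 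Because $P_l^m$ is, for even $m$, a polynomial of degree $l$ (cf.~Definition~\ref{def:legendre}), iterating until the derivative order exceeds $l$ annihilates the remainder, leaving only the boundary sum with surviving index running up to $t_{\max}=\lfloor l/2\rfloor$.

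The second step is to evaluate the boundary terms. The key simplification is that for integer $j$ the exponential equals $1$ at both endpoints, $\e^{-\iu j\pi(\pm1+1)}=1$; combined with the parity $P_l^m(-x)=(-1)^l P_l^m(x)$ (which follows from $P_l(-x)=(-1)^l P_l(x)$ and the evenness of $(1-x^2)^{m/2}$), one gets $(P_l^m)^{(k)}(-1)=(-1)^{l+k}(P_l^m)^{(k)}(1)$, so each bracket reduces to $\bigl(1-(-1)^{l+k}\bigr)(P_l^m)^{(k)}(1)$. This vanishes unless $l+k$ is odd, which selects even-order derivatives when $l$ is odd and odd-order derivatives when $l$ is even; reindexing the surviving $k$ as $2t$ or $2t+1$ and collecting the accumulated powers $(\iu/j\pi)^{k+1}$ yields the two advertised cases, with a real prefactor $2$ for even $l$ and a purely imaginary prefactor ($-2\iu$ in the stated normalisation) for odd $l$. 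The lower limit $t_{\min}-1$ comes from $(P_l^m)^{(k)}(1)=0$ for $k<m/2$, since $(1-x^2)^{m/2}$ has a zero of order $m/2$ at $x=\pm1$ (the same mechanism as in Proposition~\ref{prop:boundary_data_legendre}).

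For $j=0$ the integrand is the constant $1$ and the above breaks down, so I would compute $\braket{P_l^m,1}=\int_{-1}^1 P_l^m(x)\,\dl x$ directly. The parity relation immediately gives $0$ for odd $l$. For even $l$ I would insert the Rodrigues representation $P_l^m=\tfrac{(-1)^m}{2^l l!}(1-x^2)^{m/2}\diff*[l+m]{(x^2-1)^l}{x}$ (from Eqs.~\eqref{eq:def-legendre}--\eqref{eq:def-assoc-legendre}) and integrate by parts; since $(1-x^2)^{m/2}$ vanishes to order $m/2$ at the endpoints while the high derivatives of $(x^2-1)^l$ do not, one is left with a finite sum of boundary products whose values are supplied by Lemmata~\ref{lem:legendre_derivative}--\ref{lem:assoc_legendre_derivative}. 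Equivalently and more transparently, I would expand $P_l^m$ as $(1-x^2)^{m/2}$ times an explicit polynomial and evaluate the resulting moments $\int_{-1}^1(1-x^2)^{m/2}x^{2p}\,\dl x$ as Beta functions of the kind already appearing in Eq.~\eqref{eq:cauchyschwarz}.

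The main obstacle I anticipate is precisely this last step: assembling the Gamma-function (Beta-integral) contributions and collapsing the resulting finite sum to the single clean product $\tfrac{2m}{l}\tfrac{((l/2)!)^2(l+m)!}{((l-m)/2)!\,((l+m)/2)!\,(l+1)!}$ of Eq.~\eqref{eq:coeff-legendre-1}. This is pure but delicate factorial bookkeeping, and I would look for a hypergeometric summation identity (or a tabulated value) to perform the collapse. A secondary, milder hazard is keeping the powers of $\iu$ and the alternating signs consistent throughout the induction in the $j\neq0$ case, as this is exactly what distinguishes the real prefactor $2$ (even $l$) from the imaginary prefactor $-2\iu$ (odd $l$).
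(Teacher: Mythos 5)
Your argument for $j\neq 0$ is essentially the paper's proof: the same integration-by-parts induction (Lemma~\ref{lem:app-coeff-per}), termination of the remainder once the derivative order exceeds $l$, vanishing of the boundary terms of order below $m/2$, and the parity selection rule via $P_l^m(-x)=(-1)^l P_l^m(x)$ together with $\e^{\iu j\pi(\pm1+1)}=1$, which singles out odd-order derivatives for even $l$ and even-order ones for odd $l$. The only place you diverge is the $j=0$ integral, Eq.~\eqref{eq:coeff-legendre-1}: the paper does not derive it but quotes it from the literature~\cite{jepsen-integralassociatedlegendre-1955}, whereas you sketch a Rodrigues/Beta-function computation --- your own caveat that the final factorial collapse needs a hypergeometric summation identity or a tabulated value is exactly the crux, and resorting to the tabulated value is precisely what the paper does.
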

Again, we use integration by parts:
\begin{lemma}
    \label{lem:app-coeff-per}
    Let $j \in \znum$, and $m,l$ be integers such that $l\geq m \geq 4$ and $m$ is even, and $\psi_j$ as defined in Eq.~\eqref{eq:app-per-eigenfunction}.
    Then, for every integer $T$ such that $0 \leq T \leq l$, the following equality holds:
    \begin{multline}
    \label{proofeq:coeff-exp-1}
    \braket{P_l^m, \phi_j}  = \int_{-1}^1  P_l^m(x) \e^{\iu j \pi (x+1)} \dl x
     = \sum_{t=0}^T (-1)^t \frac{1}{(\iu j \pi)^{t+1}}\left. \e^{\iu j \pi (x+1)} \diff*[t]{P_l^m(x)}{x}\right|_{-1}^1 \\
     - (-1)^T \frac{1}{(\iu j \pi)^{T+1}} \int_{-1}^{1} \e^{\iu j \pi (x+1)} \diff*[T+1]{P_l^m(x)}{x} \dl x \, .
\end{multline}
\end{lemma}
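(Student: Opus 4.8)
The plan is to prove the identity by induction on $T$, carrying out a single integration by parts at each step; this mirrors the argument already used for the sine case in Lemma~\ref{lem:app-sin-exp-part}. Throughout I would use that, for even $m$, each $P_l^m$ is a genuine polynomial of degree $l$, so that all derivatives $\diff*[t]{P_l^m(x)}{x}$ appearing below are well-defined polynomials and every boundary evaluation at $x=\pm1$ is meaningful.

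For the base case $T=0$, I would integrate $\braket{P_l^m,\phi_j}=\int_{-1}^1 P_l^m(x)\,\e^{\iu j\pi(x+1)}\dl x$ by parts once, taking $\e^{\iu j\pi(x+1)}$ as the factor to integrate (with antiderivative $\frac{1}{\iu j\pi}\e^{\iu j\pi(x+1)}$) and $P_l^m$ as the factor to differentiate. This yields exactly the single boundary term $\frac{1}{\iu j\pi}\bigl[\e^{\iu j\pi(x+1)}P_l^m(x)\bigr]_{-1}^{1}$ together with the remainder $-\frac{1}{\iu j\pi}\int_{-1}^1 \e^{\iu j\pi(x+1)}\diff*{P_l^m(x)}{x}\dl x$, which is precisely Eq.~\eqref{proofeq:coeff-exp-1} for $T=0$.

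For the inductive step, assuming the claim for some $T$ with $0\le T\le l-1$, I would apply the same single integration by parts to the leftover integral $\int_{-1}^1 \e^{\iu j\pi(x+1)}\diff*[T+1]{P_l^m(x)}{x}\dl x$. Carrying the accumulated prefactor $-\frac{(-1)^T}{(\iu j\pi)^{T+1}}$ through, the resulting boundary contribution becomes $\frac{(-1)^{T+1}}{(\iu j\pi)^{T+2}}\bigl[\e^{\iu j\pi(x+1)}\diff*[T+1]{P_l^m(x)}{x}\bigr]_{-1}^{1}$, which is exactly the $t=T+1$ summand of the finite sum, while the surviving integral becomes $-\frac{(-1)^{T+1}}{(\iu j\pi)^{T+2}}\int_{-1}^1\e^{\iu j\pi(x+1)}\diff*[T+2]{P_l^m(x)}{x}\dl x$, i.e. the new remainder at order $T+1$. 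Collecting these two pieces extends the sum to $t=T+1$ and advances the remainder accordingly, closing the induction.

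There is no real obstacle here beyond bookkeeping: the entire content is tracking the accumulating factors $\frac{(-1)^t}{(\iu j\pi)^{t+1}}$ and matching signs and powers at each stage. The hypothesis $T\le l$ simply ensures the formula is invoked while the derivatives $\diff*[T+1]{P_l^m(x)}{x}$ have not yet vanished identically (which occurs once their order exceeds $l$); this eventual vanishing is exactly what will later be exploited, in the proof of Proposition~\ref{prop:coeffs-per-exp}, to terminate the recurrence and drop the integral term altogether.
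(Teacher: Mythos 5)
Your proposal is correct and matches the paper's own proof essentially verbatim: induction on $T$ with a single integration by parts per step, the $T=0$ base case handled by one integration by parts, and the inductive step obtained by integrating the remainder integral by parts once more to produce the $t=T+1$ boundary term and the advanced remainder. (Both you and the paper implicitly assume $j\neq 0$ so that $\iu j\pi$ is invertible; the $j=0$ case is treated separately in Proposition~\ref{prop:coeffs-per-exp}.)
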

\begin{proof}
This will be proven by performing $T+1$ integrations by parts. We show the statement using induction over $T$.
    For $T = 0$, a single integration by parts yields
    \begin{equation}
    \braket{P_l^m,\phi_j}  = \int_{-1}^1 \e^{\iu j \pi (x+1)} P_l^m(x) \dl x
     = \frac{1}{\iu j \pi}\left. \e^{\iu j \pi (x+1)} P_l^m(x) \right|_{-1}^1 - \frac{1}{\iu j \pi} \int_{-1}^{1} \e^{\iu j \pi (x+1)} \diff{P_l^m(x)}{x} \dl x\, .
\end{equation}
Assuming that Eq.~\eqref{proofeq:coeff-exp-1} holds for $0 \leq T \leq l-1$, for $T+1$ we get, again using integration by parts,
\begin{multline}
        \braket{P_l^m, \phi_j}  = \sum_{t=0}^T (-1)^t \frac{1}{(\iu j \pi)^{t+1}}\left. \e^{\iu j \pi (x+1)} \diff*[t]{P_l^m(x)}{x}\right|_{-1}^1 \\
         \quad - (-1)^T \frac{1}{(\iu j \pi)^{T+1}} \int_{-1}^{1} \e^{\iu j \pi (x+1)} \diff*[T+1]{P_l^m(x)}{x} \dl x\\
         = \sum_{t=0}^T (-1)^t \frac{1}{(\iu j \pi)^{t+1}}\left. \e^{\iu j \pi (x+1)} \diff*[t]{P_l^m(x)}{x}\right|_{-1}^1  +(-1)^{T+1} \frac{1}{(\iu j \pi)^{T+2}} \left. \e^{\iu j \pi (x+1)} \diff*[T+1]{P_l^m(x)}{x}\right|_{-1}^1 \\
         \quad - (-1)^{T+1} \frac{1}{(\iu j \pi)^{T+2}}\int_{-1}^{1} \e^{\iu j \pi (x+1)} \diff*[T+2]{P_l^m(x)}{x} \dl x \, .    
\end{multline}
Combining the first line into a sum from $t=0$ to $T+1$, we get the desired equality for $T+1$, thus closing the induction.
\end{proof}
\begin{proof}[Proof of Proposition~\ref{prop:coeffs-per-exp}]
    As the associated Legendre polynomials are polynomials of degree $l$, the integral in Eq.~\eqref{proofeq:coeff-exp-1} vanishes for $T = l$:
    \begin{equation}
        \diff*[l+1]{P_l^m(x)}{x} = 0\, .
    \end{equation}
    Thus, using Lemma~\ref{lem:app-coeff-per}, the expansion coefficients read
    \begin{equation}
        \braket{P_l^m, \phi_j}  = \int_{-1}^1  P_l^m(x) \e^{\iu j \pi (x+1)} \dl x
     = \sum_{t=0}^l (-1)^t \frac{1}{(\iu j \pi)^{t+1}}\left. \e^{\iu j \pi (x+1)} \diff*[t]{P_l^m(x)}{x}\right|_{-1}^1 \, .
    \end{equation}
    Furthermore, for $t \leq \frac{m}{2}-1$, the derivative $\diff*[2t]{(1-x^2)^{m/2}}{x}$ evaluates to 0 at the boundaries again, and the first $m/2-1$ terms vanish:
    \begin{equation}
        \label{proofeq:coeff-exp-2}
        \braket{P_l^m,\phi_j} = \sum_{t=m/2}^l (-1)^t \frac{1}{(\iu jh  \pi)^{t+1}}\left. \e^{\iu j \pi (x+1)} \diff*[t]{P_l^m(x)}{x}\right|_{-1}^1 \, .
    \end{equation}
As $\left. \e^{\iu j \pi (x+1)} \right|_{x = \pm 1} = 1$ holds, which terms contribute to the sum only depends on $l$.
Recalling that we assume $m$ to be even, for $m/2\leq t\leq l$ we have
\begin{equation}
    \diff*[t]{P_l^m(x)}{x}[x=-1] = (-1)^{l+t}\diff*[t]{P_l^m(x)}{x}[x=-1]\, .
\end{equation}
In the following, we introduce $\tilde{t} \in \nnum $ to write an even $t$ as $2 \tilde{t}$ and odd $t$ as $2 \tilde{t}+1$.
We distinguish between even and odd $l$:
\begin{enumerate}[(i)]
    \item Even $l$: Then $\left.\diff*[t]{P_l^m(x)}{x}\right|_{-1}^1 = 0$ for even $t$. Thus only terms with odd $t = 2\tilde{t}+1$ contribute to the sum.
    In this case, $(-1)^t / \iu^{t+1} = (-1)^{2\tilde{t}+1}/\iu^{2\tilde{t}+2} = (-1)^{\tilde{t}}$.
    \item Odd $l$: Then $\left.\diff*[t]{P_l^m(x)}{x}\right|_{-1}^1 = 0$ for odd $t$. Thus only terms with even $t = 2 \tilde t$ contribute.
    Then, $(-1)^t / \iu^{t+1} = (-1)^{2 \tilde{t}}/\iu^{2 \tilde{t}+1} = -\iu (-1)^{\tilde{t}}$.
\end{enumerate}
These two cases result in the two cases in the proposition.
For each case, we can replace the sum over $t$ in Eq.~\eqref{proofeq:coeff-exp-2} with a sum over $\tilde{t}$ from $t_{\text{min}}$ to $t_{\text{max}}$, with
\begin{equation}
    t_{\mathrm{min}} = \left\lfloor \frac{m}{4} - \frac{1}{2} \right\rfloor, \quad t_{\mathrm{max}} = \left\lfloor \frac{l}{2}\right\rfloor \, ,
\end{equation}
concluding the proof.
\end{proof}

We finally point out that the expansion coefficients of the Neumann eigenfunctions $\varphi_j(x) = \cos\left(\frac{j \pi}{2}(x+1)\right)$ (see Example~\ref{ex:bc-eigenfunctions}) for $j \in \nnum$ could be calculated in the same way. They read
    \begin{equation}
        \braket{\varphi_j , P_l^m} = \left(\frac{2}{j \pi}\right)^2 \left( (-1)^j - (-1)^{l+1} \right) \sum_{t=t_{\text{min}}-1}^{t_{\text{max}}} (-1)^t \left( \frac{2}{j \pi} \right)^{2 t} \left. \diff*[2t+1]{P_l^m(x)}{x} \right|_{x=1} \, .
    \end{equation} 
    For $j = 0$, they are given by $\Braket{P_l^m,1}$ as in Eq.~\eqref{eq:coeff-legendre-1}~\cite{jepsen-integralassociatedlegendre-1955}.

\setlength{\emergencystretch}{1em}
\bibliographystyle{myquantum}
\bibliography{refs}

\end{document}